\newcolumntype{C}[1]{>{\centering\arraybackslash}p{#1}}
\newtheorem{theorem}{Theorem}[section]
\newtheorem{lemma}[theorem]{Lemma}
\newtheorem{proposition}[theorem]{Proposition}
\newtheorem{corollary}[theorem]{Corollary}
\newtheorem{assumption}[theorem]{Assumption}
\theoremstyle{definition}
\newtheorem{definition}[theorem]{Definition}
\newtheorem{example}[theorem]{Example}
\theoremstyle{remark}
\newtheorem{remark}[theorem]{Remark}
\numberwithin{equation}{section}
\DeclareMathOperator{\nn}{\mathbb{N}}
\DeclareMathOperator{\rr}{\mathbb{R}}
\DeclareMathOperator{\complex}{\mathbb{C}}
\DeclareMathOperator{\hamilton}{\mathbb{\textbf{H}}}
\DeclareMathOperator{\jacobi}{\textbf{J}}
\newcommand{\PI}{{\ensuremath{\mbox{\small$\Pi$\hskip.012em\llap{$\Pi$}\hskip.012em\llap{$\Pi$}\hskip.012em\llap{$\Pi$}}}}}
\begin{document}

\usetikzlibrary{decorations.pathreplacing}

%%%%%%%%%%%%%%%%%%

\title{Spectra of Perfect State Transfer Hamiltonians on Fractal-Like Graphs}

%%%%%%%%%%%%%%%%%%

\author{Gamal Mograby}
\address{Gamal Mograby, Mathematics Department, University of Connecticut, Storrs, CT 06269, USA}
%    Current address
%\curraddr{Department of Mathematics and Statistics,
%Case Western Reserve University, Cleveland, Ohio %43403}
\email{gamal.mograby@uconn.edu}

\author{Maxim Derevyagin}
\address{Maxim Derevyagin, Mathematics Department, University of Connecticut, Storrs, CT 06269, USA}
\email{maksym.derevyagin@uconn.edu}

\author{Gerald V. Dunne}
\address{Gerald V. Dunne, Mathematics \& Physics Department, University of Connecticut, Storrs, CT 06269, USA}
\email{gerald.dunne@uconn.edu}

\author{Alexander Teplyaev}
\address{Alexander Teplyaev, Mathematics \& Physics Department, University of Connecticut, Storrs, CT 06269, USA}
\email{alexander.teplyaev@uconn.edu}

%    General info
\subjclass[2010]{81Q35, 81P45, 94A40, 05C50, 28A80}

\date{\today}

\keywords{Quantum Information Theory; Functional Analysis; Spectral Theory;  Fractals}

%\thanks{This research was supported in part by the University of Connecticut Research Excellence Program, by DOE grant DE-SC$0010339$ and by NSF DMS grant $1613025$. The authors are grateful to %...........
%for interesting and helpful discussions.}

\begin{abstract}
In this paper we study the spectral features, on fractal-like graphs, of Hamiltonians which exhibit the special property of perfect quantum state transfer: the transmission of quantum states without dissipation. The essential goal is to develop the theoretical framework for understanding the interplay between perfect quantum state transfer, spectral properties, and the geometry of the underlying graph, in order to design novel protocols for applications in quantum information science. We present a new lifting and gluing construction, and use this to prove results concerning an inductive spectral structure, applicable to a wide variety of fractal-like graphs. We illustrate this construction with explicit examples for several classes of diamond graphs.

\end{abstract}

\maketitle

\tableofcontents

\section{Introduction}
The transfer of a quantum state from one location in a quantum network to another is a fundamental task in quantum information technologies, and such a transfer is called \textit{perfect} if it is realized with probability one, that is, without  dissipation. 
Perfect quantum state transfer (we write shortly PQST) has potential applications to the design of sub-protocols for quantum information and quantum computation \cite{Kay10,CVZh17,KLY17}. Depending on the application, various quantum systems are employed. Typical designs involve information carriers like photons in optical systems \cite{Guthhrlein2001ASI}, or phonons in ion traps \cite{Leibfried2003,SchmidtKaler2003}. Other promising devices are spin chains. The study of PQST on spin chains was pioneered by S.~Bose \cite{Bose03,Bose_review},  who  considered  a  $1D$  chain  of $N$ qubits coupled by a time-independent Hamiltonian. His work generated intense theoretical interest, in particular in questions concerning how to manipulate and engineer Hamiltonians such that a PQST is achieved. Manufacturing such manipulated Hamiltonians will provide pre-fabricated devices for quantum computer architectures, which takes input in one location and outputs it at another without needing to interact with the device. 
This approach is robust to noise and hence much less prone to errors. A number of one dimensional cases, where PQST can be achieved, have been found in some $XX$ chains with inhomogeneous couplings, see 
\cite[and references therein]{Kay10,Bose_review,christandl2004perfect,burgarth2005conclusive,burgarth2005perfect,karbach2005spin,Opperman10,Opperman12,Godsil,Godsil08,Godsil12,VZ12,qin2013high}.  
Also, it was shown that in some cases it was possible to achieve almost perfect quantum state transfer, which happens under much less restrictive conditions than a PQST \cite{BACVV10, BBVB11,VZ12c}.
Recently there has been active interest to generalize these results to graphs with potentials and to graphs that are not one dimensional \cite{P-RKay11, KLY17,pretty,KMPPZ17,KRA12,VZh12}. 
These works illustrate the fact  that PQST is a rare phenomenon, for which the construction of explicit examples remains rather non-trivial. Intending to investigate the rich interplay between quantum state transfer and geometries beyond one-dimensional graphs, we showed in a previous paper \cite{2019arXiv190908668D} that PQST is possible on the large and diverse class of fractal-type diamond graphs. A significant interest in these graphs lies in the fact that their limit spaces constitute a family of fractals, which present different geometrical properties, including a wide range of Hausdorff and spectral dimensions. These graphs have provided an important collection of structures with 
interesting physical and mathematical properties 
and a broad variety of geometries, see \cite{MT,ADT09,HK,NT,AR18,AR19,T08,MT2,brzoska2017spectra}. The structure of these graphs is such that they combine spectral properties of Dyson hierarchical models and transport properties of one dimensional chains. The methods that we use are discretized versions of the methods recently developed in 
\cite{AR18,AR19} (see also \cite{AHTT18,ST19}), which provides a construction of Green's functions for diamond fractals. 

In this paper we generalize the construction in \cite{2019arXiv190908668D} and show that it works for any graph possessing a transversal decomposition (see assumption \ref{graphAssumptions}). More precisely, on such a graph, a Hamiltonian based on nearest-neighbor coupling and with a certain transversal projective structure (see assumptions \ref{hamiltonAssumptions}) can be engineered to admit a PQST. One of the new features we present in this paper is that we can transport quantum state from multiple sites on such graphs to another set of such sites. For more details, see Theorem \ref{PQSTtheorem}.
The primary goal of this paper is to demonstrate new spectral properties of the graphs on which a PQST can be achieved. To this end we firstly need to understand the spectrum of the Hamiltonians we construct.
Advantageous settings to accomplish this task
are projective limit-type spaces. Analysis on projective limit spaces is an active area of current research \cite{Cheeger2013,Cheeger2013InverseLS}.
Barlow and Evans used projective limits to produce a new class of state spaces for Markov
processes \cite{BE}. The spectra of  Laplacians on Barlow-Evans type projective limit spaces were studied in \cite{ST19}, see also \cite{St1,St2}.
We proceed in this paper in the same spirit but dealing with Hamiltonians instead of Laplacians. To this end, we provide a discretized version of a sequence of projective limit spaces \cite[Definition 2.1, page 3]{ST19}. 
By doing so, we are able to
construct a sequence of graphs $\{G_i\}_{i \geq 0}$ and equip each
$G_i$ with a Hamiltonian $\hamilton_i$ such that PQST can be achieved (under some additional assumptions).   
Next, we provide a complete description of the spectrum of $\hamilton_i$ and for the convenience of the reader, we state the result in the following theorem, see the proof of Theorem \ref{key-thm2} for further details. 
\begin{theorem}
Let $i$ be a nonnegative integer and let $\hamilton_i$ be the Hamiltonian lifted from a Jacobi matrix $\jacobi$ on $G_0$ to $G_i$.
Then there exists a collection $\jacobi=\jacobi_0$,  $\jacobi_1$, \dots , $\jacobi_m$  of submatrices of $\jacobi$  such that 
\begin{equation*}
 \sigma(\hamilton_i) =\sigma(\jacobi) \cup\sigma(\jacobi_1)\cup \dots \cup\sigma(\jacobi_m).
\end{equation*}
 In particular,  if the Jacobi matrix $\jacobi$ corresponds to the case of PQST in the 1D chain $G_0$ then $\hamilton_i$ realizes PQST on $G_i$. Therefore, the above-given formula describes spectra of Hamiltonians realizing PQST on $G_i$.
\end{theorem}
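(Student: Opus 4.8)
The plan is to diagonalize $\hamilton_i$ by adapting a unitary change of basis to the transversal fiber structure guaranteed by the transversal decomposition of Assumption~\ref{graphAssumptions}. Writing $\ell^2(G_i)$ as an orthogonal sum of fibers sitting over the vertices of the base chain $G_0$, I would split this space into the \emph{symmetric} subspace $\mathcal{H}_{\mathrm{sym}}$ of functions that are constant on each transversal fiber, and its orthogonal complement $\mathcal{H}_{\mathrm{sym}}^{\perp}$. The key structural input is the projective coupling of Assumptions~\ref{hamiltonAssumptions}: because $\hamilton_i$ is built from nearest-neighbor terms together with transversal projections, it commutes with the averaging projection onto $\mathcal{H}_{\mathrm{sym}}$, so that $\hamilton_i = \hamilton_i|_{\mathcal{H}_{\mathrm{sym}}} \oplus \hamilton_i|_{\mathcal{H}_{\mathrm{sym}}^{\perp}}$ and $\sigma(\hamilton_i)$ is the union of the two restricted spectra. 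Unlike a typical spectral-decimation argument, the projective structure should make this an \emph{exact} block-diagonalization rather than one mediated by a rational spectral map.

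First I would analyze the symmetric part. On $\mathcal{H}_{\mathrm{sym}}$ each transversal fiber is collapsed to a single effective degree of freedom, and the surviving longitudinal hopping terms reproduce, after the normalization fixed by the lifting, exactly the Jacobi matrix $\jacobi = \jacobi_0$. Hence $\hamilton_i|_{\mathcal{H}_{\mathrm{sym}}}$ is unitarily equivalent to $\jacobi$, contributing the summand $\sigma(\jacobi)$.

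Next I would treat $\mathcal{H}_{\mathrm{sym}}^{\perp}$ by expanding each fiber in a basis of transversal modes orthogonal to the constants. For any such mode the amplitude summed over a fiber vanishes, so its coupling to the neighboring gluing vertices of the chain cancels; the effective operator therefore decouples into blocks, each of which acts as a \emph{principal submatrix} of $\jacobi$ obtained by restricting to a consecutive sub-chain. These blocks are the matrices $\jacobi_1, \dots, \jacobi_m$. I would organize the bookkeeping by induction on $i$: the transversal modes introduced at level $i$ contribute a fixed finite list of such submatrices (only their multiplicities growing with $i$), while the symmetric part of the level-$i$ fibers reproduces $\hamilton_{i-1}$, to which the induction hypothesis applies. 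Assembling the symmetric contribution $\sigma(\jacobi)$ with all the orthogonal contributions $\sigma(\jacobi_1), \dots, \sigma(\jacobi_m)$ yields the claimed union.

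Finally, for the PQST assertion I would invoke the characterization of Theorem~\ref{PQSTtheorem}. The symmetric eigenvectors are precisely the lifts of the eigenvectors of $\jacobi$ and carry the full amplitude at the distinguished transfer sites, so they inherit the eigenvalue ordering and phase relations that make $\jacobi$ a PQST Jacobi matrix; the orthogonal modes vanish at those sites and hence cannot obstruct the transfer. The main obstacle is the precise identification of the submatrices $\jacobi_k$ together with their multiplicities --- that is, verifying that the orthogonal transversal modes truncate $\jacobi$ to exactly these principal submatrices, that the mode decomposition is complete so that no eigenvalue is missed or spuriously added, and that the induction on $i$ correctly tracks every block. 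Establishing this combinatorial dictionary between transversal modes and sub-chains is where the bulk of the work lies.
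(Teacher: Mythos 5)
Your proposal is correct and follows essentially the same route as the paper's own proof: your symmetric block $\mathcal{H}_{\mathrm{sym}}$ on which $\hamilton_i$ is unitarily equivalent to $\jacobi$ is exactly the content of Lemma \ref{lemmaLifteigenvector} (via the isometry $P_i^{\ast}$ and the invariance Lemma \ref{invariantSubspaceLemma}), your zero-sum transversal modes seeing Dirichlet principal submatrices of $\jacobi$ are precisely the Lifting-\&-Gluing Lemma \ref{lemma: mainLemmaLifttoBranches}, and your induction on $i$ in which the level-$i$ symmetric fibers reproduce $\hamilton_{i-1}$ is Lemma \ref{genlemmaLiftforInduct} combined with the paper's inductive argument for Theorem \ref{key-thm2}, with the PQST claim handled in both cases by Theorem \ref{PQSTtheorem}. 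The only presentational difference is that the paper settles completeness by explicitly constructing the antisymmetric two-branch eigenvectors and counting them (as in Corollary \ref{cor: LiftingAndBranchingCorollary}) rather than by your exact block-diagonalization of $(L^2_{rad})^{\perp}$ into zero-sum mode sectors, and these two bookkeeping schemes are equivalent.
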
 
The Jacobi matrices
$\jacobi_0, \dots , \jacobi_m$ are easily determined by the construction scheme that generates $G_i$ from $G_{0}$. As we will see, the Jacobi matrices $\jacobi_0, \dots , \jacobi_m$ reflect geometrical information of the graph $G_i$. Moreover, this result provides a straightforward algorithm to determine the spectrum $\sigma(\hamilton_i)$.
In Section \ref{sec:twoExamples}, we demonstrate how to apply this result on two models of diamond-type graphs  and show that in this case the corresponding PSQT Hamiltonians have multiple eigenvalues, which to the best of our knowledge seems to be unnoticed until now.
These models are a particular case of the Berker lattice construction \cite{Berker_1979} and have been initially the focus of considerable work in statistical mechanics (see, for example \cite{1983JSP....33..559D,PhysRevB.28.218, Collet1985}).

Our work is part of a long term study of  mathematical physics on fractals and self-similar graphs 
\cite{be2, be1, v1, v2, ADT09, ADT10, ABDTV12, Dunne12, Akk, AkkermansDunneLevy, hanoi, ACDRT, HM19, DDMT2}, {in which novel features of quantum processes on fractals can be associated with the unusual spectral and geometric properties of fractals compared to regular graphs and smooth manifolds.}

The paper is organized as follows. 
     {Section 2 starts with the definition of a transversal layer, which is one of the fundamental concepts for our construction (see Remark \ref{TrDecInt} for some intuition). Then, we build up the technique of constructing Hamiltonians $\hamilton$ that realize PQST and as a matter of fact enlarge the class of such Hamiltonians. %satisfying the assumptions \ref{hamiltonAssumptions} on graphs satisfying the assumptions \ref{graphAssumptions}.
Next, Section 3 gives a partial description of spectra of the Hamiltonians $\hamilton$ by providing some generic spectral statements. After that, Section 4 defines a discrete version of a projective limit space, on which a more precise spectral description is given in Theorem \ref{key-thm2}. Section 5 demonstrates how to apply  Theorem \ref{key-thm2} to two models of diamond-type graphs. }
Section 6 discusses the results in further geometrical structures.

\section{Perfect quantum state transfer on graphs}
\label{sec: Perfect quantum state transfer on graphs}
In this section, we extend the study of PQST on diamond fractal graphs \cite{2019arXiv190908668D} to a more general class of graphs. Let $G = (V(G),E(G))$ be a finite connected graph with a vertex set $V(G)$ and an edge set $E(G)$. We equip $G$ with the geodesic metric $d: V(G) \times V(G) \to \rr$, i.e. for $x,y \in V(G)$, $ \ d(x,y)$ gives the number of edges in a shortest path connecting $x$ and $y$. Suppose $A \subset V(G)$ is a non-empty set of vertices. The distance of $A$ to a vertex $x \in V(G)$ is defined as
\begin{equation*}
    d(x;A) = min \lbrace d(x,y) : y \in A \rbrace.
\end{equation*}
The following definition generalizes the concept of the intrinsically transversal layers introduced in \cite{2019arXiv190908668D}. This concept can be found in \cite[page 76]{MR2316893} under the name stratification and plays a crucial role in the quantum decomposition of a graph adjacency matrix.
\begin{definition}
\label{def: mappingForTransversalDecomp}
Let $A \subset V(G)$, $A \neq \O$ and $n \in \nn$. An $n$-th transversal layer with respect to $A$ is defined as 
\begin{equation*}
    \PI_A^{-1}(n) = \lbrace x \in V(G) : \  d(x;A)=n \rbrace,
\end{equation*}
where $ \PI_A: V(G) \to \nn$, $ \PI_A(x)=d(x;A)$. A transversal decomposition of $G$ with respect to $A$ is defined as 
$ V(G) = \bigcup_{n}  \PI_A^{-1}(n) $. Note that $ \PI_A^{-1}(0)=A$.
\end{definition}
\begin{remark} \label{TrDecInt} The concept of transversal layer is crucial for our construction since the reference set $A$ plays the role of sender sites and the last nonempty transversal layer is the set of receiver sites. In other words, we perfectly transfer data from $A$ to the farthest layer. Moreover, in a sense each layer acts as a site in the $1D$ case modulo the lifting procedure described in this section. 
\end{remark}
     {A quantum state on $G$  is represented by a complex-valued function on the vertices $V(G)$ and such a function is also referred to as a wave function.} The following Hilbert space will be used as a domain of the constructed Hamiltonian, which realizes perfect quantum state transfer on $G$.
\begin{definition}
\label{def:HilbertOnG}
Let $A \subset V(G)$, $A \neq \O$.      {For two wave functions $\psi$ and $\varphi$ define the inner product
\begin{equation}
\label{key-inner}
	\bra{\psi}\ket{\varphi}_{A} = \sum_{x \in V} \psi(x) \overline{\varphi(x)} \mu_A(x),
\end{equation}
where the weights are given by $\mu_A(x)=\frac{1}{|\PI_A^{-1}(n)|}$ with $n = \PI_A(x)$ and $|\PI_A^{-1}(n)|$ denotes the number of vertices in the transversal layer $\PI_A^{-1}(n)$ that contains $x$. Clearly, the space $L^2(G)=\{\psi \ | \ \psi:V(G) \to \complex \}$ of quantum states on $G$ equipped with this inner product is a Hilbert space.}
\end{definition}
     {Another concept that our construction relies on is a radial function.  Within the above-given settings, a wave function is said to be radial with respect to $A$ if its values depend only on the distance from $A$.}
\begin{definition}
Let $ V(G) = \bigcup_{n=0}^N  \PI_A^{-1}(n) $  be a transversal decomposition of $G$ with respect to $A$, for some $A \subset V(G)$, $A \neq \O$. The subspace of radial functions with respect to $A$ is defined by
 \begin{equation*}
 L^2_{rad}(G)=\{\psi \in L^2(G) \ \ | \  \psi(x)=\psi(y) \text{ if } \PI_A(x)=\PI_A(y) \}.
 \end{equation*}
The projection of $L^2(G)$ onto $L^2_{rad}(G)$ is denoted by $Proj: L^2(G) \to L^2_{rad}(G)$. 
\end{definition}
The advantage of the transversal decomposition $ V(G) = \bigcup_{n=0}^N  \PI_A^{-1}(n) $ is that it induces an auxiliary 1D chain (path graph) $D_{N} = (V(D_{N}), E(D_{N}))$ with a set of vertices  $V(D_{N})=\{0,\dots, N \}$ and a set of edges $E(D_{N})=\{(n-1,n) : \ 1 \leq n \leq N \}$. A transversal layer $\PI_A^{-1}(n)$ is identified with the vertex $n$ in the sense that the vertices $n-1$ and $n$ are defined to be adjacent in the 1D chain if and only if their corresponding transversal layers are adjacent.
To reduce the perfect quantum state transfer problem from the graph $G$ to the auxiliary 1D chain $D_{N}$, we introduce the following Hilbert space $L^2(D_{N})=\{\psi \ | \ \psi: V(D_{N}) \to \complex \}$
equipped with the standard inner product
\begin{equation}
	\label{standard inner}
	\bra{\psi}\ket{\varphi} = \sum_{n=0}^N \psi(n) \overline{\varphi(n)}.
\end{equation}
Moreover we project a wave function in $L^2(G)$ to a wave function in $L^2(D_{N})$ through averaging its values on the transversal layers:
\begin{eqnarray*}
P: L^2(G)  \to  L^2(D_{N}), \quad
\psi \mapsto   P\psi(n)=\frac{1}{|\PI_A^{-1}(n)|}\sum_{x \in \PI_A^{-1}(n)}  \psi(x).
\end{eqnarray*}
\begin{lemma}
Let $P^{\ast} $ be the adjoint operator of $P$, i.e.  $\bra{P\psi}\ket{\varphi} = \bra{\psi}\ket{P^\ast \varphi}_{A}$ for $\psi \in L^2(G)$ and $\varphi \in L^2(D_{N})$. Then $P^{\ast}$ is given by
\begin{eqnarray*}
P^{\ast}: L^2(D_{N})  \to  L^2(G), \quad
\varphi  \mapsto P^{\ast} \varphi (x) = \varphi(\PI_A(x)).
\end{eqnarray*}
\end{lemma}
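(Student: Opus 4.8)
The plan is to verify the defining relation of the adjoint by a direct finite computation and then invoke uniqueness of the adjoint; since $G$ is a finite graph, $L^2(G)$ and $L^2(D_N)$ are finite-dimensional and $P^\ast$ exists and is unique, so no functional-analytic subtleties arise. Concretely, I would introduce the candidate operator $Q\varphi(x) := \varphi(\PI_A(x))$ and show that it satisfies $\bra{P\psi}\ket{\varphi} = \bra{\psi}\ket{Q\varphi}_A$ for all $\psi \in L^2(G)$ and $\varphi \in L^2(D_N)$; matching this against the definition of $P^\ast$ forces $P^\ast = Q$.

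First I would expand the left-hand pairing using the standard inner product \eqref{standard inner} on $L^2(D_N)$ together with the definition of $P$, obtaining
\begin{equation*}
\bra{P\psi}\ket{\varphi} = \sum_{n=0}^N \left( \frac{1}{|\PI_A^{-1}(n)|} \sum_{x \in \PI_A^{-1}(n)} \psi(x) \right) \overline{\varphi(n)}.
\end{equation*}
The key step is then to observe that the transversal layers $\{\PI_A^{-1}(n)\}_{n=0}^N$ partition $V(G)$, which allows me to collapse the iterated sum $\sum_n \sum_{x \in \PI_A^{-1}(n)}$ into a single sum $\sum_{x \in V(G)}$, replacing the index $n$ by $\PI_A(x)$ throughout. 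After this reindexing the scalar factor attached to each $x$ is precisely $\frac{1}{|\PI_A^{-1}(\PI_A(x))|}$, which by definition is the weight $\mu_A(x)$, and $\overline{\varphi(n)}$ becomes $\overline{\varphi(\PI_A(x))}$.

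At that point the expression reads $\sum_{x \in V(G)} \psi(x)\,\overline{\varphi(\PI_A(x))}\,\mu_A(x)$, which is exactly $\bra{\psi}\ket{Q\varphi}_A$ by the definition \eqref{key-inner} of the weighted inner product on $L^2(G)$. Comparing with $\bra{\psi}\ket{P^\ast\varphi}_A$ and using that the adjoint is uniquely determined by the defining pairing, I conclude $P^\ast\varphi(x) = \varphi(\PI_A(x))$. There is no genuine obstacle here — the computation is elementary because all sums are finite; the only point requiring care is the bookkeeping in the reindexing step, namely correctly identifying the averaging factor $1/|\PI_A^{-1}(n)|$ with the weight $\mu_A(x)$ so that it is absorbed into the weighted inner product rather than left over. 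I would also note the structural interpretation that makes the result transparent: $P$ averages a wave function over each layer, while $P^\ast$ extends a function on the chain to the radial function on $G$ that is constant along each transversal layer, so that $P^\ast\varphi \in L^2_{rad}(G)$, consistent with $P^\ast$ being the natural lifting dual to the averaging $P$.
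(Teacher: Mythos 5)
Your proof is correct and follows essentially the same route as the paper's: the paper likewise expands $\bra{P\psi}\ket{\varphi}$ via the definition of $P$, collapses the double sum over layers into a single sum over $V(G)$ with the factor $1/|\PI_A^{-1}(n)|$ absorbed as the weight $\mu_A(x)$, and reads off the result; your explicit naming of the candidate operator $Q$ and appeal to uniqueness of the adjoint merely make the paper's implicit final step explicit.
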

\begin{proof}
A simple calculation shows that
\begin{eqnarray*}
\bra{P\psi}\ket{\varphi} = \sum_{n=0}^N  P\psi(n) \overline{\varphi(n)} 
=  \sum_{x \in V} \psi(x) \overline{\varphi(\PI_A(x))} \mu_A(x) = \bra{\psi}\ket{P^\ast \varphi}_{A}.
\end{eqnarray*}
\end{proof}
We will use the following lemma later.
\begin{lemma}
\label{usefulProp1}
Let $Id_{D_{N}}: L^2(D_{N}) \to L^2(D_{N})$ be the identity operator on $L^2(D_{N})$. Then
\begin{enumerate}
	\item The range of $P^{\ast}$ is $L^2_{rad}(G)$.
	\item $ Ker P = (L^2_{rad}(G))^{\bot}$.
	\item $P P^{\ast} =Id_{D_{N}}$.
	\item $P^{\ast} P = Proj $.
\end{enumerate}
\end{lemma}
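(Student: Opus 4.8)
The plan is to treat the four assertions not as independent facts but as a chain: I would establish the two algebraic identities (3) and (4) first and deduce the structural statements (1) and (2) from them, so that essentially no computation is duplicated. I would begin with (3), which is the most direct. Applying the definitions,
\[
(PP^{\ast}\varphi)(n)=\frac{1}{|\PI_A^{-1}(n)|}\sum_{x \in \PI_A^{-1}(n)} \varphi(\PI_A(x)),
\]
and since every $x$ in the layer $\PI_A^{-1}(n)$ satisfies $\PI_A(x)=n$, each summand is the constant $\varphi(n)$, so the $|\PI_A^{-1}(n)|$ terms cancel the prefactor and we are left with $\varphi(n)$. Hence $PP^{\ast}=Id_{D_{N}}$.

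Next I would verify (1). Because $P^{\ast}\varphi(x)=\varphi(\PI_A(x))$ depends on $x$ only through $\PI_A(x)$, every function in the range of $P^{\ast}$ is constant on each transversal layer, so $\mathrm{Ran}(P^{\ast})\subseteq L^2_{rad}(G)$. Conversely, a radial $\psi$ takes a common value $c_n$ on each layer $\PI_A^{-1}(n)$; setting $\varphi(n)=c_n$ gives $P^{\ast}\varphi=\psi$, and the inclusion becomes an equality.

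For (4) I would show that $P^{\ast}P$ is the orthogonal projection onto $L^2_{rad}(G)$ by checking the three defining properties. Idempotency follows from (3): $(P^{\ast}P)^2=P^{\ast}(PP^{\ast})P=P^{\ast}P$. Self-adjointness with respect to the weighted inner product $\bra{\cdot}\ket{\cdot}_{A}$ follows from $(P^{\ast}P)^{\ast}=P^{\ast}(P^{\ast})^{\ast}=P^{\ast}P$, using $(P^{\ast})^{\ast}=P$. Finally, $\mathrm{Ran}(P^{\ast}P)\subseteq \mathrm{Ran}(P^{\ast})=L^2_{rad}(G)$ by (1), while $P^{\ast}P$ fixes every radial function (there $P$ recovers the common layer values and $P^{\ast}$ reconstitutes them), so the range is exactly $L^2_{rad}(G)$. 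A self-adjoint idempotent is the orthogonal projection onto its range, whence $P^{\ast}P=Proj$.

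Assertion (2) then drops out formally. If $P\psi=0$ then $P^{\ast}P\psi=0$, i.e. $Proj(\psi)=0$, so $\psi\in (L^2_{rad}(G))^{\bot}$; conversely, if $\psi\in (L^2_{rad}(G))^{\bot}$ then $P^{\ast}P\psi=Proj(\psi)=0$, and applying $P$ together with (3) gives $P\psi=PP^{\ast}P\psi=0$. The only point that demands care is the middle step of (4): one must confirm that $P^{\ast}P$ is self-adjoint relative to the \emph{weighted} inner product on $L^2(G)$, not the standard one, and that its range is all of $L^2_{rad}(G)$ rather than merely a subspace. Both are immediate once the defining adjoint relation $\bra{P\psi}\ket{\varphi}=\bra{\psi}\ket{P^{\ast}\varphi}_{A}$ and the identity (3) are in hand, so I expect no genuine obstacle; the work lies entirely in arranging the deductions so that (1) and (3) carry the computational load and (2) and (4) follow by bookkeeping.
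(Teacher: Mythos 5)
Your proof is correct, and it takes a genuinely different logical route from the paper's. The paper proves (1) and (3) by definition, then obtains (2) immediately from the standard adjoint identity $\mathrm{Ker}\,P=(\mathrm{Ran}\,P^{\ast})^{\bot}$ (combined with (1)), and only then proves (4) by decomposing $\psi = Proj\,\psi+\psi_{rad}^{\bot}$ and using (2) to annihilate the orthogonal component. You reverse the dependency: you prove (4) first, by checking that $P^{\ast}P$ is idempotent (via (3)), self-adjoint for the weighted inner product (via $(P^{\ast})^{\ast}=P$), and has range exactly $L^2_{rad}(G)$ (via (1) together with the fact that $P^{\ast}P$ fixes radial functions), then invoke the characterization of orthogonal projections as self-adjoint idempotents; (2) then falls out formally from (3) and (4). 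Both arguments share the same computational core --- the direct verifications of (1) and (3) and the observation that $P^{\ast}P$ acts as the identity on radial functions --- and each gets one of the two remaining items essentially for free from a textbook Hilbert-space fact: the paper gets (2) free and pays a little for (4), while you pay for (4) and get (2) free. The paper's version is terser; yours has the minor advantage of making explicit why $P^{\ast}P$ is the \emph{orthogonal} projection with respect to $\bra{\cdot}\ket{\cdot}_{A}$ (self-adjointness plus idempotency), a point the paper leaves implicit in its definition of $Proj$.
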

\begin{proof}
(1) and (3) follow by definition. (2) Use $ Ker P = (Range \ P^{\ast})^{\bot}$. (4) Decompose  $\psi=Proj \ \psi + \psi_{rad}^{\bot}$, i.e. $Proj \ \psi \in L^2_{rad}(G)$ and $\psi_{rad}^{\bot} \in (L^2_{rad}(G))^{\bot}$. By (2) it follows $P^{\ast} P \psi= P^{\ast} P \  Proj \ \psi =  Proj \ \psi$, where the last equality holds by the definitions of $P$ and $P^{\ast}$.
\end{proof}

     {In what follows, we will need the following mappings.}
\begin{definition}
Let $V(G)=\PI_A^{-1}(0)\cup\PI_A^{-1}(1) \ldots \cup \PI_A^{-1}(N)$ be a transversal decomposition of $G$ with respect to $A$ for some $A \subset V(G)$, $A \neq \O$ and $N \in \nn$.
We define the following mappings:
\begin{enumerate}
    \item The left-hand side degree of a vertex $\mathbf{deg}_{-}$:
    \begin{equation*}
        \mathbf{deg}_{-}: \PI_A^{-1}(1) \ldots \cup \PI_A^{-1}(N) \to \nn.
    \end{equation*}
     Let $x \in \PI_A^{-1}(n)$ for some $n \in \{1,\ldots, N \}$. The mapping $\mathbf{deg}_{-}(x)$ assigns the vertex $x$ the number of edges that connect $x$ to vertices in $\PI_A^{-1}(n-1)$.
    \item The right-hand side degree of a vertex $\mathbf{deg}_{+}$:
    \begin{equation*}
        \mathbf{deg}_{+}: \PI_A^{-1}(0) \ldots \cup \PI_A^{-1}(N-1) \to \nn.
    \end{equation*}    
    Let $x \in \PI_A^{-1}(n)$ for some $n \in \{0,\ldots, N-1 \}$. The mapping $\mathbf{deg}_{+}(x)$  assigns the vertex $x$ the number of edges that connect $x$ to vertices in  $\PI_A^{-1}(n+1)$.
        \item  The same transversal layer degree of a vertex $\mathbf{deg}_{0}$:
    \begin{equation*}
        \mathbf{deg}_{0}: \PI_A^{-1}(0) \ldots \cup \PI_A^{-1}(N) \to \nn.
    \end{equation*}
     Let $x \in \PI_A^{-1}(n)$ for some $n \in \{0,\ldots, N \}$. The mapping $\mathbf{deg}_{0}(x)$ assigns the vertex $x$ the number of edges that connect $x$ to vertices in the same transversal layer $\PI_A^{-1}(n)$.
\end{enumerate}
\end{definition}

A Hamiltonian on $G$ is a self-adjoint operator 
$\hamilton$ acting on $L^2(G)$. 
It was observed in \cite{2019arXiv190908668D} that constructing a Hamiltonian, which is not only adapted to the graph structure but also to the given transversal decomposition of the diamond-type graphs, leads indeed to a Hamiltonian that realizes a perfect quantum state transfer. Motivated by these observations, we impose the following assumptions on $\hamilton$:
\begin{assumption}[Assumptions on the Hamiltonian]
\label{hamiltonAssumptions}
The self-adjoint operator 
$\hamilton$ acting on $L^2(G)$ is assumed to satisfy the following properties:
\begin{enumerate}
\item {{\it Nearest-neighbor coupling}:} for $x,y \in V(G)$, let $\bra{x}\hamilton\ket{ y}_{A}=0$
if $x$ and $y$ are not connected by an edge, i.e., the transition matrix element from the quantum state $\ket{y}$ to $\ket{x}$ is zero if the vertices $y$ and $x$ are not adjacent in $G$.
\item 
{{\it Radial coupling}:} for $x_1,y_1,x_2,y_2 \in V(G)$ such that both $x_1,y_1$ and $x_2,y_2 $ are adjacent, we set
\begin{align*}
\bra{x_1}\hamilton\ket{y_1}_{A}=\bra{x_2}\hamilton\ket{y_2}_{A} \\  \text{if   } \ \PI_A(x_1)=\PI_A(x_2)  \  \text{ and } \  \PI_A(y_1)=\PI_A(y_2),
\end{align*}
i.e.,   the transition matrix elements are compatible with the transversal decomposition of $F$.
\item For $x,y \in \PI^{-1}_A(n)$,  $n \in \{0, \dots , N \}$ we assume 
 $\bra{x}\hamilton\ket{ x}_{A}=\bra{y}\hamilton\ket{ y}_{A}$. Moreover, if $x,y \in \PI^{-1}_A(n)$ are adjacent, then we assume 
 $\bra{x}\hamilton\ket{ y}_{A}=\bra{x}\hamilton\ket{ x}_{A}$.
\end{enumerate}
\end{assumption}
\begin{remark}
For a vertex $x \in V(G)$, the quantum state $\ket{x}$ corresponds to the one-excitation state at the vertex $x$, i.e.
\begin{equation*}
\ket{x} =
  \begin{cases}
    1       & \quad \text{on vertex } x\\
    0  & \quad \text{on } V(G) \backslash \{x\} .
  \end{cases}
\end{equation*}
\end{remark}
A Hamiltonian $\hamilton$ on $G$ is related to an operator on the 1D chain $D_{N}$ by
\begin{equation}
\label{inducedJ}
\mathbf{J}  =  P \hamilton P^{\ast},
\end{equation}
which acts on $L^2(D_{N})$. Similarly, we denote the one-excitation states in $L^2(D_{N})$ by $\ket{n} = (0,\dots, 1, \dots, 0)$ where the $1$ occupies the $n$-th position. The following proposition gives a simple criterion for determining whether the constructed Hamiltonian $\hamilton$ is self-adjoint or not.
\begin{proposition}
Let $ V(G) = \bigcup_{n=0}^N  \PI_A^{-1}(n) $  be a transversal decomposition of $G$ with respect to $A$, for some $A \subset V(G)$, $A \neq \O$ and let assumptions \ref{hamiltonAssumptions} hold. Then,
the Hamiltonian $\hamilton$ is self-adjoint with respect to the inner product \eqref{key-inner} if and only if $\mathbf{J}$ is self-adjoint with respect to the inner product \eqref{standard inner}.
\end{proposition}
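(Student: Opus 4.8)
The plan is to establish the equivalence by exploiting the intertwining relations between $\hamilton$ and $\jacobi = P\hamilton P^{\ast}$ that follow from Lemma \ref{usefulProp1}. The key observation is that, under assumptions \ref{hamiltonAssumptions}, the Hamiltonian $\hamilton$ preserves the radial subspace $L^2_{rad}(G)$, and on this subspace it is conjugate to $\jacobi$ via the partial isometry $P$. I would first prove this invariance claim carefully, since it is the structural heart of the argument: the nearest-neighbor and radial coupling conditions guarantee that $(\hamilton\psi)(x)$ depends only on $\PI_A(x)$ whenever $\psi$ is radial, so that $\hamilton$ maps $L^2_{rad}(G)$ into itself. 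I expect this invariance step to be the main obstacle, as it requires tracking how the matrix elements sum over adjacent layers and verifying that the averaging is consistent across vertices in the same transversal layer.

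Granting that invariance, I would record the two intertwining identities $P\hamilton P^{\ast} = \jacobi$ (which is the definition \eqref{inducedJ}) and, more importantly, $\hamilton P^{\ast} = P^{\ast}\jacobi$. The latter is what lets the self-adjointness transfer: for $\varphi \in L^2(D_N)$, one applies $\hamilton$ to the radial function $P^{\ast}\varphi$ and uses the layer-averaging to identify the result with $P^{\ast}(\jacobi\varphi)$. From $\hamilton P^{\ast} = P^{\ast}\jacobi$ and $PP^{\ast} = Id_{D_N}$ (part (3) of Lemma \ref{usefulProp1}), one immediately recovers $\jacobi = P\hamilton P^{\ast}$.

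For the forward direction, suppose $\hamilton$ is self-adjoint on $L^2(G)$. Then for $\varphi, \chi \in L^2(D_N)$,
\begin{equation*}
\bra{\jacobi\varphi}\ket{\chi} = \bra{P\hamilton P^{\ast}\varphi}\ket{\chi} = \bra{\hamilton P^{\ast}\varphi}\ket{P^{\ast}\chi}_A = \bra{P^{\ast}\varphi}\ket{\hamilton P^{\ast}\chi}_A = \bra{\varphi}\ket{P\hamilton P^{\ast}\chi} = \bra{\varphi}\ket{\jacobi\chi},
\end{equation*}
where the defining property of the adjoint $P^{\ast}$ (namely $\bra{P\eta}\ket{\zeta} = \bra{\eta}\ket{P^{\ast}\zeta}_A$) is used at the second and fourth equalities and the self-adjointness of $\hamilton$ at the middle one. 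This shows $\jacobi$ is self-adjoint with respect to \eqref{standard inner}.

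For the converse, suppose $\jacobi$ is self-adjoint. Since $\hamilton$ leaves $L^2_{rad}(G)$ invariant and is already assumed self-adjoint off the diagonal only through its matrix-element symmetry, the decomposition $L^2(G) = L^2_{rad}(G) \oplus (L^2_{rad}(G))^{\bot}$ reduces $\hamilton$. On $L^2_{rad}(G)$ the operator $\hamilton$ is unitarily equivalent to $\jacobi$ via $P^{\ast}$ (using $P^{\ast}P = Proj$ from part (4)), so it is self-adjoint there; on the orthogonal complement I would check that assumptions \ref{hamiltonAssumptions} force the matrix elements to be symmetric, giving self-adjointness on all of $L^2(G)$. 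The only genuinely delicate point, as flagged above, is the invariance of $L^2_{rad}(G)$ under $\hamilton$, and I would isolate this as a preliminary lemma before assembling the equivalence.
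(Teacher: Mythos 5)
Your forward implication is sound: the chain of equalities deriving the symmetry of $\jacobi = P\hamilton P^{\ast}$ from that of $\hamilton$ uses only the defining property of the adjoint $P^{\ast}$, and needs no invariance statement. The converse, however, contains genuine gaps. First, the invariance of $L^2_{rad}(G)$ under $\hamilton$ (equivalently, the intertwining $\hamilton P^{\ast}=P^{\ast}\jacobi$) is \emph{not} available under the hypotheses of this proposition: in the paper that invariance is Lemma \ref{invariantSubspaceLemma}, which requires the graph assumptions \ref{graphAssumptions} in addition to \ref{hamiltonAssumptions}, and the proposition is stated before \ref{graphAssumptions} is imposed. Indeed, for a radial $\psi$ one finds $(\hamilton\psi)(x)$ is a combination of $\psi(n\pm1),\psi(n)$ with coefficients proportional to $\mathbf{deg}_{\pm}(x)$ and $\mathbf{deg}_{0}(x)$, so if these degrees are not constant on layers the image is simply not radial. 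Second, even granting invariance, your reduction argument is circular: invariance of $L^2_{rad}(G)$ under $\hamilton$ yields invariance of $(L^2_{rad}(G))^{\bot}$ under the \emph{adjoint} of $\hamilton$, not under $\hamilton$ itself, so the decomposition reduces $\hamilton$ only after you already know $\hamilton$ is self-adjoint, which is what you are trying to prove.

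Third, and most substantively, the step you defer (``check that assumptions \ref{hamiltonAssumptions} force the matrix elements to be symmetric'' on the complement) is false as stated: the assumptions impose no relation between $\bra{x}\hamilton\ket{y}_{A}$ and $\bra{y}\hamilton\ket{x}_{A}$; if they did, $\hamilton$ would always be self-adjoint and the proposition would not be an equivalence. The actual content, and the whole of the paper's proof, is a matrix-element computation: since $\jacobi$ inherits the nearest-neighbor structure, one computes, using radial coupling,
\begin{equation*}
\bra{n}\ket{\jacobi(n+1)}=\sum_{x_i\in\PI_A^{-1}(n)}\mathbf{deg}_{+}(x_i)\,\bra{x}\ket{\hamilton y}_{A},
\qquad
\bra{\jacobi n}\ket{n+1}=\sum_{y_i\in\PI_A^{-1}(n+1)}\mathbf{deg}_{-}(y_i)\,\bra{\hamilton x}\ket{y}_{A},
\end{equation*}
and observes that the two degree sums are equal, since both count the edges between the layers $\PI_A^{-1}(n)$ and $\PI_A^{-1}(n+1)$. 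Cancelling this common positive factor shows that each symmetry condition on the off-diagonal entries of $\jacobi$ is \emph{equivalent} to the corresponding self-adjointness condition for the (constant, by radial coupling) matrix elements of $\hamilton$ between consecutive layers; an analogous computation with $\mathbf{deg}_{0}$ handles the diagonal. This edge-counting identity gives both directions at once, requires neither invariance of $L^2_{rad}(G)$ nor assumption \ref{graphAssumptions}, and is exactly the ingredient missing from your proposal.
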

\begin{proof}
Note that equation (\ref{inducedJ}) implies that $\mathbf{J}$ satisfies the nearest-neighbor coupling condition. Hence it is sufficient to consider adjacent vertices, $x \in \PI^{-1}_A(n)=\{x_{1}, \ldots, x_{k} \}$ and $y \in \PI^{-1}_A(n+1)=\{y_{1}, \ldots, y_{m} \}$ for some $n \in \{0, \dots , N-1 \}$. We observe
\begin{eqnarray*}
\bra{n}\ket{\mathbf{J}(n+1)}&=&  (\bra{x_1}+\dots + \bra{x_k})( \ket{\hamilton y_1} +\dots +  \ket{\hamilton y_m}) \\
&=& \sum_{x_{i} \in \PI^{-1}_A(n)} \mathbf{deg}_{+}(x_i) \bra{x}\ket{\hamilton y}_{A},
\end{eqnarray*}
where the second equality holds by the radial coupling assumption. Similarly,
\begin{eqnarray*}
\bra{\mathbf{J}n}\ket{n+1}&=& \sum_{y_{i} \in \PI^{-1}_A(n+1)} \mathbf{deg}_{-}(y_i) \bra{\hamilton x}\ket{ y}_{A}.
\end{eqnarray*}
The statement follows as the matching identity 
\[
\sum_{x_{i} \in \PI^{-1}_A(n)} \mathbf{deg}_{+}(x_i) = \sum_{y_{i} \in \PI^{-1}_A(n+1)} \mathbf{deg}_{-}(y_i)
\] 
holds. It gives, in fact, the number of edges between the transversal layers $\PI^{-1}_A(n)$ and $\PI^{-1}_A(n+1)$. 
We consider now the diagonal elements
\begin{eqnarray*}
\bra{n}\ket{\mathbf{J}n}&=&  (\bra{x_1}+\dots + \bra{x_k})( \ket{\hamilton x_1} +\dots +  \ket{\hamilton x_m}) \\
&=&  \sum_{x_{i} \in \PI^{-1}_A(n)} (\mathbf{deg}_{0}(x_i)+1) \bra{x}\ket{\hamilton x}_{A}.
\end{eqnarray*}
Similarly, we have $
\bra{\mathbf{J} n}\ket{n}
=  \sum_{x_{i} \in \PI^{-1}_A(n)} (\mathbf{deg}_{0}(x_i)+1) \bra{\hamilton x}\ket{ x}_{A}
$.
\end{proof}

%\begin{figure}[htb]
%\centering
%\resizebox{10cm}{!}{\input{Diamond_level3and4.tikz}}
%	\caption{Hambly-Kumagai diamond graphs are considered as the most standard  diamond hierarchical fractal graphs with the similarity dimension $\text{dim}=2$,  \cite[Section 7]{v1} and \cite{ADT09,MT,HK,NT,AR18,T08}. These graphs satisfy assumptions \ref{graphAssumptions} and will be studied in detail in Section \ref{sec:HamblyKumagaiDiamond} (Left: $G_3$ graph level 3, right: $G_4$ graph level 4)}
%	\label{fig:diamond}
%\end{figure}

%\begin{figure}[htb]
%\centering
%\resizebox{6cm}{!}{\input{ExampleFrom24Dec.tikz}}
%	%\caption{..}
%	\label{fig:diamond}
%\end{figure}

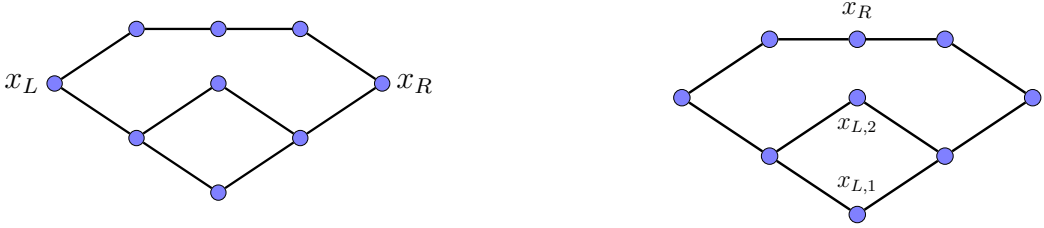
\begin{figure}[htbp]
\begin{tabular}{C{.49\textwidth}C{.49\textwidth}} 
\subfigure{
\resizebox{6.cm}{!}{\begin{tikzpicture}

%\tikzset{vertex_c/.style={circle,fill=black}} 
\tikzset{edge_c/.style = {draw=black, very thick}}

\tikzset{vertex_c/.style={circle, draw, fill=blue!50, inner sep=0pt, minimum width=8pt}}

% circle, draw, fill=black!50, inner sep=0pt, minimum width=4pt
\node[draw=none] at (2.4, -1){\scalebox{1.4}{$x_L$}};
\node[draw=none] at (9.6, -1){\scalebox{1.4}{$x_R$}};

\node[draw=none] at (6, 0.5){\scalebox{1.}{}};

% vertices

\node[vertex_c] (1w1) at  (4.5,-2) {};
\node[vertex_c] (2w1) at  (6.0,-3) {};
\node[vertex_c] (3w1) at  (7.5,-2) {};

% edges

\draw[edge_c] (1w1) to (2w1);
\draw[edge_c] (2w1) to (3w1);

% vertices
\node[vertex_c] (0w2) at  (3.0,-1) {};
%\node[vertex_c] (1w2) at  (4.5,-1) {1};
\node[vertex_c] (2w2) at  (6.0,-1) {};
%\node[vertex_c] (3w2) at  (7.5,-1) {3};
\node[vertex_c] (4w2) at  (9.0,-1) {};

% edges

\draw[edge_c] (0w2) to (1w1);
\draw[edge_c] (4w2) to (3w1);

%\draw[edge_c] (0w2) to (1w2);
%\draw[edge_c] (1w2) to (2w2);

\draw[edge_c] (1w1) to (2w2);
\draw[edge_c] (3w1) to (2w2);

%\draw[edge_c] (2w2) to (3w2);
%\draw[edge_c] (3w2) to (4w2);
%

% vertices
%\node[vertex_c] (0w3) at  (3.0,1) {0};
\node[vertex_c] (1w3) at  (4.5,0) {};
\node[vertex_c] (2w3) at  (6.0,0) {};
\node[vertex_c] (3w3) at  (7.5,0) {};
%\node[vertex_c] (4w3) at  (9.0,1) {4};
% edges
%\draw[edge_c] (0w3) to (1w3);
\draw[edge_c] (1w3) to (2w3);
\draw[edge_c] (2w3) to (3w3);
%\draw[edge_c] (3w3) to (4w3);
%
\draw[edge_c] (0w2) to (1w3);

\draw[edge_c] (4w2) to (3w3);
%%%%%%%%%%%%%%%%%%%%%%%%%%%%%%%%%%%%%%%%%%%%%%%%%%%%%

\end{tikzpicture}}

} &
%%%%%%%%%%%%%%%%%%%%%% 2
\subfigure{
\resizebox{5cm}{!}{\begin{tikzpicture}

%\tikzset{vertex_c/.style={circle,fill=black}} 
\tikzset{edge_c/.style = {draw=black, very thick}}

\tikzset{vertex_c/.style={circle, draw, fill=blue!50, inner sep=0pt, minimum width=8pt}}

% circle, draw, fill=black!50, inner sep=0pt, minimum width=4pt
\node[draw=none] at (6, -2.5){\scalebox{1.}{$x_{L,1}$}};
\node[draw=none] at (6, -1.5){\scalebox{1.}{$x_{L,2}$}};
\node[draw=none] at (6, 0.5){\scalebox{1.1}{$x_R$}};

% vertices

\node[vertex_c] (1w1) at  (4.5,-2) {};
\node[vertex_c] (2w1) at  (6.0,-3) {};
\node[vertex_c] (3w1) at  (7.5,-2) {};

% edges

\draw[edge_c] (1w1) to (2w1);
\draw[edge_c] (2w1) to (3w1);

% vertices
\node[vertex_c] (0w2) at  (3.0,-1) {};
%\node[vertex_c] (1w2) at  (4.5,-1) {1};
\node[vertex_c] (2w2) at  (6.0,-1) {};
%\node[vertex_c] (3w2) at  (7.5,-1) {3};
\node[vertex_c] (4w2) at  (9.0,-1) {};

% edges

\draw[edge_c] (0w2) to (1w1);
\draw[edge_c] (4w2) to (3w1);

%\draw[edge_c] (0w2) to (1w2);
%\draw[edge_c] (1w2) to (2w2);

\draw[edge_c] (1w1) to (2w2);
\draw[edge_c] (3w1) to (2w2);

%\draw[edge_c] (2w2) to (3w2);
%\draw[edge_c] (3w2) to (4w2);
%

% vertices
%\node[vertex_c] (0w3) at  (3.0,1) {0};
\node[vertex_c] (1w3) at  (4.5,0) {};
\node[vertex_c] (2w3) at  (6.0,0) {};
\node[vertex_c] (3w3) at  (7.5,0) {};
%\node[vertex_c] (4w3) at  (9.0,1) {4};
% edges
%\draw[edge_c] (0w3) to (1w3);
\draw[edge_c] (1w3) to (2w3);
\draw[edge_c] (2w3) to (3w3);
%\draw[edge_c] (3w3) to (4w3);
%
\draw[edge_c] (0w2) to (1w3);

\draw[edge_c] (4w2) to (3w3);
%%%%%%%%%%%%%%%%%%%%%%%%%%%%%%%%%%%%%%%%%%%%%%%%%%%%%

\end{tikzpicture}}
  
}
\end{tabular}
\caption{ (Left) This graph doesn't satisfy the graph assumptions \ref{graphAssumptions}  with respect to $A=\{x_L\}$. However, it is possible to construct a Hamiltonian that admits PQST from $A=\{x_L \}$ to $B=\{x_R\}$.  (Right) The same graph satisfies the graph assumptions \ref{graphAssumptions}  with respect to $A=\{x_{L,1},x_{L,2}\}$. Theorem \ref{PQSTtheorem} implies the possibility of constructing a Hamiltonian that admits PQST from $A=\{x_{L,1},x_{L,2}\}$ to $B=\{x_R\}$. }
\label{fig:ExampleFrom24Dec.tikz}
\end{figure}

From now on, we require that the graph $G$ satisfies the following assumption.
\begin{assumption}[Assumptions on the graph $G$]
\label{graphAssumptions} Let $G$ be a finite connected graph. We assume there exists $A \subset V(G)$, $A \neq \O$ that transversally decomposes $ V(G) = \bigcup_{n=0}^N  \PI_A^{-1}(n) $ in such a way that the following holds:
\begin{enumerate}
\item The mappings $\mathbf{deg}_{+}$, $\mathbf{deg}_{-}$ and $\mathbf{deg}_{0}$ are constant on a transversal layer, i.e., for $x,y \in \PI^{-1}_A(n)$ we have
\begin{equation*}
\mathbf{deg}_{+}(x)=\mathbf{deg}_{+}(y), \quad \mathbf{deg}_{-}(x)=\mathbf{deg}_{-}(y), \quad \mathbf{deg}_{0}(x)=\mathbf{deg}_{0}(y).
\end{equation*}
\end{enumerate}
\end{assumption}
The following lemma  follows in exactly the same way as \cite[Lemma 2, page 8]{2019arXiv190908668D}.
\begin{lemma}
\label{invariantSubspaceLemma}
Under the assumptions \ref{hamiltonAssumptions} and \ref{graphAssumptions}, we can prove that the subspace $L^2_{rad}(G)$ is invariant under $\hamilton$. 
\end{lemma}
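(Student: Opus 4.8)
The plan is to show directly that $\hamilton$ carries radial functions to radial functions, i.e.\ that for every $\psi \in L^2_{rad}(G)$ the value $(\hamilton\psi)(x)$ depends only on the layer index $\PI_A(x)$. Since $G$ is finite there are no domain issues, so invariance of $L^2_{rad}(G)$ is equivalent to this pointwise statement. First I would record the pointwise action of $\hamilton$: expanding $\psi=\sum_{y}\psi(y)\ket{y}$ in the orthogonal basis $\{\ket{y}\}$ (for which $\bra{y}\ket{y}_{A}=\mu_A(y)$) and reading off the matrix elements $\bra{x}\hamilton\ket{y}_{A}$ together with the weights $\mu_A$, one writes $(\hamilton\psi)(x)=\sum_{y}h(x,y)\psi(y)$ for suitable coefficients $h(x,y)$. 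By the nearest-neighbor coupling assumption \ref{hamiltonAssumptions}(1), $h(x,y)=0$ unless $y=x$ or $y$ is adjacent to $x$, so the sum is finite and local.

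Next, I would fix $x\in\PI_A^{-1}(n)$ and split the surviving terms according to the layer of $y$. The neighbours of $x$ lie in $\PI_A^{-1}(n-1)$, $\PI_A^{-1}(n)$ or $\PI_A^{-1}(n+1)$, and their numbers are exactly $\mathbf{deg}_{-}(x)$, $\mathbf{deg}_{0}(x)$ and $\mathbf{deg}_{+}(x)$. I would then invoke the three hypotheses in turn. Radiality of $\psi$ gives $\psi(y)=\psi_{m}$ depending only on $m=\PI_A(y)$; the radial-coupling assumption \ref{hamiltonAssumptions}(2), together with part (3) for the within-layer edges and the diagonal term, forces each coefficient $h(x,y)$ to depend only on the pair of layers $(\PI_A(x),\PI_A(y))$ (note that the weight $\mu_A(x)=1/|\PI_A^{-1}(n)|$ is itself a function of $n$ alone, so it does not spoil this); and the graph assumption \ref{graphAssumptions} makes the three multiplicities $\mathbf{deg}_{\pm}$, $\mathbf{deg}_{0}$ constant on $\PI_A^{-1}(n)$.

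Combining these, for $x\in\PI_A^{-1}(n)$ the value $(\hamilton\psi)(x)$ equals
\[
\mathbf{deg}_{-}(n)\,h_{-}(n)\,\psi_{n-1}+\big(\mathbf{deg}_{0}(n)\,h_{0}(n)+h_{\mathrm{diag}}(n)\big)\psi_{n}+\mathbf{deg}_{+}(n)\,h_{+}(n)\,\psi_{n+1},
\]
with the obvious convention that the $\mathbf{deg}_{-}$ term is absent when $n=0$ and the $\mathbf{deg}_{+}$ term is absent when $n=N$. Every factor here is a function of $n$ alone, so $(\hamilton\psi)(x)$ depends only on $\PI_A(x)$; hence $\hamilton\psi\in L^2_{rad}(G)$, which is the claim.

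Equivalently, one may phrase this in operator form using Lemma \ref{usefulProp1}: since $L^2_{rad}(G)=\mathrm{Range}\,P^{\ast}$ and $P^{\ast}P=Proj$, invariance is exactly the identity $(Id-P^{\ast}P)\hamilton P^{\ast}=0$, and unwinding it leads back to the same layer-wise computation. I expect the only real work to be notational bookkeeping — correctly passing between the matrix elements $\bra{x}\hamilton\ket{y}_{A}$ (which carry the weight $\mu_A$) and the pointwise coefficients $h(x,y)$, and making sure the within-layer edges and the diagonal are governed by assumption \ref{hamiltonAssumptions}(3) rather than (2). There is no genuine analytic obstacle, which is why the statement follows in exactly the same way as the cited lemma.
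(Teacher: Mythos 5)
Your proof is correct, and it is essentially the argument the paper intends: the paper's own ``proof'' simply defers to \cite[Lemma 2, page 8]{2019arXiv190908668D}, which is exactly this direct layer-by-layer verification that, for radial $\psi$, the assumptions \ref{hamiltonAssumptions} make every surviving coefficient $h(x,y)$ (and the weight $\mu_A$) a function of the layer indices alone, while assumption \ref{graphAssumptions} makes the multiplicities $\mathbf{deg}_{\pm}$, $\mathbf{deg}_{0}$ constant on each layer, so that $(\hamilton\psi)(x)$ depends only on $\PI_A(x)$. Your bookkeeping of the passage between $\bra{x}\hamilton\ket{y}_{A}$ and the pointwise kernel, and of the boundary layers $n=0$, $n=N$, is exactly the content of the cited lemma.
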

Recall that our primary motivation is to understand      {how quantum systems beyond a $1D$} chain can be engineered to produce sub-protocols of perfect quantum state transfer.
Let $G$ be a graph transversally decomposed with respect to $A$, satisfying the assumptions \ref{graphAssumptions} and associated with the $1D$ chain $D_{N}$. We set $A=\{x_{L,1}, \dots, x_{L,m} \}$ and define the quantum state $\ket{A}=\ket{x_{L,1}}+\dots \ket{x_{L,m}}$. Note that $\ket{A}= P^{\ast}\ket{0} \in L^2_{rad}(G)$. 
Similarly, we define the quantum state  $\ket{B}= P^{\ast}\ket{N} \in L^2_{rad}(G)$. The following theorem provides a sufficient condition of how to design a Hamiltonian on $G$ that achieves a perfect transfer of the quantum state $\ket{A}$ into $\ket{B}$.  
\begin{theorem} 
\label{PQSTtheorem}
Under the assumptions \ref{hamiltonAssumptions} and \ref{graphAssumptions},
if a PQST on the 1D chain $D_{N}$ is achieved, i.e.,  there exists $T>0$ such that $e^{iT \mathbf{J}} \ket{0}=e^{i\phi}\ket{N}$ for some phase $\phi$, then a PQST on $G$ is also achieved with the same time $T$ and phase $\phi$, i.e., 
\begin{equation*}
e^{i T \hamilton } \ket{A}= e^{i\phi}\ket{
B}
\text{ \ and \ }
e^{i T \hamilton } \ket{B}= e^{i\phi}\ket{A},
\end{equation*}
where $A=\{x_{L,1}, \dots, x_{L,m}\}$ is the set of sender sites and $B$ is the set of receiver sites.  
\end{theorem}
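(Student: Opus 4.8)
The plan is to reduce the PQST statement on $G$ to the already-assumed PQST statement on the auxiliary chain $D_N$ by transporting the propagator along $P^{\ast}$. Everything hinges on a single intertwining identity between the two Hamiltonians on the radial subspace, namely
\begin{equation*}
\hamilton P^{\ast} = P^{\ast}\mathbf{J}.
\end{equation*}
Once this is in hand, the propagators $e^{iT\hamilton}$ and $e^{iT\mathbf{J}}$ are intertwined by $P^{\ast}$ in the same way, and since $\ket{A}=P^{\ast}\ket{0}$ and $\ket{B}=P^{\ast}\ket{N}$, the perfect transfer $\ket{0}\mapsto e^{i\phi}\ket{N}$ on $D_N$ is carried verbatim to the transfer $\ket{A}\mapsto e^{i\phi}\ket{B}$ on $G$.

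First I would establish the intertwining identity. Composing the definition $\mathbf{J}=P\hamilton P^{\ast}$ with $P^{\ast}$ on the left gives $P^{\ast}\mathbf{J}=P^{\ast}P\hamilton P^{\ast} = \mathrm{Proj}\,\hamilton P^{\ast}$, using part (4) of Lemma \ref{usefulProp1}. For any $\varphi\in L^2(D_N)$ the vector $P^{\ast}\varphi$ lies in $L^2_{rad}(G)$ by part (1) of the same lemma, and $L^2_{rad}(G)$ is invariant under $\hamilton$ by Lemma \ref{invariantSubspaceLemma}; hence $\hamilton P^{\ast}\varphi\in L^2_{rad}(G)$, on which $\mathrm{Proj}$ acts as the identity. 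Thus $\mathrm{Proj}\,\hamilton P^{\ast}=\hamilton P^{\ast}$, giving the claimed identity. Iterating it yields $\hamilton^{k}P^{\ast}=P^{\ast}\mathbf{J}^{k}$ for every $k\ge 0$, and since the operators act on finite-dimensional spaces I may sum the exponential series term by term to obtain $e^{iT\hamilton}P^{\ast}=P^{\ast}e^{iT\mathbf{J}}$. Applying this to $\ket{0}$ and invoking the hypothesis $e^{iT\mathbf{J}}\ket{0}=e^{i\phi}\ket{N}$ together with $P^{\ast}\ket{0}=\ket{A}$ and $P^{\ast}\ket{N}=\ket{B}$ produces the first transfer relation.

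For the reverse relation I would first upgrade the chain hypothesis to $e^{iT\mathbf{J}}\ket{N}=e^{i\phi}\ket{0}$. Writing $U=e^{iT\mathbf{J}}$ and recalling that $\mathbf{J}$ is a real symmetric Jacobi matrix, $U$ is both unitary and symmetric, $U^{T}=U$. Consequently $\bra{0}U\ket{N}=\bra{N}U\ket{0}=e^{i\phi}$, so the $\ket{0}$-component of the unit vector $U\ket{N}$ already has modulus one; being of unit norm, $U\ket{N}$ can have no other components, whence $U\ket{N}=e^{i\phi}\ket{0}$. Feeding this into the same intertwining identity gives $e^{iT\hamilton}\ket{B}=P^{\ast}e^{iT\mathbf{J}}\ket{N}=e^{i\phi}P^{\ast}\ket{0}=e^{i\phi}\ket{A}$, which completes the proof.

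The only genuinely delicate point is the intertwining identity, and there the crux is the invariance of $L^2_{rad}(G)$ under $\hamilton$ furnished by Lemma \ref{invariantSubspaceLemma}: without it one is left with the stray projection in $P^{\ast}\mathbf{J}=\mathrm{Proj}\,\hamilton P^{\ast}$, which obstructs the clean passage to the exponential. The radial coupling and constant-degree hypotheses of Assumptions \ref{hamiltonAssumptions} and \ref{graphAssumptions} are precisely what secure this invariance, so they enter the argument exactly at this step. The reverse transfer on the chain is a more routine matter, handled entirely by the unitarity-plus-symmetry argument above without appealing to any external mirror-symmetry results for PQST chains.
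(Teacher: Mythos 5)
Your proof is correct, and it runs along the same underlying track as the paper's --- reduction to the auxiliary chain through the pair $(P,P^{\ast})$, with the radial invariance of Lemma \ref{invariantSubspaceLemma} as the crux --- but the execution differs in two ways worth noting. Where you exponentiate the clean intertwining relation $\hamilton P^{\ast}=P^{\ast}\mathbf{J}$, the paper instead works with the auxiliary generator $\hamilton\,\mathbf{Proj}$, shows $e^{iT\hamilton\mathbf{Proj}}\ket{A}-e^{i\phi}\ket{B}\in \mathrm{Ker}(P)$ (delegating the computation to \cite{2019arXiv190908668D}), and then kills this difference because it also lies in the $\hamilton$-invariant subspace $L^2_{rad}(G)$, whose orthogonal complement is exactly $\mathrm{Ker}(P)$ by Lemma \ref{usefulProp1}; your route avoids both the $\mathrm{Ker}(P)$ detour and the operator $\hamilton\,\mathbf{Proj}$, and it is self-contained rather than resting on a citation. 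Moreover, the paper never spells out the reverse transfer $e^{iT\hamilton}\ket{B}=e^{i\phi}\ket{A}$ at all, whereas you derive the needed chain identity $e^{iT\mathbf{J}}\ket{N}=e^{i\phi}\ket{0}$ from unitarity plus complex symmetry of $U=e^{iT\mathbf{J}}$; this is a genuine addition, and it does not require any persymmetry/mirror-symmetry input. The one hypothesis you should flag is that this step uses realness of $\mathbf{J}$: at the point of Theorem \ref{PQSTtheorem}, $\mathbf{J}=P\hamilton P^{\ast}$ is a priori only Hermitian tridiagonal, and the real form \eqref{jacobiMatrix} is imposed in the paper only afterwards, ``for the rest of the paper.'' This matches the paper's standing Jacobi-matrix convention, so it is not a gap, but it is not cosmetic either: for genuinely complex couplings (e.g.\ a $2\times 2$ Hermitian matrix with off-diagonal entry $|a|e^{i\theta}$) the state still transfers back, but with phase $\pi/2+\theta$ versus $\pi/2-\theta$ forward, so the ``same time and phase'' claim in both directions really does rely on $\mathbf{J}$ being real symmetric.
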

 \begin{proof}
  In the same way as \cite[Proof of Theorem 1, page 9]{2019arXiv190908668D}, we show $e^{iT \hamilton \mathbf{Proj}} \ket{A}- e^{i\phi}\ket{B} \in Ker(P)$. Using $\ket{A}, \ket{B} \in L^2_{rad}(G)$ we conclude with Lemma \ref{invariantSubspaceLemma}, $ \ e^{iT \hamilton \mathbf{Proj}} \ket{A}= e^{i\phi}\ket{B}$. Let $\mathbf{Proj}^{\bot}$ be the projection of $L^2(G)$ onto $(L^2_{rad}(G))^{\bot}$. The statement follows by
 	$(\hamilton \mathbf{Proj} + \hamilton \mathbf{Proj}^{\bot})  \ket{A} = \hamilton \mathbf{Proj} \ket{A}$ 
 	%\begin{equation*}
 	
 	%\end{equation*}
 \end{proof}
\begin{remark}
In a previous paper \cite{2019arXiv190908668D}, we considered the PQST from an excited state on a single vertex $x_L$ to another  excited state on a single vertex $x_R$. Theorem \ref{PQSTtheorem} covers additional situations, in which a PQST is achieved between      {the transversal layers $A\subset V(G)$ and $B\subset V(G)$, and each of those layers may} contain more than a single vertex, see Figure \ref{fig:ExampleFrom24Dec.tikz} (right). On the other hand, Figure \ref{fig:ExampleFrom24Dec.tikz} (left)  shows an example of a graph that doesn't satisfy the assumptions \ref{graphAssumptions}  with respect to $A=\{x_L\}$. However, it is possible to construct a Hamiltonian that admits a PQST from $A=\{x_L \}$ to $B=\{x_R\}$.
\end{remark}
Let $(H(x,y))_{x,y \in V(G)}$ be the matrix representation of $\hamilton$ with respect to the canonical basis $\{ \ket{x} \}_{x \in V(G)}$. The following result relates the matrix elements of $\hamilton$ to $\jacobi$ and can be proved similarly to  \cite[Proposition $1$]{2019arXiv190908668D}.  
\begin{proposition} 
\label{hamiltonianMatrix}     {
Let $x\in V(G)$. Also, let $y\in V(G)$ be adjacent to $x$ and $\PI_A(y)=\PI_A(x)\pm 1$. Then
\begin{enumerate}
	\item $H(x,x) = \dfrac{1}{\mathbf{deg}_{0}(x)+1}\bra{\PI_A(x)}\jacobi \ket{\PI_A(x)}$.
	\item $H(x,y) = \dfrac{1}{\mathbf{deg}_{\pm}(x) }\bra{\PI_A(x)}\jacobi \ket{\PI_A(x)\pm 1}$.
\end{enumerate}}
\end{proposition}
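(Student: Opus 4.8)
The plan is to compute the matrix entries of $\jacobi = P\hamilton P^{\ast}$ explicitly on the one-excitation states $\ket{n}$ of $D_N$ and then invert the resulting relations to solve for the entries of $\hamilton$. First I would record two bookkeeping facts. From $P^{\ast}\varphi(x)=\varphi(\PI_A(x))$ we get that $P^{\ast}\ket{n}=\sum_{z\in\PI_A^{-1}(n)}\ket{z}$ is the indicator of the $n$-th transversal layer; and since $\bra{x}\ket{y}_A=\delta_{xy}\mu_A(x)$, the basis matrix entry and the weighted bracket are related by $\bra{x}\hamilton\ket{y}_A=\mu_A(x)\,\overline{H(x,y)}$, which for the real symmetric Hamiltonians relevant to PQST is just $\mu_A(x)H(x,y)$ with $\mu_A(x)=1/|\PI_A^{-1}(\PI_A(x))|$ constant on each layer. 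Writing $\jacobi\ket{m}=P\hamilton P^{\ast}\ket{m}$ and applying $\hamilton$ and then $P$ (exactly the computation already performed in the self-adjointness proposition above), this yields the master formula
\begin{equation*}
\bra{n}\jacobi\ket{m} = \frac{1}{|\PI_A^{-1}(n)|}\sum_{x\in\PI_A^{-1}(n)}\sum_{z\in\PI_A^{-1}(m)} H(x,z).
\end{equation*}

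Next I would specialize $m=n$ and $m=n\pm 1$ and collapse the double sum using the three assumptions. In the diagonal case $m=n=\PI_A(x)$, nearest-neighbor coupling (Assumption \ref{hamiltonAssumptions}(1)) kills $H(x',z)$ for distinct non-adjacent $x',z$ in the layer, while Assumption \ref{hamiltonAssumptions}(3) forces $H(x',x')=H(x,x)$ and makes $H(x',z)=H(x',x')$ for adjacent same-layer pairs; since $\mathbf{deg}_{0}$ is constant on the layer (Assumption \ref{graphAssumptions}), each inner sum equals $(\mathbf{deg}_{0}(x)+1)H(x,x)$, and the $|\PI_A^{-1}(n)|$ identical outer summands cancel the prefactor, giving $\bra{n}\jacobi\ket{n}=(\mathbf{deg}_{0}(x)+1)H(x,x)$. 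In the off-diagonal case $m=n+1$ (and symmetrically $m=n-1$), nearest-neighbor coupling restricts to adjacent cross-layer pairs, radial coupling (Assumption \ref{hamiltonAssumptions}(2)) makes $H(x',z)$ equal to the common value $H(x,y)$ for every such pair, and the number of layer-$(n+1)$ neighbors of each fixed $x'$ is the constant $\mathbf{deg}_{+}(x)$; hence each inner sum equals $\mathbf{deg}_{+}(x)H(x,y)$ and $\bra{n}\jacobi\ket{n+1}=\mathbf{deg}_{+}(x)H(x,y)$. Solving these two relations for $H(x,x)$ and $H(x,y)$ gives precisely the claimed identities, with the signs in $\mathbf{deg}_{\pm}$ and $\PI_A(x)\pm1$ matching.

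The main obstacle is not a single hard estimate but keeping two things straight simultaneously: the layer-dependent weight $\mu_A$ that distinguishes the basis entry $H(x,y)$ from the weighted bracket $\bra{x}\hamilton\ket{y}_A$ in which the assumptions are phrased, and the precise counting of same-layer versus cross-layer neighbors recorded by $\mathbf{deg}_{0},\mathbf{deg}_{\pm}$. The constancy of these degree functions on each transversal layer (Assumption \ref{graphAssumptions}) is exactly what makes the $|\PI_A^{-1}(n)|$ outer summands identical, so that the averaging prefactor $1/|\PI_A^{-1}(n)|$ cancels cleanly and the double sum reduces to a single scalar multiple of $H(x,x)$ or $H(x,y)$; without it no such reduction would be possible. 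Since the argument is structurally identical to \cite[Proposition 1]{2019arXiv190908668D}, I would keep the write-up brief and simply flag where each of the three assumptions enters.
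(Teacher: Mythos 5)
Your proposal is correct and takes essentially the same approach as the paper: your master formula $\bra{n}\jacobi\ket{m}=\frac{1}{|\PI_A^{-1}(n)|}\sum_{x\in\PI_A^{-1}(n)}\sum_{z\in\PI_A^{-1}(m)}H(x,z)$ is precisely the layer-by-layer expansion of $\jacobi=P\hamilton P^{\ast}$ that the paper itself carries out in the proof of its self-adjointness proposition (and that underlies \cite[Proposition 1]{2019arXiv190908668D}, to which the paper defers for this statement), followed by the same collapsing of the double sum via nearest-neighbor coupling, radial coupling, and the layer-constancy of $\mathbf{deg}_{0},\mathbf{deg}_{\pm}$. One cosmetic remark: calling the relevant Hamiltonians ``real symmetric'' is slightly inaccurate --- the matrix $(H(x,y))$ is real but in general non-symmetric, being self-adjoint only with respect to the weighted inner product \eqref{key-inner} (compare the visibly non-symmetric $\hamilton_2$ in \eqref{eq:level2Hamiltonian1}) --- but your argument uses only realness of the entries to drop the conjugation, so nothing is affected.
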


\section{Generic spectral properties of $\hamilton$}

\subsection{Radial eigenvectors of $\hamilton$}

The goal of this section is to give a partial description of the spectrum of a Hamiltonian $\hamilton$ satisfying the assumptions \ref{hamiltonAssumptions}. This part of the spectrum is related to the transversal decomposition of $G$ and consequently can be described for a generic $G$ satisfying the assumptions \ref{graphAssumptions}. The following lemmas reveal some  advantages for considering the induced $1D$ chain and the Jacobi matrix $\jacobi$ while investigating the Hamiltonian $\hamilton$. In section \ref{sec:Main inductive result}, we will see that this approach is very fruitful. In fact, we will develop this approach further to give a complete description of the spectrum $\sigma (\hamilton)$ on a broad class of graphs.
\begin{lemma}
\label{lemmaLifteigenvector}
Let $\jacobi  =  P \hamilton P^{\ast}$. Then $\sigma(\jacobi) \subset \sigma(\hamilton)$. Moreover, if $\lambda \in \sigma(\jacobi)$ is an eigenvalue with the eigenvector $v_{\lambda}$ then  $ P^{\ast} v_{\lambda}$ is a corresponding $\hamilton$-eigenvector.
\end{lemma}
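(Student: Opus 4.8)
The plan is to exploit the intertwining identities between $P$ and $P^{\ast}$ recorded in Lemma \ref{usefulProp1}, together with the invariance of $L^2_{rad}(G)$ under $\hamilton$ from Lemma \ref{invariantSubspaceLemma}. Since $G$ and $D_{N}$ are finite, both $\hamilton$ and $\jacobi$ act on finite-dimensional spaces, so their spectra consist of eigenvalues; it therefore suffices to show that every eigenpair $(\lambda, v_{\lambda})$ of $\jacobi$ produces the eigenpair $(\lambda, P^{\ast} v_{\lambda})$ of $\hamilton$, which simultaneously yields $\sigma(\jacobi) \subset \sigma(\hamilton)$ and the ``moreover'' assertion.

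Concretely, I would start from $\jacobi v_{\lambda} = \lambda v_{\lambda}$, i.e. $P \hamilton P^{\ast} v_{\lambda} = \lambda v_{\lambda}$, and apply $P^{\ast}$ to both sides. Using $P^{\ast} P = Proj$ (part (4) of Lemma \ref{usefulProp1}) on the left gives $Proj \, \hamilton \, (P^{\ast} v_{\lambda}) = \lambda \, P^{\ast} v_{\lambda}$. Next I would observe that $P^{\ast} v_{\lambda} \in L^2_{rad}(G)$ because the range of $P^{\ast}$ is exactly $L^2_{rad}(G)$ (part (1)); since $L^2_{rad}(G)$ is $\hamilton$-invariant by Lemma \ref{invariantSubspaceLemma}, the vector $\hamilton (P^{\ast} v_{\lambda})$ again lies in $L^2_{rad}(G)$, on which $Proj$ acts as the identity. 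Hence $Proj \, \hamilton \, (P^{\ast} v_{\lambda}) = \hamilton (P^{\ast} v_{\lambda})$, and the identity collapses to $\hamilton (P^{\ast} v_{\lambda}) = \lambda \, (P^{\ast} v_{\lambda})$.

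It remains to check that $P^{\ast} v_{\lambda}$ is genuinely an eigenvector, i.e. nonzero. This follows from the injectivity of $P^{\ast}$, which is immediate from $P P^{\ast} = Id_{D_{N}}$ (part (3)): if $P^{\ast} v_{\lambda} = 0$ then $v_{\lambda} = P P^{\ast} v_{\lambda} = 0$, contradicting $v_{\lambda} \neq 0$. Thus $P^{\ast} v_{\lambda}$ is a nonzero $\hamilton$-eigenvector with eigenvalue $\lambda$, so $\lambda \in \sigma(\hamilton)$.

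The only non-formal ingredient, and hence the step I expect to be the crux, is the removal of the projection $Proj$, which rests entirely on the $\hamilton$-invariance of $L^2_{rad}(G)$ (Lemma \ref{invariantSubspaceLemma}); without it one could only conclude an eigen-relation for the radial part of $\hamilton P^{\ast} v_{\lambda}$, not for $\hamilton P^{\ast} v_{\lambda}$ itself. Everything else is routine manipulation of the adjoint identities. One could alternatively phrase the argument by noting that $\jacobi = P \hamilton P^{\ast}$ is the compression of $\hamilton$ to the invariant subspace $L^2_{rad}(G)$, transported by the isometric lift $P^{\ast}$, but the direct computation above is the cleanest route.
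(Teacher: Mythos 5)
Your proof is correct and follows essentially the same route as the paper's: apply $P^{\ast}$ to the eigenvalue relation, use $P^{\ast}P = Proj$, and remove the projection via the $\hamilton$-invariance of $L^2_{rad}(G)$ from Lemma \ref{invariantSubspaceLemma}. Your additional verification that $P^{\ast}v_{\lambda}\neq 0$ (via $PP^{\ast}=Id_{D_N}$) is a small but welcome detail that the paper leaves implicit.
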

\begin{proof} 
Let $\lambda \in \sigma(\jacobi)$ be an eigenvalue corresponding to the eigenvector $v_{\lambda} \in L^2(D_{N})$. Then 
\begin{equation*}
\lambda P^{\ast} v_{\lambda} = P^{\ast}\jacobi v_{\lambda} =  P^{\ast} P \hamilton P^{\ast} v_{\lambda} = Proj \ \hamilton P^{\ast} v_{\lambda}=  \hamilton P^{\ast} v_{\lambda},
\end{equation*}
where the last equality holds as $\hamilton P^{\ast} v_{\lambda} \in L^2_{rad}(G)$.
\end{proof}
Note that $P^{\ast} v_{\lambda} \in L^2_{rad}(G)$ and hence we denote it as a radial eigenvector.
\begin{lemma}
\label{resolventsRelation}
Let $z \notin \sigma(\hamilton)$. Then the resolvent operators satisfy $(\jacobi-z)^{-1}=P (\hamilton-z)^{-1} P^{\ast}$.
\end{lemma}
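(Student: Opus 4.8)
The plan is to verify the operator identity directly by checking that $P(\hamilton-z)^{-1}P^{\ast}$ inverts $\jacobi - z$, and then to invoke the invertibility of $\jacobi - z$ to upgrade a one-sided identity to the full resolvent formula. First I would record that $z\notin\sigma(\hamilton)$ forces $z\notin\sigma(\jacobi)$: by Lemma \ref{lemmaLifteigenvector} we have $\sigma(\jacobi)\subset\sigma(\hamilton)$, so $(\jacobi-z)^{-1}$ exists. It therefore suffices to establish
\[
(\jacobi-z)\,P(\hamilton-z)^{-1}P^{\ast}=Id_{D_{N}},
\]
after which left-multiplication by $(\jacobi-z)^{-1}$ gives the claim (and since $G$ is finite, everything in sight is a square matrix, so a one-sided inverse is automatically two-sided).

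The algebraic backbone is the factorization $\jacobi-z=P(\hamilton-z)P^{\ast}$. This is immediate from $\jacobi=P\hamilton P^{\ast}$ together with $PP^{\ast}=Id_{D_{N}}$ (Lemma \ref{usefulProp1}(3)), since then $z\,Id_{D_{N}}=z\,PP^{\ast}=P(z)P^{\ast}$, and subtracting gives the stated identity.

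The one genuinely analytic input, and the step I expect to require the most care, is that $L^2_{rad}(G)$ is a \emph{reducing} subspace for $\hamilton$, not merely an invariant one. By Lemma \ref{invariantSubspaceLemma} the subspace $L^2_{rad}(G)$ is $\hamilton$-invariant, and because $\hamilton$ is self-adjoint its orthogonal complement is invariant as well; hence $\hamilton$ is block-diagonal with respect to the decomposition $L^2(G)=L^2_{rad}(G)\oplus (L^2_{rad}(G))^{\bot}$, and so is every function of $\hamilton$, in particular the resolvent $(\hamilton-z)^{-1}$. The subtlety worth flagging is precisely that invariance of $L^2_{rad}(G)$ under $\hamilton$ alone would \emph{not} suffice to guarantee invariance under $(\hamilton-z)^{-1}$; it is self-adjointness (equivalently, that $L^2_{rad}(G)$ reduces $\hamilton$) that makes the passage to the resolvent legitimate.

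Granting this, the computation closes cleanly. For $\varphi\in L^2(D_{N})$ set $\eta=(\hamilton-z)^{-1}P^{\ast}\varphi$, which lies in $L^2_{rad}(G)$ by the reducing property, so that $Proj\,\eta=\eta$. Using the factorization, then $P^{\ast}P=Proj$ (Lemma \ref{usefulProp1}(4)) acting as the identity on $L^2_{rad}(G)$, and finally $PP^{\ast}=Id_{D_{N}}$, I would compute
\[
(\jacobi-z)\,P\eta = P(\hamilton-z)P^{\ast}P\,\eta = P(\hamilton-z)\,Proj\,\eta = P(\hamilton-z)\eta = P\,P^{\ast}\varphi = \varphi .
\]
Since $\eta=(\hamilton-z)^{-1}P^{\ast}\varphi$, this reads $(\jacobi-z)\,P(\hamilton-z)^{-1}P^{\ast}\varphi=\varphi$ for all $\varphi$, i.e. $(\jacobi-z)\,P(\hamilton-z)^{-1}P^{\ast}=Id_{D_{N}}$. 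Left-multiplying by $(\jacobi-z)^{-1}$ yields $P(\hamilton-z)^{-1}P^{\ast}=(\jacobi-z)^{-1}$, which is the asserted resolvent identity.
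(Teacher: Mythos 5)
Your proof is correct and takes essentially the same route as the paper's: both establish the factorization $\jacobi-z = P(\hamilton-z)P^{\ast}$ and verify that $P(\hamilton-z)^{-1}P^{\ast}$ inverts $\jacobi-z$ using $P^{\ast}P = Proj$, $PP^{\ast}=Id_{D_N}$, and the fact that the resolvent of $\hamilton$ preserves $L^2_{rad}(G)$. The only differences are minor refinements: you upgrade the right-inverse identity to the full resolvent formula by multiplying with $(\jacobi-z)^{-1}$ (where the paper instead repeats the computation to get a left inverse), and you make explicit the reducing-subspace justification---via self-adjointness of $\hamilton$---for why $(\hamilton-z)^{-1}$ preserves $L^2_{rad}(G)$, a point the paper leaves implicit in its citation of Lemma \ref{invariantSubspaceLemma}.
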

\begin{proof}
Note $z \notin \sigma(\hamilton)$ implies $z \notin \sigma(\jacobi)$ by Lemma \ref{lemmaLifteigenvector}. We prove that $P (\hamilton-z)^{-1} P^{\ast}$ is the inverse operator of $\jacobi-z$. We have
\begin{eqnarray*}
(\textbf{J}-z)P (\hamilton-z)^{-1} P^{\ast} &=& P (\hamilton-z) P^{\ast} P (\hamilton-z)^{-1} P^{\ast}\\
&=& P (\hamilton-z)  Proj (\hamilton-z)^{-1} P^{\ast} \\
&=& Id_{D_{N}}
\end{eqnarray*}
where the equalities hold by Lemmas \ref{usefulProp1} and \ref{invariantSubspaceLemma}. A similar argument shows that 
 $P (\hamilton-z)^{-1} P^{\ast}$ is also a left inverse of $\jacobi-z$.
\end{proof}
Let $P_{\jacobi, \lambda}$ and $P_{\hamilton, \lambda}$ be the eigenprojections corresponding to $\lambda \in \sigma(\jacobi)$ and $\lambda \in \sigma(\hamilton)$, respectively. 
\begin{theorem}\label{key-thm1}
Let $\lambda \in \sigma(\jacobi)$. Then $P_{\jacobi, \lambda} =P \  P_{\hamilton, \lambda}P^{\ast}$.
\end{theorem}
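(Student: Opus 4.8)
The plan is to use the resolvent representation from Lemma~\ref{resolventsRelation} together with the Riesz functional calculus, since eigenprojections of self-adjoint operators are naturally expressed as contour integrals of the resolvent. First I would fix an eigenvalue $\lambda \in \sigma(\jacobi)$ and choose a small positively oriented circle $\gamma_\lambda$ in the complex plane that encloses $\lambda$ but no other point of $\sigma(\hamilton)$. This is possible because both spectra are finite (the operators act on finite-dimensional spaces), and because $\sigma(\jacobi) \subset \sigma(\hamilton)$ by Lemma~\ref{lemmaLifteigenvector}, so isolating $\lambda$ from $\sigma(\hamilton)$ automatically isolates it from $\sigma(\jacobi)$. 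The Riesz projections are then
\begin{equation*}
P_{\jacobi,\lambda} = -\frac{1}{2\pi i}\oint_{\gamma_\lambda} (\jacobi - z)^{-1}\, dz,
\qquad
P_{\hamilton,\lambda} = -\frac{1}{2\pi i}\oint_{\gamma_\lambda} (\hamilton - z)^{-1}\, dz.
\end{equation*}

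Next I would substitute the resolvent identity $(\jacobi - z)^{-1} = P(\hamilton - z)^{-1}P^{\ast}$ from Lemma~\ref{resolventsRelation}, valid for every $z$ on $\gamma_\lambda$ since such $z$ lie outside $\sigma(\hamilton)$. Because $P$ and $P^{\ast}$ are fixed bounded (indeed finite-dimensional matrix) operators, they pull outside the contour integral by linearity and continuity of the integral. This yields
\begin{equation*}
P_{\jacobi,\lambda}
= -\frac{1}{2\pi i}\oint_{\gamma_\lambda} P(\hamilton - z)^{-1}P^{\ast}\, dz
= P\!\left(-\frac{1}{2\pi i}\oint_{\gamma_\lambda} (\hamilton - z)^{-1}\, dz\right)\!P^{\ast}
= P\, P_{\hamilton,\lambda}\, P^{\ast},
\end{equation*}
which is exactly the claimed identity.

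The main subtlety, and the step I would be most careful about, is verifying that the contour $\gamma_\lambda$ encircling $\lambda$ within $\sigma(\hamilton)$ simultaneously serves as a valid Riesz contour for $\jacobi$. One must confirm that $\gamma_\lambda$ encloses no point of $\sigma(\jacobi)$ other than $\lambda$; this follows since $\sigma(\jacobi) \subset \sigma(\hamilton)$, so any extra enclosed eigenvalue of $\jacobi$ would already be an enclosed eigenvalue of $\hamilton$, contradicting our choice. A secondary point worth a sentence is that Lemma~\ref{resolventsRelation} is stated only for $z \notin \sigma(\hamilton)$, so one should emphasize that the contour is drawn strictly in the resolvent set of $\hamilton$, ensuring the substitution is legitimate at every point of integration. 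With these observations in place the argument is essentially a direct computation, and no further estimates are needed in the finite-dimensional setting.
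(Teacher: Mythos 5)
Your proof is correct, and it hinges on exactly the same key input as the paper's: Lemma \ref{resolventsRelation}, which identifies $(\jacobi-z)^{-1}$ with $P(\hamilton-z)^{-1}P^{\ast}$ off $\sigma(\hamilton)$, followed by extraction of the residue of both sides at $z=\lambda$. The two arguments differ only in how that residue is extracted. The paper expands both resolvents in their spectral (partial-fraction) representation, multiplies by $z-\lambda$, and lets $z\to\lambda$; this is purely algebraic, needs no complex analysis, but implicitly uses the full spectral decomposition of both operators (available since they are self-adjoint on finite-dimensional spaces). You instead integrate the resolvent identity over a small Riesz contour around $\lambda$, pull the fixed operators $P$ and $P^{\ast}$ through the integral, and let the two contour integrals collapse to the respective eigenprojections. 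What your route buys is generality and robustness: it needs only that $\lambda$ is an isolated point of both spectra, so it would survive in an infinite-dimensional setting where writing the full partial-fraction expansion is unavailable; what it costs is the holomorphic functional calculus plus the (standard, but worth one explicit sentence) identification of the Riesz projection with the orthogonal eigenprojection $P_{\hamilton,\lambda}$ for a self-adjoint operator --- this identification, which holds because self-adjointness rules out nilpotent parts, is the one place where self-adjointness enters your argument, whereas in the paper it enters through the spectral representation itself. Your handling of the contour placement --- drawing $\gamma_\lambda$ in the resolvent set of $\hamilton$ and observing that $\sigma(\jacobi)\subset\sigma(\hamilton)$ (Lemma \ref{lemmaLifteigenvector}) makes the same contour admissible for $\jacobi$ --- correctly disposes of the only genuine subtlety in this approach.
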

\begin{proof}
The spectral representation of the resolvent operators in Lemma \ref{resolventsRelation} gives
\begin{equation}
\label{specRepOfResol}
\sum_{\tilde{\lambda} \in \sigma(\jacobi)}\frac{1}{z - \tilde{\lambda}}P_{\jacobi, \tilde{\lambda}}=\sum_{\tilde{\lambda} \in \sigma(\hamilton)}\frac{1}{z - \tilde{\lambda}}P P_{\hamilton, \tilde{\lambda}}P^{\ast}
\end{equation}
Multiplying both sides of equation (\ref{specRepOfResol}) by $z-\lambda$ and subsequently taking the limit $z \to \lambda$ yields the result.
\end{proof}
For the rest of the paper, we assume that the auxiliary $1D$ chain is equipped with the following Jacobi matrix
\begin{equation}
\label{jacobiMatrix}
\jacobi =
 \begin{pmatrix}
  B_{1} & J_1 & 0 &  & \\
  J_{1} & B_{2} & J_2 & 0 &  \\
    0 &  J_2 & B_2 & \ddots &    \\
   & 0  &  \ddots &    \ddots& J_{N}  \\
    & &  & J_{N} & B_{N+1}
\end{pmatrix},
\end{equation}
where $B_1, \dots , B_{N+1} \in \rr$ and $J_i > 0$ for $i \in \{1, \dots, N\}$. Let $\{ p_0(z), \dots, p_{N+1}(z) \}$ be monic polynomials defined by the %following
 recurrence relations:
\begin{equation}
\label{recurrence relations}
      \begin{cases}
     p_0(z) =1, \quad p_1(z)= z- B_1, \\
     p_k(z) = (z - B_k)p_{k-1}(z) - J_{k-1}^2 p_{k-2}(z), \quad k=2,3, \dots, N+1 
  \end{cases}
  \end{equation}
The following proposition summarizes some useful spectral properties of $\jacobi$ (for more details, see \cite[p.~48]{MR2316893}).
\begin{proposition}
Every zero of $p_{N+1}(z)$ is real and simple. Moreover, $\sigma(\textbf{J}) = \{\lambda \in \complex : p_{N+1}(\lambda)=0  \}$. For an eigenvalue $\lambda \in \sigma(\textbf{J})$, the corresponding eigenvectors is given by
\begin{equation}
\label{jacobiEigenvector}
v_{\lambda} = \left(p_0(\lambda), \frac{p_1(\lambda)}{J_1}, \dots, \frac{p_{N}(\lambda)}{J_1 \cdots J_{N}} \right)^t
\end{equation}
\label{propRadialeigenvectors}
\end{proposition}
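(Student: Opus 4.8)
The plan is to recognize $p_{N+1}$ as the characteristic polynomial of $\jacobi$ and then read off every assertion from the standard theory of finite real symmetric tridiagonal matrices with positive off-diagonal entries.

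First I would show that $p_k(z) = \det(zI_k - \jacobi_k)$, where $\jacobi_k$ denotes the $k\times k$ leading principal submatrix of $\jacobi$ (rows and columns indexed $0,\dots,k-1$). Expanding $\det(zI_k - \jacobi_k)$ along its last row, the tridiagonal shape leaves exactly two surviving terms: $(z-B_k)$ times the $(k-1)$-minor and $-J_{k-1}^2$ times the $(k-2)$-minor. Together with the base cases $\det(zI_1-\jacobi_1)=z-B_1$ and the empty determinant equal to $1$, this is precisely the recurrence \eqref{recurrence relations}. Hence $p_{N+1}(z)=\det(zI-\jacobi)$ is the characteristic polynomial of $\jacobi$, which at once gives $\sigma(\jacobi)=\{\lambda\in\complex : p_{N+1}(\lambda)=0\}$. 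Since $\jacobi$ is real and symmetric, its spectrum is real and it is orthogonally diagonalizable, so algebraic and geometric multiplicities coincide; in particular every zero of $p_{N+1}$ is real.

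Next I would deduce simplicity from irreducibility. Writing $\jacobi v=\lambda v$ componentwise yields a three-term recurrence for the entries of $v$: the top row determines $v(1)$ from $v(0)$ upon dividing by $J_1>0$, and each interior row determines $v(j+1)$ from $v(j)$ and $v(j-1)$ upon dividing by $J_{j+1}>0$. Thus the whole vector is fixed once $v(0)$ is prescribed, so each eigenspace is at most one-dimensional; combined with diagonalizability this forces every eigenvalue, hence every zero of $p_{N+1}$, to be simple. Finally, for the eigenvector formula I would substitute the candidate $v_\lambda$ with $j$-th entry $p_j(\lambda)/(J_1\cdots J_j)$ directly into $(\jacobi-\lambda)v_\lambda$. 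The recurrence \eqref{recurrence relations} makes rows $0,\dots,N-1$ vanish identically, while the last row collapses to $((\jacobi-\lambda)v_\lambda)(N) = -p_{N+1}(\lambda)/(J_1\cdots J_N)$, which is zero exactly when $\lambda$ is a root of $p_{N+1}$, i.e. when $\lambda\in\sigma(\jacobi)$; by the previous paragraph $v_\lambda$ is then the eigenvector, unique up to scale.

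I expect the only delicate point to be bookkeeping: keeping the shift between the polynomial index and the vertex index $\{0,\dots,N\}$ consistent, and recalling that the positivity $J_i>0$ is what simultaneously guarantees the irreducibility used for simplicity and the well-definedness of the normalizations $1/(J_1\cdots J_j)$. There is no genuine analytic obstacle here — the statement is the classical spectral theory of finite Jacobi matrices, and the reference \cite[p.~48]{MR2316893} supplies the remaining details.
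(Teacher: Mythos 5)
Your proof is correct, and there is in fact nothing in the paper to compare it against: the authors state this proposition without any proof, deferring entirely to the citation \cite[p.~48]{MR2316893}. Your argument --- identifying $p_k(z)$ with the characteristic polynomial of the $k\times k$ leading principal submatrix via expansion along the last row, deducing reality and diagonalizability from symmetry, obtaining simplicity because the three-term recurrence forces every eigenspace to be one-dimensional, and verifying the eigenvector formula by direct substitution so that only the last row survives as $-p_{N+1}(\lambda)/(J_1\cdots J_N)$ --- is precisely the standard treatment of finite Jacobi matrices that the cited reference supplies, with the index bookkeeping handled correctly.
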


\begin{corollary}
\label{coroRadialeigenvectors}
Let $\lambda  \in  \sigma(\textbf{J})$. Then the corresponding $\hamilton$-eigenvector is given $P^{\ast}v_{\lambda}$, where $v_{\lambda}$  is defined in (\ref{jacobiEigenvector}). 
\end{corollary}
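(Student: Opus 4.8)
The plan is to combine the two preceding results directly. Corollary \ref{coroRadialeigenvectors} asks to identify, for each eigenvalue $\lambda \in \sigma(\jacobi)$, an explicit $\hamilton$-eigenvector, and essentially all the work has already been done: Lemma \ref{lemmaLifteigenvector} tells us that if $v_\lambda \in L^2(D_N)$ is a $\jacobi$-eigenvector for $\lambda$, then $P^\ast v_\lambda$ is an $\hamilton$-eigenvector for the same $\lambda$, while Proposition \ref{propRadialeigenvectors} supplies the closed form \eqref{jacobiEigenvector} for $v_\lambda$ in terms of the orthogonal polynomials $p_k$. So the corollary is simply the composition of these two facts.

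First I would invoke Proposition \ref{propRadialeigenvectors} to assert that for $\lambda \in \sigma(\jacobi)$ the vector $v_\lambda$ given in \eqref{jacobiEigenvector} is a genuine eigenvector, i.e. $\jacobi v_\lambda = \lambda v_\lambda$. This is where the structure of the tridiagonal matrix \eqref{jacobiMatrix} and the recurrence \eqref{recurrence relations} enter: one checks that the components $p_{k}(\lambda)/(J_1\cdots J_k)$ satisfy exactly the three-term recursion encoded by $\jacobi$, and the boundary/termination condition at the last row is precisely $p_{N+1}(\lambda)=0$, which holds because $\lambda$ is a zero of the characteristic polynomial. Since the excerpt permits me to assume Proposition \ref{propRadialeigenvectors}, I would not reprove this but cite it.

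Next I would apply Lemma \ref{lemmaLifteigenvector} with this particular $v_\lambda$. The lemma's proof shows $\hamilton P^\ast v_\lambda = \lambda\, P^\ast v_\lambda$, using $P^\ast \jacobi = P^\ast P \hamilton P^\ast = \mathit{Proj}\,\hamilton P^\ast$ together with the fact that $\hamilton P^\ast v_\lambda \in L^2_{rad}(G)$, so that $\mathit{Proj}$ acts as the identity on it. Thus $P^\ast v_\lambda$ is an $\hamilton$-eigenvector for $\lambda$, which is exactly the claim. I would remark that $P^\ast v_\lambda$ is a radial eigenvector, consistent with the discussion following Lemma \ref{lemmaLifteigenvector}, and that by the explicit form of $P^\ast$ its value on a vertex $x$ is the $\PI_A(x)$-th component of $v_\lambda$.

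Honestly, there is no serious obstacle here, since this is a corollary in the literal sense: the only thing to verify is that the two cited results are being applied to compatible objects, namely that the $v_\lambda$ of \eqref{jacobiEigenvector} is indeed the eigenvector whose image under $P^\ast$ the lemma lifts. The single point requiring a word of care is nondegeneracy—one should note $v_\lambda \neq 0$ (its first component is $p_0(\lambda)=1$) so that $P^\ast v_\lambda \neq 0$ and genuinely qualifies as an eigenvector rather than the trivial zero vector. The proof is therefore one or two sentences, and I would keep it correspondingly short.
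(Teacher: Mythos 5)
Your proposal is correct and matches the paper's reasoning exactly: the corollary is stated there without proof precisely because it is the immediate composition of Lemma \ref{lemmaLifteigenvector} (lifting a $\jacobi$-eigenvector via $P^{\ast}$) with Proposition \ref{propRadialeigenvectors} (the explicit polynomial form \eqref{jacobiEigenvector} of $v_{\lambda}$). Your added remark on nondegeneracy, that $p_0(\lambda)=1$ forces $v_{\lambda}\neq 0$ and hence $P^{\ast}v_{\lambda}\neq 0$ (e.g.\ since $PP^{\ast}=Id_{D_N}$ makes $P^{\ast}$ injective), is a small but sound refinement the paper leaves implicit.
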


\subsection{Lifting-\&-Gluing Lemma}
\begin{figure}[htb]
\centering
\resizebox{13cm}{!}{\begin{tikzpicture}

%\tikzset{vertex_c/.style={circle,thick,draw=black!100, minimum size=5mm}} 
%\tikzset{edge_c/.style = {draw=black, thick}}
%\tikzset{vertex_c/.style={circle,inner sep=0,minimum size=5.0pt}} 
%\tikzset{vertex_c/.style={circle,inner sep=10 , minimum size=10mm}} 

\tikzset{vertex_c/.style={circle,inner sep=0,minimum size=5.0pt}} 
\tikzset{edge_c/.style = {draw=black, thick}}

\node[draw=none] at (-9.0, 1.2){\scalebox{1.5}{$D_4 \times \{ w_3\}$}};

\node[draw=none] at (-9.0, -0.8){\scalebox{1.5}{$D_4 \times \{ w_2\}$}};

\node[draw=none] at (-9.0, -2.8){\scalebox{1.5}{$D_4 \times \{ w_1\}$}};

\node[draw=none] at (2.2, 1.5){\scalebox{2.0}{$G_D$}};

% vertices

\node[vertex_c] (1w1) at  (4.5,-3) {\scalebox{1.5}{$1w_1$}};
\node[vertex_c] (2w1) at  (6.0,-3) {\scalebox{1.5}{$2w_1$}};
\node[vertex_c] (3w1) at  (7.5,-3) {\scalebox{1.5}{$3w_1$}};

% edges

\draw[edge_c] (1w1) to (2w1);
\draw[edge_c] (2w1) to (3w1);
%\draw[edge_c] (3w1) to (4w1);

% vertices
\node[vertex_c] (0w2) at  (3.0,-1) {\scalebox{1.5}{$0$}};
\node[vertex_c] (1w2) at  (4.5,-1) {\scalebox{1.5}{$1w_2$}};
\node[vertex_c] (2w2) at  (6.0,-1) {\scalebox{1.5}{$2w_2$}};
\node[vertex_c] (3w2) at  (7.5,-1) {\scalebox{1.5}{$3w_2$}};
\node[vertex_c] (4w2) at  (9.0,-1) {\scalebox{1.5}{$4$}};
% edges
\draw[edge_c] (0w2) to (1w1);
\draw[edge_c] (4w2) to (3w1);

\draw[edge_c] (0w2) to (1w2);
\draw[edge_c] (1w2) to (2w2);
\draw[edge_c] (2w2) to (3w2);
\draw[edge_c] (3w2) to (4w2);
%

% vertices
%\node[vertex_c] (0w3) at  (3.0,1) {0};
\node[vertex_c] (1w3) at  (4.5,1) {\scalebox{1.5}{$1w_3$}};
\node[vertex_c] (2w3) at  (6.0,1)  {\scalebox{1.5}{$2w_3$}};
\node[vertex_c] (3w3) at  (7.5,1)  {\scalebox{1.5}{$3w_3$}};
%\node[vertex_c] (4w3) at  (9.0,1) {4};
% edges
%\draw[edge_c] (0w3) to (1w3);
\draw[edge_c] (1w3) to (2w3);
\draw[edge_c] (2w3) to (3w3);
%\draw[edge_c] (3w3) to (4w3);
%
\draw[edge_c] (0w2) to (1w3);

\draw[edge_c] (4w2) to (3w3);
%%%%%%%%%%%%%%%%%%%%%%%%%%%%%%%%%%%%%%%%%%%%%%%%%%%%%

% vertices
\node[vertex_c] (0w4) at  (-7.0,-3) {\scalebox{1.5}{$0$}};
\node[vertex_c] (1w4) at  (-5.5,-3) {\scalebox{1.5}{$1$}};
\node[vertex_c] (2w4) at  (-4.0,-3) {\scalebox{1.5}{$2$}};
\node[vertex_c] (3w4) at  (-2.5,-3) {\scalebox{1.5}{$3$}};
\node[vertex_c] (4w4) at  (-1.0,-3) {\scalebox{1.5}{$4$}};
% edges
\draw[edge_c] (0w4) to (1w4);
\draw[edge_c] (1w4) to (2w4);
\draw[edge_c] (2w4) to (3w4);
\draw[edge_c] (3w4) to (4w4);
%

% vertices
\node[vertex_c] (0w5) at  (-7.0,-1) {\scalebox{1.5}{$0$}};
\node[vertex_c] (1w5) at  (-5.5,-1) {\scalebox{1.5}{$1$}};
\node[vertex_c] (2w5) at  (-4.0,-1) {\scalebox{1.5}{$2$}};
\node[vertex_c] (3w5) at  (-2.5,-1) {\scalebox{1.5}{$3$}};
\node[vertex_c] (4w5) at  (-1.0,-1) {\scalebox{1.5}{$4$}};
% edges
\draw[edge_c] (0w5) to (1w5);
\draw[edge_c] (1w5) to (2w5);
\draw[edge_c] (2w5) to (3w5);
\draw[edge_c] (3w5) to (4w5);
%

% vertices
\node[vertex_c] (0w6) at  (-7.0,1) {\scalebox{1.5}{$0$}};
\node[vertex_c] (1w6) at  (-5.5,1) {\scalebox{1.5}{$1$}};
\node[vertex_c] (2w6) at  (-4.0,1) {\scalebox{1.5}{$2$}};
\node[vertex_c] (3w6) at  (-2.5,1) {\scalebox{1.5}{$3$}};
\node[vertex_c] (4w6) at  (-1.0,1) {\scalebox{1.5}{$4$}};
% edges
\draw[edge_c] (0w6) to (1w6);
\draw[edge_c] (1w6) to (2w6);
\draw[edge_c] (2w6) to (3w6);
\draw[edge_c] (3w6) to (4w6);

\end{tikzpicture}}
	\caption{ (Left) Three copies of the $1D$ chain $D_4$ with $V(D_4) = \{0, \dots , 4 \}$. The $i$-th copy is denoted by $D_4 \times \{ w_i\}$, where $w_i$ is a letter in the alphabet $W=\{w_1,w_2,w_3\}$. (Right) The graph $G_D$ is constructed by gluing the three copies at the boundary points. Another way of saying this is that the graph $G_D$ is made up of three branches, the $w_1$-branch, $w_2$-branch and $w_3$-branch.}
	\label{fig:liftBranchingLemma1}
\end{figure}
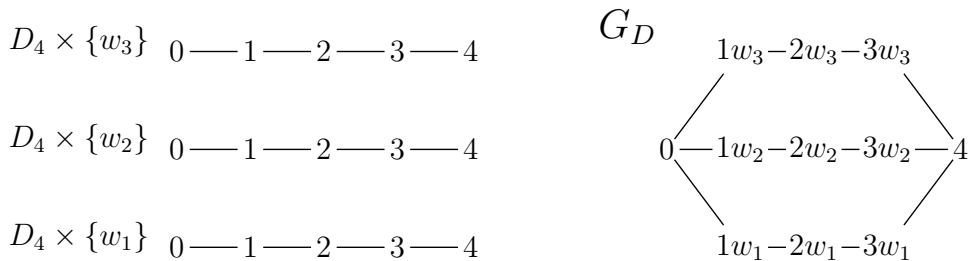
     {In this section we will prove a lemma that is essential for the reminder of the paper.} We consider a $1D$ chain $D_{N} $ equipped with  a Jacobi matrix $\jacobi$. When Dirichlet boundary conditions are imposed, we write $\jacobi^D$ for the Jacobi matrix. For a given $k \in \nn$, $k \geq 2$ we define $G_{D}$ to be the graph that is constructed by taking $k$ copies of $D_{N}$ and gluing their boundary vertices together as shown in Figure
\ref{fig:liftBranchingLemma1}. To distinguish between the copies, we use the following notation: given a $k$-letter alphabet $\{w_1, \dots, w_k\}$, we denote the $i$-th copy of $D_N$ 
by $D_N \times \{ w_i\}$ and refer to the associated subgraph in $G_D$ as the \textit{$w_i$-branch} of $G_D$.
The graph $G_D$ satisfies the assumptions \ref{graphAssumptions} with respect  to $A=\{0\}$
and $D_N$ is the auxiliary $1D$ chain. 
%Moreover, \textcolor{red}{the Jacobi matrix $\jacobi$ provides the ingredient needed in Proposition \ref{hamiltonianMatrix} to lift and define a Hamiltonian $\hamilton$ on $G_D$}. 
     {The following result is one of the key ingredients in our construction.}
\begin{lemma}[Lifting-\&-Gluing Lemma]
Let $\lambda \in \sigma(\jacobi^D)$ and $v^D_{\lambda}$ be the corresponding $\jacobi^D$-eigenvector. We define $v_{\lambda}$ to be the vector on $G_D$ that coincides with $v^D_{\lambda} $ on a $w_i$-branch and coincides with $-v^D_{\lambda} $ on another branch, say $w_j$-branch, for some $j \neq i$, i.e.,
\begin{equation}
    \label{eigenvectExtension}
    v_{\lambda} =
  \begin{cases}
    v^D_{\lambda}       & \quad \quad \text{on the $w_i$-branch  } \\
    -v^D_{\lambda}  & \quad \quad \text{on the $w_j$-branch } 
    \\
    0       & \quad \quad \text{elsewhere } 
  \end{cases}
\end{equation}
Then $\lambda \in \sigma(\hamilton)$ and $v_{\lambda}$ is an  $\hamilton$-eigenvector corresponding to $\lambda$. Moreover, $v_{\lambda} \in (L^2_{rad}(G_D))^{\bot}$.
\label{lemma: mainLemmaLifttoBranches}
\end{lemma}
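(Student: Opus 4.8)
The plan is to verify directly that $v_{\lambda}$ is an $\hamilton$-eigenvector for the eigenvalue $\lambda$ by checking the equation $(\hamilton v_{\lambda})(x)=\lambda v_{\lambda}(x)$ vertex by vertex, exploiting the path structure of each branch together with the sign flip built into \eqref{eigenvectExtension}. First I would record the geometry: with respect to $A=\{0\}$ the layer $\PI_A^{-1}(n)$ consists of the $k$ interior vertices $\{nw_1,\dots,nw_k\}$ for $1\le n\le N-1$, while $\PI_A^{-1}(0)=\{0\}$ and $\PI_A^{-1}(N)=\{N\}$ are the single glued boundary vertices shared by all branches. Since each branch is a path there are no same-layer edges, so $\mathbf{deg}_{0}\equiv 0$; along a branch $\mathbf{deg}_{\pm}=1$, whereas $\mathbf{deg}_{+}(0)=\mathbf{deg}_{-}(N)=k$. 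Feeding these into Proposition \ref{hamiltonianMatrix} gives the matrix elements of $\hamilton$ in terms of the entries $B_\bullet,J_\bullet$ of $\jacobi$; the point to highlight is that every edge from $0$ into layer $1$ carries the same weight $J_1/k$, and symmetrically $J_N/k$ at $N$.

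Next I would check the eigenvalue equation on the interiors of the two active branches. Because $\jacobi^D$ is the Dirichlet (interior) block, its eigenvector $v^D_{\lambda}$, extended by zero at the endpoints, satisfies $(\jacobi v^D_{\lambda})(n)=\lambda v^D_{\lambda}(n)$ for every interior index $1\le n\le N-1$, the boundary terms dropping out precisely because $v^D_{\lambda}(0)=v^D_{\lambda}(N)=0$. Restricted to the $w_i$-branch, $v_{\lambda}$ agrees with this extension, and the local three-term action of $\hamilton$ at a vertex $nw_i$ reproduces $(\jacobi v^D_{\lambda})(n)$ exactly (including at $n=1$ and $n=N-1$, where the neighbour is a glued endpoint carrying value $0$). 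Hence $(\hamilton v_{\lambda})(nw_i)=\lambda v_{\lambda}(nw_i)$, and likewise on the $w_j$-branch by linearity. On every inactive branch $v_{\lambda}$ and all of its neighbours vanish, so the equation reads $0=0$ there.

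The crux, and the step I expect to be the main obstacle, is the two glued boundary vertices $0$ and $N$, where $v_{\lambda}$ itself vanishes but has nonzero neighbours on the active branches. Here I would write $(\hamilton v_{\lambda})(0)$ as a sum over the neighbours $1w_1,\dots,1w_k$, all weighted by the common coupling $J_1/k$; the $w_i$- and $w_j$-contributions are $\tfrac{J_1}{k}v^D_{\lambda}(1)$ and $-\tfrac{J_1}{k}v^D_{\lambda}(1)$, which cancel, while the remaining branches contribute zero. Thus $(\hamilton v_{\lambda})(0)=0=\lambda v_{\lambda}(0)$, and the identical argument works at $N$ with weight $J_N/k$. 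This cancellation is exactly what forces the opposite signs in \eqref{eigenvectExtension}, and it is where the radial-coupling part of Assumption \ref{hamiltonAssumptions} (equality of matrix elements across a layer) is essential. Combining the three cases yields $\hamilton v_{\lambda}=\lambda v_{\lambda}$ with $v_{\lambda}\neq 0$, so $\lambda\in\sigma(\hamilton)$.

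Finally, for the orthogonality $v_{\lambda}\in (L^2_{rad}(G_D))^{\bot}$, I would invoke Lemma \ref{usefulProp1}(2), which identifies $(L^2_{rad}(G_D))^{\bot}$ with $\mathrm{Ker}\,P$, and simply compute $P v_{\lambda}(n)=\tfrac{1}{k}\sum_{l}v_{\lambda}(nw_l)=\tfrac{1}{k}\bigl(v^D_{\lambda}(n)-v^D_{\lambda}(n)\bigr)=0$ on each interior layer, with $P v_{\lambda}(0)=P v_{\lambda}(N)=0$ trivially since $v_{\lambda}$ vanishes there. Hence $P v_{\lambda}=0$, i.e. $v_{\lambda}\perp L^2_{rad}(G_D)$, completing the argument.
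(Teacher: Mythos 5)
Your proof is correct, and it is in fact more than the paper itself provides: the paper states the Lifting-\&-Gluing Lemma without any proof, passing directly to the informal paraphrase and to Corollary \ref{cor: LiftingAndBranchingCorollary}. Your vertex-by-vertex verification --- the three-term Jacobi recursion on the two active branches, the sign cancellation at the two glued vertices (which hinges on the equal couplings $J_1/k$ and $J_N/k$ supplied by Proposition \ref{hamiltonianMatrix} and the radial-coupling assumption), and the identification $(L^2_{rad}(G_D))^{\bot}=\mathrm{Ker}\,P$ from Lemma \ref{usefulProp1} --- is precisely the argument the paper leaves implicit, and it is sound.
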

In other words, if we lift a  $\jacobi^D$-eigenvector (Dirichlet eigenvector of $\jacobi$) to a branch and lift the same vector with the opposite sign to another branch, and if after that we  assign zero to the remaining branches and glue them together, then this will result in an eigenvector of $\hamilton$ on $G_D$ corresponding to the same eigenvalue. An immediate consequence of Lemma \ref{lemma: mainLemmaLifttoBranches} is that the spectrum of $\hamilton$
is determined by the spectra of $\jacobi$ and $\jacobi^D$.
\begin{corollary}
\label{cor: LiftingAndBranchingCorollary}
$\sigma(\hamilton)=\sigma(\jacobi)\cup \sigma(\jacobi^D)$
\end{corollary}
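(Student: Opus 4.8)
The plan is to prove the two inclusions separately: the inclusion $\sigma(\jacobi)\cup\sigma(\jacobi^D)\subseteq\sigma(\hamilton)$ is essentially already in hand from the preceding results, while the reverse inclusion $\sigma(\hamilton)\subseteq\sigma(\jacobi)\cup\sigma(\jacobi^D)$ requires a dimension count. For the easy direction, Lemma \ref{lemmaLifteigenvector} gives $\sigma(\jacobi)\subseteq\sigma(\hamilton)$ through the radial eigenvectors $P^{\ast}v_{\lambda}$, while the Lifting-\&-Gluing Lemma \ref{lemma: mainLemmaLifttoBranches} gives $\sigma(\jacobi^D)\subseteq\sigma(\hamilton)$ through the sign-alternating glued eigenvectors living in $(L^2_{rad}(G_D))^{\bot}$. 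Together these yield one inclusion immediately.

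For the reverse inclusion I would exploit the orthogonal decomposition $L^2(G_D)=L^2_{rad}(G_D)\oplus(L^2_{rad}(G_D))^{\bot}$. By Lemma \ref{invariantSubspaceLemma} the radial subspace is $\hamilton$-invariant, and since $\hamilton$ is self-adjoint its orthogonal complement is invariant as well, so $\sigma(\hamilton)$ is the union of the spectra of the two restrictions. On the radial part, the map $P^{\ast}$ is an isometry of $L^2(D_N)$ onto $L^2_{rad}(G_D)$ (using $PP^{\ast}=Id_{D_N}$ and $P^{\ast}P=Proj$ from Lemma \ref{usefulProp1}) which intertwines $\jacobi$ with $\hamilton|_{L^2_{rad}(G_D)}$; hence this restriction is unitarily equivalent to $\jacobi$ and contributes exactly $\sigma(\jacobi)$.

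It then remains to show that the restriction of $\hamilton$ to $(L^2_{rad}(G_D))^{\bot}$ contributes nothing beyond $\sigma(\jacobi^D)$, and here I would argue by matching dimensions. Writing $k$ for the number of branches, $G_D$ has $2+k(N-1)$ vertices (the two shared boundary vertices together with $k$ copies of the $N-1$ interior vertices), while $\dim L^2_{rad}(G_D)=N+1$; hence $\dim(L^2_{rad}(G_D))^{\bot}=(k-1)(N-1)$. On the other hand $\jacobi^D$, the Jacobi matrix with Dirichlet conditions imposed at both endpoints, has size $N-1$ and, being a Jacobi matrix, possesses $N-1$ simple eigenvalues. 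For a fixed $\lambda\in\sigma(\jacobi^D)$ the construction of Lemma \ref{lemma: mainLemmaLifttoBranches} assigns to each branch $w_m$ a coefficient $c_m$ and places $c_m v^D_{\lambda}$ on that branch; since $v^D_{\lambda}$ vanishes at the shared vertices $0$ and $N$, the eigenvalue equation at those vertices collapses to the single scalar condition $\sum_m c_m=0$, which is also exactly the condition for the glued vector to be orthogonal to $L^2_{rad}(G_D)$. Thus each Dirichlet eigenvalue contributes a $(k-1)$-dimensional eigenspace inside $(L^2_{rad}(G_D))^{\bot}$, and these spaces are mutually orthogonal for distinct eigenvalues. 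The total dimension $(k-1)(N-1)$ then matches $\dim(L^2_{rad}(G_D))^{\bot}$ exactly, so these eigenvectors exhaust the complement and $\hamilton|_{(L^2_{rad}(G_D))^{\bot}}$ has spectrum precisely $\sigma(\jacobi^D)$. Combining the two restrictions gives the claimed identity.

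The main obstacle, and the step I would handle most carefully, is verifying that the Lifting-\&-Gluing eigenvectors genuinely fill $(L^2_{rad}(G_D))^{\bot}$ rather than merely sitting inside it. One must confirm that the shared-vertex constraint really does collapse to the single equation $\sum_m c_m=0$ (which uses the standard nonvanishing of a Jacobi eigenvector at the first and last interior nodes) and that the two-branch vectors of the lemma, of the form $e_i-e_j$ in coefficient space, span the full $(k-1)$-dimensional zero-sum space. Once the per-eigenvalue dimension is pinned at $k-1$, the global dimension count closes the argument and rules out any spurious eigenvalue hiding in the orthogonal complement.
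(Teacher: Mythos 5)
Your proof is correct and takes essentially the same route as the paper's: both directions rest on Lemma \ref{lemmaLifteigenvector} (radial eigenvectors giving $\sigma(\jacobi)\subseteq\sigma(\hamilton)$) and the Lifting-\&-Gluing Lemma \ref{lemma: mainLemmaLifttoBranches} (glued Dirichlet eigenvectors in $(L^2_{rad}(G_D))^{\bot}$), and completeness is then settled by the same dimension count, $N+1$ radial plus $(k-1)(N-1)$ glued eigenvectors against the $2+k(N-1)$ vertices of $G_D$. The only difference is that you make explicit what the paper leaves implicit — the unitary equivalence $\hamilton|_{L^2_{rad}(G_D)}\cong\jacobi$ via $P^{\ast}$, the collapse of the boundary constraint to $\sum_m c_m=0$, and the resulting $(k-1)$-dimensional per-eigenvalue eigenspaces — which is a welcome tightening of the exhaustion step rather than a different argument.
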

\begin{proof}
The radial eigenvectors are constructed according to Lemma \ref{lemmaLifteigenvector}, which implies $\sigma(\jacobi) \subset \sigma(\hamilton)$. The remaining eigenvectors are elements of $ (L^2_{rad}(G_D))^{\bot}$ and constructed by the Lifting-\&-Gluing Lemma \ref{lemma: mainLemmaLifttoBranches}. 
Note that for the $1D$ chain  $D_{N}$
the Jacobi matrices $\jacobi$ and $\jacobi^D$ have $N+1$ and $N-1$ eigenvectors, respectively. Each $\jacobi$-eigenvector is lifted to a radial $\hamilton$-eigenvector on $G_D$ and each $\jacobi^D$-eigenvector generates $k-1$ different $\hamilton$-eigenvectors on $G_D$. Note that the graph $G_D$ has $(N+1)+(N-1)(k-1)$ vertices.
\end{proof}
The observation in Corollary \ref{cor: LiftingAndBranchingCorollary}
is the first step in the approach that will be further developed in the next section. Indeed, we are going to show that the $\hamilton$-spectra on a broad class of graphs are determined by the spectra of a collection of Jacobi matrices.

\section{Projective limit constructions}
The following definitions are roughly speaking a discrete version of \cite[Definition 2.1, page 3]{ST19}.
\begin{definition}
Let $k \geq 2$. We refer to a $k$-letter alphabet $\{w_1, \dots, w_k\}$ as a \textit{vertical multiplier space}. A word of length $m$ is an element of the $m$-fold product $W^m = W_1 \times \dots \times W_m$ for some vertical multiplier spaces $W_1, \dots, W_m$. 
For a word $w \in W^m$, we write shortly $w = w_1 \dots w_m$ instead of $w = (w_1, \dots, w_m)$.
\end{definition}
Note that the vertical multiplier spaces $W_1, \dots, W_m$ are not assumed to have the same number of letters.
\begin{definition}
\label{def:horizontalBaseSpace}
We initialize the graph $G_0 = (V(G_0), E(G_0))$ to be a  $1D$ chain $D_N$ for some $N \geq 1$. We call $G_0$ the \textit{horizontal base space}.
\end{definition}
\begin{remark}
The assumptions on the horizontal base space in \cite{ST19} are very general (local compact second countable Hausdorff space). In this sense, Definition \ref{def:horizontalBaseSpace} represents a discretization of a specific case.
\end{remark}
\begin{definition}[cf. \cite{MR3535868}]
\label{projeLimConDef}
Given a sequence of vertical multiplier spaces $\{W_i \}_{i \geq 1}$ and a horizontal base space
$G_0=D_N$. We define a sequence of graphs $\{ G_i \}_{i \geq 0}$ inductively.
\begin{enumerate}
\item Suppose $G_{i-1}=(V(G_{i-1}),E(G_{i-1}))$ is given for some integer $i \geq 1$.
\item Choose a subgraph $B_i = (V(B_i),E(B_i))$ of $G_{i-1}$, such that $G_{i-1} \backslash B_i$ is a collection of $1D$ chains. Note $B_i$ may be an  edgeless or a disconnected subgraph. 
\item For a $1D$ chain $D$ in $G_{i-1} \backslash B_i$, we set $G_D$ to be the graph that is constructed by taking the copies $D \times \{ w_k\}$ for $w_k \in W_i$ and gluing their boundary vertices together as shown in Figure \ref{fig:liftBranchingLemma1}.
\item We construct $G_{i}$ by replacing each $1D$ chain $D$ in $G_{i-1} \backslash B_i$ with the corresponding $G_D$. 
\end{enumerate}
For convenience, we set $ V(G_i) = [(V(G_{i-1}) \setminus V(B_i)) \times W_i ]\bigcup V(B_i) $ for the set of vertices of $G_i$ and $ E(G_i) = [(E(G_{i-1}) \setminus E(B_i)) \times W_i ]\bigcup E(B_i) $ for the set of edges of $G_i$, see Figure \ref{fig:hamblyKumagaiIntermediateStep} (Left).
\end{definition}
%%%%%%%%%%%%%%%%%%%%%%%%
\begin{figure}[htbp]
\begin{tabular}{C{.49\textwidth}C{.49\textwidth}} 
\subfigure{
  \resizebox{!}{!}{
   \begin{tikzpicture}
\tikzset{vertex/.style={circle,inner sep=0,minimum size=5.0pt}} 
\tikzset{edge/.style = {draw=black, thick}}

\node[draw=none][blue] at (-2.5,-2.5){\scalebox{1}{$G_0 \times \{w_2\}$}};

\node[draw=none][gray] at (-2.5,-3.5){\scalebox{1}{$G_0 \times \{w_1\}$}};

\node[draw=none][blue] at (-2.5,-1.2){\scalebox{1.1}{\scalebox{1.4}{$G_1$}}};

%%%%%%%%%%%%%%%%%%%%%%%%%%%%%%%%%%%%%%%%%%

% vertices
\node[vertex] (0) at  (-1.0,-1.25) {\scalebox{0.8}{0}};
\node[vertex] (1) at  (3,-1.25) {\scalebox{0.8}{$4$}};
\node[vertex] (2) at  (1,-1.75) {\scalebox{0.8}{$2w_1$}};
\node[vertex] (3) at  (1,-0.75) {\scalebox{0.8}{$2w_2$}};
\node[vertex] (5) at  (0,-1.5) {\scalebox{0.8}{$1w_1$}};
\node[vertex] (6) at  (0,-1.0) {\scalebox{0.8}{$1w_2$}};
\node[vertex] (9) at  (2,-1.5) {\scalebox{0.8}{$3w_1$}};
\node[vertex] (10) at  (2,-1.0) {\scalebox{0.8}{$3w_2$}};
% edges

\draw[edge][gray] (0) to (5);
\draw[edge][blue] (0) to (6);
\draw[edge][gray] (1) to (9);
\draw[edge][blue] (1) to (10);
\draw[edge][gray] (2) to (5);
\draw[edge][gray] (2) to (9);
\draw[edge][blue] (3) to (6);
\draw[edge][blue] (3) to (10);

%%%%%%%%%%%%%%%%%%%%%%%%%%%%%%%%%%%%%%%%%%

% vertices
\node[vertex] (0d) at  (-1.0,-3.5) {\scalebox{0.8}{0}};
\node[vertex] (0u) at  (-1.0,-2.5) {\scalebox{0.8}{0}};
\node[vertex] (1u) at  (3,-2.5) {\scalebox{0.8}{$4$}};
\node[vertex] (1d) at  (3,-3.5) {\scalebox{0.8}{$4$}};
\node[vertex] (2) at  (1,-3.5) {\scalebox{0.8}{$2$}};
\node[vertex] (3) at  (1,-2.5) {\scalebox{0.8}{$2$}};
\node[vertex] (5) at  (0,-3.5) {\scalebox{0.8}{$1$}};
\node[vertex] (6) at  (0,-2.5) {\scalebox{0.8}{$1$}};
\node[vertex] (9) at  (2,-3.5) {\scalebox{0.8}{$3$}};
\node[vertex] (10) at  (2,-2.5) {\scalebox{0.8}{$3$}};
% edges

\draw[edge][gray] (0d) to (5);
\draw[edge][blue] (0u) to (6);
\draw[edge][gray] (1d) to (9);
\draw[edge][blue] (1u) to (10);
\draw[edge][gray] (2) to (5);
\draw[edge][gray] (2) to (9);
\draw[edge][blue] (3) to (6);
\draw[edge][blue] (3) to (10);

\end{tikzpicture}

  

    }
} &
%%%%%%%%%%%%%%%%%%%%%% 2
\subfigure{
\input{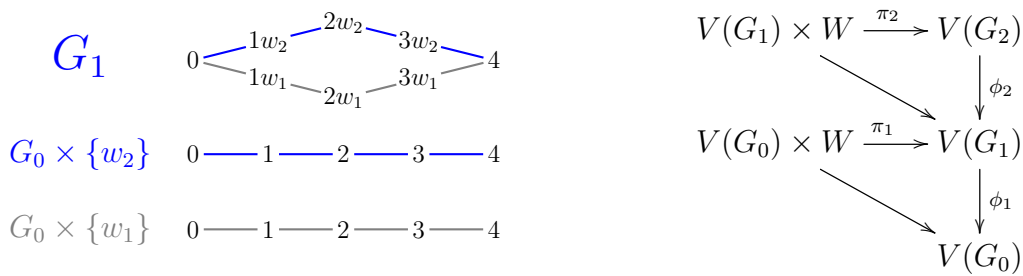}
  
}
\end{tabular}
\caption{ (Left) To construct the graph $G_1$, we initialize the horizontal base space $G_0$ to be the $1D$ chain $D_4$ with the vertices $\{0,\dots, 4\}$.      {We set $W=\{w_1,w_2\}$ to be the vertical multiplier space.} Then $G_1$ is constructed as in Definition \ref{projeLimConDef}, where we choose the subgraph $B_1$ such that $V(B_1)=\{0,4\}$ and $E(B_1)= \emptyset$. Note that the address assignments of the vertices described in Definition \ref{projeLimConDef} are shown on the graph of $G_1$. (Right) A diagram shows how the different mappings from Definition \ref{def:projectivelimitMappings} are related to each other. }
\label{fig:hamblyKumagaiIntermediateStep}
\end{figure}
\begin{definition}
\label{def:projectivelimitMappings}
Let $\{ G_i \}_{i \geq 0}$ be constructed as described in Definition \ref{projeLimConDef}. We define $\pi_{i}: V(G_{i-1})  \times W_i \longrightarrow V(G_{i})$ by
\begin{eqnarray*}
\pi_{i}(x,w) = \left\{ \begin{array}{ll}(x,w) & \text{ if } \ x \in V(G_{i-1}) \setminus V(B_i) \\ x & \text{ if } \ x \in V(B_i), \end{array}\right.
\end{eqnarray*}
and       {also define the mapping} $\phi_{i} : V(G_i) \rightarrow V(G_{i-1})$  by 
$$\left. \begin{array}{rl}\phi_{i}(x,g) =x &  \text{ if } \ \ x \in V(G_{i-1}) \setminus V(B_i) \\ \phi_{i}(x) =x &  \text{ if } \ \ x \in V(B_i). \end{array}\right.$$
\end{definition}
The following proposition shows that each graph in $\{ G_i \}_{i \geq 0}$ admits a natural transversal decomposition, where the horizontal base space $G_0$ is used as the common auxiliary $1D$ chain for the entire sequence $\{ G_i \}_{i \geq 0}$.
\begin{proposition}
\label{prop: assumptionsForGi}
 Let $\{ G_i \}_{i \geq 0}$ be constructed as described in Definition \ref{projeLimConDef}. Then for each $i \geq 1$, the graph $G_i$ can be transversally decomposed with respect to  $A_i = (\phi_i)^{-1} \circ \dots \circ (\phi_1)^{-1}(0) \subset V(G_i)$ so that \cref{graphAssumptions} holds.
Moreover, for $ \PI_{A_i}(x)=d(x;A_i)$, see Definition \ref{def: mappingForTransversalDecomp}), we have
$\PI_{A_i}^{-1}(x)=\phi_1 \circ \dots \circ \phi_i(x)$.
\end{proposition}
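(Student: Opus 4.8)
The plan is to induct on $i$, carrying along the joint statement ``$\PI_{A_i}=\Phi_i$ and Assumption~\ref{graphAssumptions} holds for $G_i$,'' where I abbreviate $\Phi_i:=\phi_1\circ\cdots\circ\phi_i\colon V(G_i)\to V(G_0)=\{0,\dots,N\}$. The base case $i=0$ is immediate: $G_0=D_N$, $A_0=\{0\}$, $d(n;\{0\})=n$, and the three degree maps are trivially constant on the singleton layers of a path. For the inductive step I first record two facts read directly off the construction. By the edge formula $E(G_i)=[(E(G_{i-1})\setminus E(B_i))\times W_i]\cup E(B_i)$, the map $\phi_i$ sends every edge of $G_i$ to an edge of $G_{i-1}$ without collapsing it, so $\phi_i$, and hence $\Phi_i$, is a graph morphism; and $A_i=\Phi_i^{-1}(0)=\phi_i^{-1}(\Phi_{i-1}^{-1}(0))=\phi_i^{-1}(A_{i-1})$.

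For the distance identity it suffices, using the inductive equality $\Phi_{i-1}=\PI_{A_{i-1}}$, to prove $d_{G_i}(x;A_i)=d_{G_{i-1}}(\phi_i(x);A_{i-1})$. The inequality ``$\ge$'' is immediate: pushing a $G_i$-geodesic from $x$ to $A_i$ forward by the morphism $\phi_i$ gives a walk of the same length from $\phi_i(x)$ into $\phi_i(A_i)\subseteq A_{i-1}$. For ``$\le$'' I would establish a walk-lifting property from the same edge formula: any edge $\{y,y'\}$ of $G_{i-1}$ lifts, from any prescribed endpoint $\tilde y\in\phi_i^{-1}(y)$, to an edge of $G_i$ lying over it --- if $y\notin V(B_i)$ one continues in the same branch, and if $y\in V(B_i)$ one is free to enter any branch. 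Iterating this, a $G_{i-1}$-geodesic from $\phi_i(x)$ to $A_{i-1}$ lifts, starting at $x$, to a $G_i$-walk of equal length landing in $\phi_i^{-1}(A_{i-1})=A_i$. Hence $d_{G_i}(x;A_i)=d_{G_{i-1}}(\phi_i(x);A_{i-1})=\Phi_{i-1}(\phi_i(x))=\Phi_i(x)$, so the transversal layers of $G_i$ are exactly the fibers $\Phi_i^{-1}(n)$ and $V(G_i)=\bigcup_{n=0}^N\Phi_i^{-1}(n)$.

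There remains Assumption~\ref{graphAssumptions}, the constancy of $\mathbf{deg}_{+},\mathbf{deg}_{-},\mathbf{deg}_{0}$ on each fiber $\Phi_i^{-1}(n)$. Since $\Phi_i$ is a graph morphism into the path $D_N$, the two endpoints of any edge of $G_i$ have $\Phi_i$-values differing by exactly $1$; hence no edge can join two vertices of a single fiber and $\mathbf{deg}_{0}\equiv 0$. For $\mathbf{deg}_{\pm}$ I would transport degrees across the branching. A branch vertex $x=(y,w)$ with $y\notin V(B_i)$ has its entire $G_i$-neighborhood carried bijectively, and layer-preservingly, by $\phi_i$ onto the $G_{i-1}$-neighborhood of $y$, so $\mathbf{deg}_{\pm}^{G_i}(x)=\mathbf{deg}_{\pm}^{G_{i-1}}(y)$, which is layer-constant by the inductive hypothesis; a glued vertex $x\in V(B_i)$ instead retains each $E(B_i)$-neighbor once but gains $|W_i|$ copies of every non-$B_i$ neighbor.

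The one place these two computations could clash is a fiber holding both a glued vertex and a branch vertex, where the counts differ by the factor $|W_i|$; ruling this out is, I expect, the main obstacle. The resolution is that the subgraph $B_i$ exposing the radial $1D$ chains is saturated for the $G_{i-1}$-layering --- it is a union of whole fibers $\Phi_{i-1}^{-1}(\cdot)$ (in the worked examples exactly the layers $\{0,N\}$, and then $\{0,2,N\}$). Under this compatibility each $G_i$-fiber is either entirely glued or entirely branched, the two formulas never meet, and layer-constancy in $G_i$ follows from that in $G_{i-1}$ together with the single uniform branching factor $|W_i|$. This closes the induction and yields both the transversal decomposition and the identity $\PI_{A_i}=\Phi_i=\phi_1\circ\cdots\circ\phi_i$.
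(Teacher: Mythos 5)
The first half of your argument is correct, and in fact more rigorous than the paper's own proof: the paper simply reads the identity $\PI_{A_i}(x)=\phi_1\circ\dots\circ\phi_i(x)$ off the vertex addresses $nw_1\dots w_k$ (``the radial coordinate $n$ gives the distance to $A_i$''), whereas you actually prove it, by pushing geodesics forward along $\phi_i$ and lifting walks back through the edge identification $E(G_i)=[(E(G_{i-1})\setminus E(B_i))\times W_i]\cup E(B_i)$. The observation that $\mathbf{deg}_{0}\equiv 0$ because $\phi_1\circ\dots\circ\phi_i$ carries every edge of $G_i$ to an edge of the path $D_N$ is also sound.

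The gap is in your final paragraph. You correctly isolate the crux --- a layer containing both a glued vertex of $B_i$ and branch vertices would have degrees differing by the factor $|W_i|$ --- but your resolution, that $B_i$ is automatically ``a union of whole fibers'' of $\phi_1\circ\dots\circ\phi_{i-1}$, is not a consequence of Definition \ref{projeLimConDef}: the only constraint there is that $G_{i-1}\setminus B_i$ be a collection of $1D$ chains. The paper itself refutes your claim immediately after the proposition: the graph $\tilde G_2$ of Figure \ref{fig:graphDoesNotSatAssump} is built exactly per Definition \ref{projeLimConDef} with $V(\tilde B_2)=\{0,2w_1\}$, which splits the layer $\{2w_1,2w_2\}$ of $G_1$, and the resulting graph violates Assumption \ref{graphAssumptions} --- indeed with precisely the clash you predicted, $\mathbf{deg}_{+}(2w_1)=2=|W_2|$ versus $\mathbf{deg}_{+}(2w_2w_2)=\mathbf{deg}_{+}(2w_2w_1)=1$ on the layer $\PI_0^{-1}(2)$. (Even vertex-saturation of $B_i$ would not suffice: one would additionally need membership of an edge in $E(B_i)$ to depend only on the pair of layers it joins, else two glued vertices in one layer could have right-degrees $1$ and $|W_i|$.) So the clause of Proposition \ref{prop: assumptionsForGi} concerning Assumption \ref{graphAssumptions} cannot be derived from the stated hypotheses at all. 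The paper does not attempt this: its proof establishes only the transversal decomposition and the identity $\PI_{A_i}=\phi_1\circ\dots\circ\phi_i$, and in Section \ref{sec:Main inductive result} the validity of Assumption \ref{graphAssumptions} for each $G_i$ is \emph{imposed} as a separate requirement (step 3) rather than deduced. Your instinct that this was ``the main obstacle'' was exactly right; the correct conclusion is that it is a genuine obstruction, to be added as a hypothesis, not a provable fact.
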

\begin{proof}
Note that a vertex in $G_i$ is denoted by $nw_1w_2 \dots w_k$, where $n \in \{0,\dots, N \}$ and $w_1w_2 \dots w_k \in W^k$. The word $w_1w_2 \dots w_k$ can be considered as a vertical coordinate which gives the address of the branch that contains this vertex. On the other hand, the integer $n$ can be considered as a radial coordinate, which gives the distance to $A_i = (\phi_i)^{-1} \circ \dots \circ (\phi_1)^{-1}(0)$. By Definition \ref{def:projectivelimitMappings} we have $\phi_1 \circ \dots \circ \phi_i(nw_1w_2 \dots w_k)=n$ and therefore, this implies $ \PI_{A_i}(x)=\phi_1 \circ \dots \circ \phi_i(x)$. Now
$G_0$ as a $1D$ chain, it admits a trivial transversal decomposition with respect to $\{0\}$ i.e., $V(G_0)=\PI_0^{-1}(0)\cup\PI_0^{-1}(1) \ldots \cup \PI_0^{-1}(N), \quad \PI_0^{-1}(n)=\{n\}$.
Similarly, $G_i$ admits a transversal decomposition with respect to $A_i$,
\begin{equation*}
V(G_i)=\PI_{A_i}^{-1}(0)\cup\PI_{A_i}^{-1}(1) \ldots \cup \PI_{A_i}^{-1}(N),
\end{equation*}
where for $x \in V(G_i)$, we have
$x \in \PI_{A_i}^{-1}(n) \ \Longleftrightarrow \  \phi_{1} \circ \dots \circ \phi_{i}(x) \in \PI_0^{-1}(n)$.
\end{proof}
Each graph in $\{ G_i \}_{i \geq 0}$ admits a natural transversal decomposition. One may wonder if these graphs also satisfy the graph assumptions \ref{graphAssumptions} with respect to this decomposition. The following example shows that this is not true in general.
\begin{example}
Let the graph $\tilde{G}_2$
be constructed as described in Figure \ref{fig:graphDoesNotSatAssump}. $\tilde{G}_2$ does not satisfy the graph assumptions \ref{graphAssumptions}
 as the mappings $\mathbf{deg}_{+}$ and $\mathbf{deg}_{-}$ are NOT constant on the transversal layer $\PI^{-1}_0(2)=\{2w_2w_2,2w_1,2w_2w_1\}$: 
\begin{eqnarray*}
2=\mathbf{deg}_{+}(2w_1) \neq \mathbf{deg}_{+}(2w_2w_2)=\mathbf{deg}_{+}(2w_2w_1)=1, \\ 2=\mathbf{deg}_{-}(2w_1) \neq \mathbf{deg}_{-}(2w_2w_2)=\mathbf{deg}_{-}(2w_2w_1)=1
\end{eqnarray*}
\end{example}

%%%%%%%%%%%%%%%%%%%%%%%%%%%%%%%%%%%%
\begin{figure}[htbp]
  \resizebox{!}{!}{
  \begin{tikzpicture}
\tikzset{vertex/.style={circle,inner sep=0,minimum size=5.0pt}} 
\tikzset{edge/.style = {draw=black, thick}}

\node[draw=none] at (5,-0.9){\scalebox{1}{\scalebox{1.5}{$G_2$}}};

\node[draw=none] at (-3,-0.9){\scalebox{1}{\scalebox{1.5}{$\tilde{G}_2$}}};

%%%%%%%%%%%%%%%%%%%%%%%%%%%%%%%%%%%%%%%%%%

% vertices
\node[vertex] (0) at  (-3.0,-2) {\scalebox{0.8}{0}};
\node[vertex] (1) at  (3,-1.5) {\scalebox{0.8}{$4w_2$}};
\node[vertex] (2) at  (0,-2) {\scalebox{0.8}{$2w_1$}};
\node[vertex] (3) at  (0,-1) {\scalebox{0.8}{$2w_2w_2$}};
\node[vertex] (5) at  (-1.5,-1.75) {\scalebox{0.8}{$1w_1w_2$}};
\node[vertex] (6) at  (-1.5,-1.25) {\scalebox{0.8}{$1w_2w_2$}};
\node[vertex] (9) at  (1.5,-1.75) {\scalebox{0.8}{$3w_1w_2$}};
\node[vertex] (10) at  (1.5,-1.25) {\scalebox{0.8}{$3w_2w_2$}};
% edges

\draw[edge][blue] (0) to (5);
\draw[edge][blue] (0) to (6);
\draw[edge][blue] (1) to (9);
\draw[edge][blue] (1) to (10);
\draw[edge][blue] (2) to (5);
\draw[edge][blue] (2) to (9);
\draw[edge][blue] (3) to (6);
\draw[edge][blue] (3) to (10);

%%%%%%%%%%%%%%%%%%%%%%%%%%%%%%%%%%%%%%%%%%

% vertices
\node[vertex] (0) at  (-3.0,-2.0) {\scalebox{0.8}{0}};
\node[vertex] (1) at  (3,-2.5) {\scalebox{0.8}{$4w_1$}};
\node[vertex] (2) at  (0,-3.0) {\scalebox{0.8}{$2w_2w_1$}};
\node[vertex] (3) at  (0,-2.0) {\scalebox{0.8}{$2w_1$}};
\node[vertex] (5) at  (-1.5,-2.75) {\scalebox{0.8}{$1w_2w_1$}};
\node[vertex] (6) at  (-1.5,-2.25) {\scalebox{0.8}{$1w_1w_1$}};
\node[vertex] (9) at  (1.5,-2.75) {\scalebox{0.8}{$3w_2w_1$}};
\node[vertex] (10) at  (1.5,-2.25) {\scalebox{0.8}{$3w_1w_1$}};
% edges

\draw[edge][gray] (0) to (5);
\draw[edge][gray] (0) to (6);
\draw[edge][gray] (1) to (9);
\draw[edge][gray] (1) to (10);
\draw[edge][gray] (2) to (5);
\draw[edge][gray] (2) to (9);
\draw[edge][gray] (3) to (6);
\draw[edge][gray] (3) to (10);

% vertices
\node[vertex] (0) at  (5,-2.0) {\scalebox{0.8}{$0$}};
\node[vertex] (1) at  (11,-2.0) {\scalebox{0.8}{$4$}};
\node[vertex] (2) at  (8,-3.0) {\scalebox{0.8}{$2w_1$}};
\node[vertex] (3) at  (8,-1.0) {\scalebox{0.8}{$2w_2$}};
\node[vertex] (4) at  (6.5,-2.75) {\scalebox{0.8}{$1w_1w_1$}};
\node[vertex] (5) at  (6.5,-2.25) {\scalebox{0.8}{$1w_1w_2$}};
\node[vertex] (6) at  (6.5,-1.75) {\scalebox{0.8}{$1w_2w_1$}};
\node[vertex] (7) at  (6.5,-1.25) {\scalebox{0.8}{$1w_2w_2$}};
\node[vertex] (8) at  (9.5,-2.75) {\scalebox{0.8}{$3w_1w_1$}};
\node[vertex] (9) at  (9.5,-2.25) {\scalebox{0.8}{$3w_1w_2$}};
\node[vertex] (10) at  (9.5,-1.75) {\scalebox{0.8}{$3w_2w_1$}};
\node[vertex] (11) at  (9.5,-1.25) {\scalebox{0.8}{$3w_2w_2$}};
% edges
\draw[edge][gray] (0) to (4);
\draw[edge][blue] (0) to (5);
\draw[edge][gray] (0) to (6);
\draw[edge][blue] (0) to (7);
\draw[edge][gray] (1) to (8);
\draw[edge][blue] (1) to (9);
\draw[edge][gray] (1) to (10);
\draw[edge][blue] (1) to (11);
\draw[edge][gray] (2) to (4);
\draw[edge][blue] (2) to (5);
\draw[edge][gray] (2) to (8);
\draw[edge][blue] (2) to (9);
\draw[edge][gray] (3) to (6);
\draw[edge][blue] (3) to (7);
\draw[edge][gray] (3) to (10);
\draw[edge][blue] (3) to (11);

\end{tikzpicture}

    }
\caption{ The graphs $\tilde{G}_2$ and $G_2$
are constructed as described in Definition \ref{projeLimConDef}. While $G_2$ satisfies the graph assumptions \ref{graphAssumptions}, $\tilde{G}_2$ does not.
For the construction of  $\tilde{G}_2$ and $G_2$, we set $G_1$ to be the graph shown in Figure \ref{fig:hamblyKumagaiIntermediateStep}.
(Left) $\tilde{G}_2$ is constructed by taking the two copies $G_1 \times \{w_1\}$, $G_1 \times \{w_2\}$ and choosing the subgraph $\tilde{B}_2$ such that $V(\tilde{B}_2)=\{0,2w_1 \}$ and $E(B_2)= \emptyset$. (Right)  $G_2$ is constructed by taking the two copies $G_1 \times \{w_1\}$, $G_1 \times \{w_2\}$ and choosing the subgraph $B_2$ such that $V(B_2)=\{0,2w_1, 2w_2,4 \}$ and $E(B_2)= \emptyset$. Note that $G_2$ is the level-$2$ Hambly-Kumagai diamond graph and denoted by $HK_2$. For more details see \cref{rem-key} and   \cref{sec:HamblyKumagaiDiamond}.}
\label{fig:graphDoesNotSatAssump}
\end{figure}
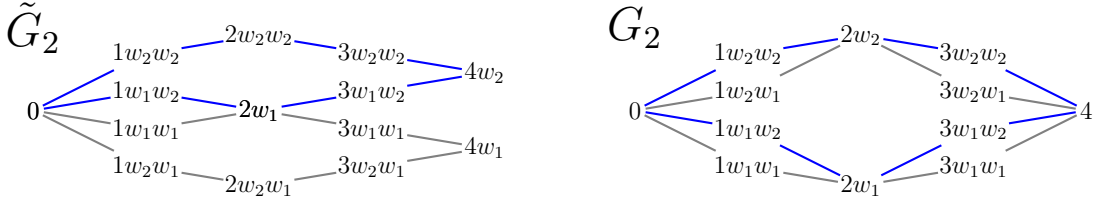

%%%%%%%%%%%%%%%%%%%%%%%%%%

\subsection{Main Inductive Result}
\label{sec:Main inductive result}
 
      {We present a discrete version of a general construction in \cite{ST19}}. 
\begin{enumerate}
    \item We begin with a sequence of vertical multiplier spaces $\{W_i \}_{i \geq 1}$ and  a $1D$ chain $G_0$.	
    \item We construct $\{G_i\}_{i \geq 0}$ and $\{B_i\}_{ i \geq 1}$   as described in Definition \ref{projeLimConDef}.
    \item  We transversally decompose  $\{G_i\}_{i \geq 0}$ as described in Proposition \ref{prop: assumptionsForGi} and require that each $G_i$ satisfies the graph assumptions \ref{graphAssumptions} with respect to this decomposition.
\end{enumerate}

The horizontal base space $G_0$ plays the role of the auxiliary $1D$ chain and will be used to lift a Hamiltonian to each $G_i$, $i \geq 1$. To this end, we equip $G_0$ with a Jacobi matrix $\jacobi$ of the form (\ref{jacobiMatrix}). The Jacobi matrix acts on the Hilbert space $L^2(G_0)=\{\psi \ | \ \psi: V(G_0) \to \complex \}$, $\bra{\psi}\ket{\varphi} = \sum_{n =0}^N \psi(n) \overline{\varphi(n)}$. Recall that the transversal decomposition of $G_i$ is with respect to $A_i = (\phi_i)^{-1} \circ \dots \circ (\phi_1)^{-1}(0) \subset V(G_i)$. Hence, we proceed as in Section \ref{sec: Perfect quantum state transfer on graphs} and equip each $G_i$ with the Hilbert space $L^2(G_i)=\{\psi \ | \ \psi:V(G_i) \to \complex \}$, $ \bra{\psi}\ket{\varphi}_{A_i} = \sum_{x \in V(G_i)} \psi(x) \overline{\varphi(x)} \mu_{A_i}(x)$, where the weights are given by $\mu_{A_i}(x)=1/|\PI_{A_i}^{-1}(n)|$ for $n = \PI_{A_i}(x)$ and $|\PI_{A_i}^{-1}(n)|$ denotes the number of vertices in the transversal layer $\PI_{A_i}^{-1}(n)$ that contains $x$. 
     {Another useful object is the pullback operator induced by $\phi_{i}:V(G_{i})\longrightarrow V(G_{i-1})$ and it is defined as follows}
\begin{equation*}
\phi_{i}^{\ast}:L^2(G_{i-1})\longrightarrow L^2(G_i), \quad \varphi \longrightarrow \phi_{i}^{\ast}  \varphi(x)=\varphi(\phi_{i}(x)).
\end{equation*}
The  averaging operator and its adjoint are given by
\begin{eqnarray*}
P_{i}: L^2(G_i)  \to  L^2(G_0),& \quad
\psi \mapsto   P_i\psi(n)=\frac{1}{|\PI_{A_i}^{-1}(n)|}\sum_{x \in \PI_{A_i}^{-1}(n)}  \psi(x). \\
P_{i}^{\ast}: L^2(G_0)  \to  L^2(G_i),& \quad
\varphi  \mapsto P_i^{\ast} \varphi (x) =\phi_{1}^{\ast}\dots \phi_{i}^{\ast}\varphi(x)= \varphi(\phi_{1} \circ \dots  \circ \phi_{i}(x)).
\end{eqnarray*}
We are now in a position to construct 
a Hamiltonian $\hamilton_i$ on $G_i$, $i \geq 1$:
\begin{enumerate}
    \item Let $\hamilton_i$ be a Hamiltonian on $G_i$ that satisfies the assumptions \ref{hamiltonAssumptions}.
    \item Let $\hamilton_i$ be lifted from $G_0$ to $G_i$ via $\jacobi= P_{i} \hamilton_i P_{i}^{\ast}$.
\end{enumerate}
The following result is a straightforward generalization of Lemma
\ref{lemmaLifteigenvector}, see \cite{ST19}.
\begin{lemma}
\label{genlemmaLiftforInduct}
We have that $\sigma(\hamilton_{i-1}) \subset \sigma(\hamilton_{i})$. Moreover, if $\lambda \in \sigma(\hamilton_{i-1})$ is an eigenvalue corresponding to the eigenvector $v_{\lambda}$ then  $ \phi_{i}^{\ast}v_{\lambda}$ is an $\hamilton_i$-eigenvector with the same eigenvalue.
\end{lemma}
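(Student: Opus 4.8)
The plan is to mimic the structure of Lemma \ref{lemmaLifteigenvector}, replacing the pair $(P,P^{\ast})$ there by the pullback operator $\phi_i^{\ast}$ and the averaging operator $P_i$ adapted to the present setting. The key algebraic identities I expect to need are the analogues of parts (3) and (4) of Lemma \ref{usefulProp1}, phrased one level down in the tower: namely that the averaging operator $\widehat{P}$ sending $L^2(G_i)$ onto $L^2(G_{i-1})$ satisfies $\widehat{P}\,\phi_i^{\ast} = \mathrm{Id}_{G_{i-1}}$ and $\phi_i^{\ast}\widehat{P} = \mathrm{Proj}_{i}$, the orthogonal projection of $L^2(G_i)$ onto the subspace of functions radial with respect to the transversal decomposition of $G_i$ induced by $G_{i-1}$. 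Here $\widehat{P}$ is the averaging map for the single gluing step $G_{i-1}\rightsquigarrow G_i$, and it is essentially a $P$-operator of the type constructed in Section \ref{sec: Perfect quantum state transfer on graphs}, but with the base role played by $G_{i-1}$ rather than by $G_0$. I would first verify that $\widehat{P}\,\phi_i^{\ast}=\mathrm{Id}$ and $\phi_i^{\ast}\widehat{P}=\mathrm{Proj}_i$ hold, either directly from Definition \ref{def:projectivelimitMappings} or by invoking Lemma \ref{usefulProp1} for the one-step construction.

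Granting these, the main computation is the direct analogue of the display in the proof of Lemma \ref{lemmaLifteigenvector}. Suppose $\lambda \in \sigma(\hamilton_{i-1})$ with eigenvector $v_\lambda \in L^2(G_{i-1})$. The plan is to show that $\phi_i^{\ast} v_\lambda$ is a $\hamilton_i$-eigenvector by the chain
\begin{equation*}
\lambda\,\phi_i^{\ast} v_\lambda = \phi_i^{\ast}\hamilton_{i-1} v_\lambda = \phi_i^{\ast}\bigl(\widehat{P}\,\hamilton_i\,\phi_i^{\ast}\bigr) v_\lambda = \mathrm{Proj}_i\,\hamilton_i\,\phi_i^{\ast} v_\lambda = \hamilton_i\,\phi_i^{\ast} v_\lambda,
\end{equation*}
where the second equality uses the lifting relation $\hamilton_{i-1}=\widehat{P}\hamilton_i\phi_i^{\ast}$ (the one-step version of $\jacobi = P_i\hamilton_i P_i^{\ast}$), the third uses $\phi_i^{\ast}\widehat{P}=\mathrm{Proj}_i$, and the last equality requires that $\hamilton_i\,\phi_i^{\ast} v_\lambda$ already lies in the radial subspace, so that $\mathrm{Proj}_i$ acts as the identity on it. The inclusion $\sigma(\hamilton_{i-1})\subset\sigma(\hamilton_i)$ follows immediately, since $\phi_i^{\ast}$ is injective and maps eigenvectors to eigenvectors.

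The step I expect to be the main obstacle is justifying that $\hamilton_i\,\phi_i^{\ast} v_\lambda$ is radial, i.e. invariant under $\mathrm{Proj}_i$. In Lemma \ref{lemmaLifteigenvector} this was exactly the content of Lemma \ref{invariantSubspaceLemma}: the radial subspace is $\hamilton$-invariant under assumptions \ref{hamiltonAssumptions} and \ref{graphAssumptions}. Here I would need the corresponding invariance statement one step down in the tower, namely that the subspace $\phi_i^{\ast}\bigl(L^2(G_{i-1})\bigr)$ is invariant under $\hamilton_i$. This is plausible precisely because $\hamilton_i$ satisfies assumptions \ref{hamiltonAssumptions} and $G_i$ satisfies \ref{graphAssumptions} with respect to the relevant transversal structure (as guaranteed by Proposition \ref{prop: assumptionsForGi}), so the radial-subspace invariance lemma applies verbatim with $G_{i-1}$ in place of $G_0$. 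I would therefore phrase this invariance as an application of Lemma \ref{invariantSubspaceLemma} to the one-step gluing $G_{i-1}\rightsquigarrow G_i$, which is the point at which the genuine work of the argument resides; the rest is the formal operator manipulation above.
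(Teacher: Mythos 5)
Your proposal is correct and follows essentially the same route as the paper: the paper's own (source-commented) proof is exactly the chain $\lambda\,\phi_i^{\ast}v_\lambda = \phi_i^{\ast}\hamilton_{i-1}v_\lambda = \phi_i^{\ast}\widehat{P}\,\hamilton_i\,\phi_i^{\ast}v_\lambda = \mathrm{Proj}\,\hamilton_i\,\phi_i^{\ast}v_\lambda = \hamilton_i\,\phi_i^{\ast}v_\lambda$, relying on the same three ingredients you isolate (the one-step lifting relation, $\phi_i^{\ast}\widehat{P}=\mathrm{Proj}$, and invariance of the range of $\phi_i^{\ast}$ under $\hamilton_i$ via the analogue of Lemma \ref{invariantSubspaceLemma}). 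Your write-up is in fact slightly more careful than the paper's, which simply calls the result a straightforward generalization of Lemma \ref{lemmaLifteigenvector}.
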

\begin{comment}
\begin{proof}
Let $\lambda \in \sigma(\hamilton_{i-1})$ be an eigenvalue corresponding to the eigenvector $v_{\lambda} \in L^2(V(G_{i-1}))$. Then 
\begin{equation*}
\lambda \phi_{i}^{\ast}v_{\lambda} = \phi_{i}^{\ast} \hamilton_{i-1} v_{\lambda} =  \phi_{i}^{\ast} P \hamilton_i \phi_{i}^{\ast} v_{\lambda} = Proj \ \hamilton_i \phi_{i}^{\ast} v_{\lambda}=  \hamilton_i \phi_{i}^{\ast}v_{\lambda}
\end{equation*}
where the last equality holds as $\hamilton_i \phi_{i}^{\ast} v_{\lambda} \in L^2_{rad}(G)$.
\end{proof}
\end{comment}
     {The following theorem characterizes the spectrum of the Hamiltonian  $\hamilton_i$ on $G_i$, $i \geq 1$.}
\begin{theorem}\label{key-thm2}
Let $i$ be a nonnegative integer and let $\hamilton_i$ be the Hamiltonian lifted from $G_0$ to $G_i$, that is, there is a Jacobi matrix $\jacobi$ such that $\jacobi= P_{i} \hamilton_i P_{i}^{\ast}$.
Then there exists a collection $\jacobi=\jacobi_0$,  $\jacobi_1$, \dots , $\jacobi_m$  of submatrices of $\jacobi$  such that 
\begin{equation*}
 \sigma(\hamilton_i) =\sigma(\jacobi) \cup\sigma(\jacobi_1)\cup \dots \cup\sigma(\jacobi_m).
\end{equation*}
 In particular,  if the Jacobi matrix $\jacobi$ corresponds to the case of PQST in a 1D chain then $\hamilton_i$ realizes PQST on $G_i$. Therefore, the above-given formula describes spectra of Hamiltonians realizing PQST on $G_i$. 
\end{theorem}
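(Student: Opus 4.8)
The plan is to prove the spectral decomposition by induction on the level $i$, combining the two complementary mechanisms already isolated in the excerpt: the lifting of eigenvectors along $\phi_{i}^{\ast}$ (Lemma \ref{genlemmaLiftforInduct}) and the local gluing construction (Lifting-\&-Gluing Lemma \ref{lemma: mainLemmaLifttoBranches}). The inductive claim I would carry is the refined statement
$$\sigma(\hamilton_i)=\sigma(\jacobi)\cup\bigcup_{j=1}^{i}\bigcup_{D}\sigma(\jacobi_D^{\mathrm{Dir}}),$$
where the inner union runs over the $1D$ chains $D$ of $G_{j-1}\setminus B_j$ and $\jacobi_D^{\mathrm{Dir}}$ denotes the Dirichlet restriction of $\jacobi$ to the interval of radial coordinates spanned by the interior of $D$. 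Since deleting the two $B_j$-endpoints of such an interval produces a genuine principal submatrix of the single fixed matrix $\jacobi$, each $\jacobi_D^{\mathrm{Dir}}$ is one of the desired submatrices, and relabelling the whole finite collection as $\jacobi_1,\dots,\jacobi_m$ yields the stated formula. The base case $i=0$ is immediate because $\hamilton_0=\jacobi$, while the case $i=1$ with $B_1=\{0,N\}$ is exactly Corollary \ref{cor: LiftingAndBranchingCorollary}.

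For the inductive step I would build an explicit eigenbasis of $L^2(G_i)$ from two families. First, by Lemma \ref{genlemmaLiftforInduct} every $\hamilton_{i-1}$-eigenvector $v$ lifts to the $\hamilton_i$-eigenvector $\phi_{i}^{\ast}v$ with the same eigenvalue; this produces $|V(G_{i-1})|$ eigenvectors spanning $\mathrm{Range}(\phi_{i}^{\ast})$ and accounts for $\sigma(\hamilton_{i-1})$. Second, for each chain $D$ in $G_{i-1}\setminus B_i$ and each Dirichlet eigenvalue $\lambda\in\sigma(\jacobi_D^{\mathrm{Dir}})$ I would apply the Lifting-\&-Gluing Lemma locally: placing the Dirichlet eigenvector on one $W_i$-branch of $D$, its negative on a second branch, and zero elsewhere yields an $\hamilton_i$-eigenvector in $\mathrm{Range}(\phi_{i}^{\ast})^{\bot}$. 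The crucial point here, which I would verify as in the proof of Lemma \ref{lemma: mainLemmaLifttoBranches} using Proposition \ref{hamiltonianMatrix} for the matrix elements, is that the Dirichlet condition forces the vector to vanish at the shared $B_i$-vertices, so that the nearest-neighbor eigenvalue equation decouples branch-by-branch and the $\pm$ cancellation kills any contribution at those vertices, with no leakage into $B_i$, the other chains, or the complementary branches.

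It then remains to check independence and completeness and to identify the new eigenvalues. The two families are linearly independent: the glued vectors lie in $\mathrm{Range}(\phi_{i}^{\ast})^{\bot}$ while the lifted vectors span $\mathrm{Range}(\phi_{i}^{\ast})$; distinct chains yield glued vectors with disjoint supports; and on a fixed chain the glued vectors inherit independence from that of the eigenvectors of the self-adjoint matrix $\jacobi_D^{\mathrm{Dir}}$. A dimension count closes the argument: the lifted family contributes $|V(G_{i-1})|$ vectors, and each chain $D$ with $\ell_D$ interior vertices contributes $\ell_D(|W_i|-1)$ glued vectors, so, since the interior vertices of all chains are exactly $V(G_{i-1})\setminus V(B_i)$, the total is
$$|V(G_{i-1})|+(|W_i|-1)\big(|V(G_{i-1})|-|V(B_i)|\big)=|V(B_i)|+|W_i|\big(|V(G_{i-1})|-|V(B_i)|\big)=|V(G_i)|,$$
using the vertex formula of Definition \ref{projeLimConDef}. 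Hence the constructed eigenvectors form a basis and $\sigma(\hamilton_i)=\sigma(\hamilton_{i-1})\cup\bigcup_{D}\sigma(\jacobi_D^{\mathrm{Dir}})$, which unwinds to the inductive claim. Finally the PQST assertion is immediate: each $G_i$ satisfies Assumptions \ref{graphAssumptions} by Proposition \ref{prop: assumptionsForGi} and $\jacobi=P_{i}\hamilton_iP_{i}^{\ast}$, so if $\jacobi$ realizes PQST on the $1D$ chain then Theorem \ref{PQSTtheorem} transfers it to $\hamilton_i$ on $G_i$.

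I expect the main obstacle to be the second family, namely rigorously justifying that a locally supported, alternating-sign Dirichlet eigenvector really is a global $\hamilton_i$-eigenvector. This requires confirming the cancellation at the shared $B_i$-vertices and, more delicately, the correct interaction between the degree normalizations of Proposition \ref{hamiltonianMatrix} and the passage to the principal submatrix $\jacobi_D^{\mathrm{Dir}}$: one must track precisely which layers a chain $D$ spans and verify that its Dirichlet problem is governed by an honest principal submatrix of $\jacobi$ rather than some rescaled matrix. The bookkeeping is mild in each single step, but assembling it consistently across all inductive levels, while keeping the submatrix identification uniform, is where the care is needed.
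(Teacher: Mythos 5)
Your proposal follows essentially the same route as the paper: the paper's proof of Theorem \ref{key-thm2} is precisely an induction on $i$ that combines Lemma \ref{genlemmaLiftforInduct} (pulling back $\hamilton_{i-1}$-eigenvectors along $\phi_{i}^{\ast}$) with the Lifting-\&-Gluing Lemma \ref{lemma: mainLemmaLifttoBranches} applied to the $1D$ chains of $G_{i-1}\setminus B_i$, which are exactly the two eigenvector families you construct. The only difference is one of detail, not of method: you spell out the completeness step (the dimension count $|V(B_i)|+|W_i|\bigl(|V(G_{i-1})|-|V(B_i)|\bigr)=|V(G_i)|$, the orthogonality of glued vectors to $\mathrm{Range}(\phi_{i}^{\ast})$, and the identification of the chain matrices as principal submatrices of $\jacobi$), which the paper carries out explicitly only in the one-step Corollary \ref{cor: LiftingAndBranchingCorollary} and in the worked examples.
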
 
%\begin{theorem}\label{key-thm2}
%Given $i \geq 0$, there exists a collection $\jacobi_0, \dots , \jacobi_m$  of submatrices of $\jacobi$  such that 
%\begin{equation*}
% \sigma(\hamilton_i) = \sigma(\jacobi_0) \cup\sigma(\jacobi_1)\cup \dots \cup\sigma(\jacobi_m)    
%\end{equation*}
%\end{theorem} 
\begin{proof}
Assume that the statement is correct for $\sigma(\hamilton_{i-1})$. By definition $G_i$ is constructed by replacing each $1D$ chain in $G_{i-1} \backslash B_i$  with multiple copies glued together at their boundary vertices. Let $\jacobi_0, \dots, \jacobi_k$ be the collection of the Jacobi matrices associated with the $1D$ chains in $G_{i-1} \backslash B_i$. Using Lemma \ref{genlemmaLiftforInduct} combined with Lemma \ref{lemma: mainLemmaLifttoBranches} (Lifting-\&-Gluing Lemma), we obtain $ \sigma(\hamilton_i) = \sigma(\jacobi_0) \cup\sigma(\jacobi_1)\cup \dots \cup\sigma(\jacobi_k)\cup \sigma(\hamilton_{i-1}) $.
\end{proof}

\begin{remark} 
     {A Jacobi matrix is one of the canonical forms of self-adjoint operators and there is a wide class of self-adjoint operators which are unitarily equivalent to the direct sum of Jacobi matrices. Therefore, such a spectral decomposition is valid for a more general class of graphs and, in fact, various approaches have been applied to obtain similar representations in different situations}, see \cite{Berkolaiko, MR3154573}. Our main assumptions on $\hamilton_i$ are to ensure that  $\hamilton_i$ realizes PQST in relatively simple practical situations. If one is interested in general Hamiltonians, the transversal decomposition of the graph and the assumptions \ref{graphAssumptions} can be relaxed for this particular statement and the proof will remain the same, as was pointed out by one of the referees. 
\end{remark} 
\begin{remark}\label{rem-key}
	Note that graphs in \cref{fig:ExampleFrom24Dec.tikz,fig:graphDoesNotSatAssump,fig:sqrt3eigenvectorsNeuDirNew,fig:plautlangToG1,fig:plautLangToG2} can be constructed in two different, but equivalent ways: either gluing copies of smaller graphs, or using inductive (projective) procedures \cite{MT,MT2,AR18,AR19,ST19}.
	
Note also that, although $\tilde{G}_2$ does not satisfy \cref{graphAssumptions}, a perfect quantum state transfer Hamiltonian can be found using the methods of \cite[Assumtions 2.8 or 2.11]{DDMT2} which is a discrete version of \cite{ST19}. 
\end{remark}

%%%%%%%%%%%%%% here

\begin{figure}[htbp]

\begin{tabular}{C{.33\textwidth}C{.33\textwidth}C{.33\textwidth}}
%%%%%%%%%%%%%%%%%%%%%% 1 
\subfigure{
    \resizebox{3.4cm}{!}{%
     \begin{tikzpicture}
\tikzset{vertex/.style={circle, inner sep=0,minimum size=1.0pt}} 
\tikzset{edge/.style = {draw=black, thick}}
% vertices
\node[vertex] (0) at  (-1.0,0.0) {\scalebox{.5}{$0$}};
\node[vertex] (1) at  (1.0,0.0) {\scalebox{.5}{$0$}};
\node[vertex] (2) at  (0.0,-1.0) {\scalebox{.5}{$-\sqrt{2}$}};
\node[vertex] (3) at  (0.0,1.0) {\scalebox{.5}{$\sqrt{2}$}};
\node[vertex] (4) at  (-0.5,-0.75) {\scalebox{.5}{$-1$}};
\node[vertex] (5) at  (-0.5,-0.25) {\scalebox{.5}{$-1$}};
\node[vertex] (6) at  (-0.5,0.25) {\scalebox{.5}{$1$}};
\node[vertex] (7) at  (-0.5,0.75) {\scalebox{.5}{$1$}};
\node[vertex] (8) at  (0.5,-0.75) {\scalebox{.5}{$-1$}};
\node[vertex] (9) at  (0.5,-0.25) {\scalebox{.5}{$-1$}};
\node[vertex] (10) at  (0.5,0.25) {\scalebox{.5}{$1$}};
\node[vertex] (11) at  (0.5,0.75) {\scalebox{.5}{$1$}};
% edges
\draw[edge] (0) to (4);
\draw[edge] (0) to (5);
\draw[edge] (0) to (6);
\draw[edge] (0) to (7);
\draw[edge] (1) to (8);
\draw[edge] (1) to (9);
\draw[edge] (1) to (10);
\draw[edge] (1) to (11);
\draw[edge] (2) to (4);
\draw[edge] (2) to (5);
\draw[edge] (2) to (8);
\draw[edge] (2) to (9);
\draw[edge] (3) to (6);
\draw[edge] (3) to (7);
\draw[edge] (3) to (10);
\draw[edge] (3) to (11);
\end{tikzpicture}
    }
} & 
%%%%%%%%%%%%%%%%%%%%%% 2 
\subfigure{
  \resizebox{3.4cm}{!}{%
    \begin{tikzpicture}
\tikzset{vertex/.style={circle, inner sep=0,minimum size=1.0pt}} 
\tikzset{edge/.style = {draw=black, thick}}
% vertices
%%%%%%%%%%%%%%%%%%%%
% vertices
\node[vertex] (0) at  (-1.0,0.0) {\scalebox{.5}{$0$}};
\node[vertex] (1) at  (1.0,0.0) {\scalebox{.5}{$0$}};
\node[vertex] (2) at  (0.0,-1.0) {\scalebox{.5}{$0$}};
\node[vertex] (3) at  (0.0,1.0) {\scalebox{.5}{$0$}};
\node[vertex] (4) at  (-0.5,-0.75) {\scalebox{.5}{$-1$}};
\node[vertex] (5) at  (-0.5,-0.25) {\scalebox{.5}{$-1$}};
\node[vertex] (6) at  (-0.5,0.25) {\scalebox{.5}{$1$}};
\node[vertex] (7) at  (-0.5,0.75) {\scalebox{.5}{$1$}};
\node[vertex] (8) at  (0.5,-0.75) {\scalebox{.5}{$1$}};
\node[vertex] (9) at  (0.5,-0.25) {\scalebox{.5}{$1$}};
\node[vertex] (10) at  (0.5,0.25) {\scalebox{.5}{$-1$}};
\node[vertex] (11) at  (0.5,0.75) {\scalebox{.5}{$-1$}};

% edges
\draw[edge] (0) to (4);
\draw[edge] (0) to (5);
\draw[edge] (0) to (6);
\draw[edge] (0) to (7);
\draw[edge] (1) to (8);
\draw[edge] (1) to (9);
\draw[edge] (1) to (10);
\draw[edge] (1) to (11);
\draw[edge] (2) to (4);
\draw[edge] (2) to (5);
\draw[edge] (2) to (8);
\draw[edge] (2) to (9);
\draw[edge] (3) to (6);
\draw[edge] (3) to (7);
\draw[edge] (3) to (10);
\draw[edge] (3) to (11);
\end{tikzpicture}
   }
} &

%%%%%%%%%%%%%%%%%%%%%% 3 
\subfigure{
  \resizebox{3.5cm}{!}{
  \begin{tikzpicture}
\tikzset{vertex/.style={circle, inner sep=0,minimum size=1.0pt}} 
\tikzset{edge/.style = {draw=black, thick}}
% vertices
\node[vertex] (0) at  (-1.0,0.0) {\scalebox{.5}{$0$}};
\node[vertex] (1) at  (1.0,0.0) {\scalebox{.5}{$0$}};
\node[vertex] (2) at  (0.0,-1.0) {\scalebox{.5}{$0$}};
\node[vertex] (3) at  (0.0,1.0) {\scalebox{.5}{$0$}};
\node[vertex] (4) at  (-0.5,-0.75) {\scalebox{.5}{$0$}};
\node[vertex] (5) at  (-0.5,-0.25) {\scalebox{.5}{$0$}};
\node[vertex] (6) at  (-0.5,0.25) {\scalebox{.5}{$-1$}};
\node[vertex] (7) at  (-0.5,0.75) {\scalebox{.5}{$1$}};
\node[vertex] (8) at  (0.5,-0.75) {\scalebox{.5}{$0$}};
\node[vertex] (9) at  (0.5,-0.25) {\scalebox{.5}{$0$}};
\node[vertex] (10) at  (0.5,0.25) {\scalebox{.5}{$0$}};
\node[vertex] (11) at  (0.5,0.75) {\scalebox{.5}{$0$}};
% edges
\draw[edge] (0) to (4);
\draw[edge] (0) to (5);
\draw[edge] (0) to (6);
\draw[edge] (0) to (7);
\draw[edge] (1) to (8);
\draw[edge] (1) to (9);
\draw[edge] (1) to (10);
\draw[edge] (1) to (11);
\draw[edge] (2) to (4);
\draw[edge] (2) to (5);
\draw[edge] (2) to (8);
\draw[edge] (2) to (9);
\draw[edge] (3) to (6);
\draw[edge] (3) to (7);
\draw[edge] (3) to (10);
\draw[edge] (3) to (11);
\end{tikzpicture}
    }
} 
\end{tabular}
\caption{ Hambly-Kumagai diamond graphs level 2: (Left) $\hamilton_2$-eigenvector for the eigenvalue $\sqrt{3}$. (Middle) $\hamilton_2$-eigenvector for the eigenvalue $0$. Both eigenvectors are examples for the construction method described in step 2.  (Right) $\hamilton_2$-eigenvector for the eigenvalue $0$. This eigenvector is an example for the construction method described in step 3. The number assigned to a vertex is the value of the eigenvector at this vertex.}
\label{fig:sqrt3eigenvectorsNeuDirNew}
\end{figure}
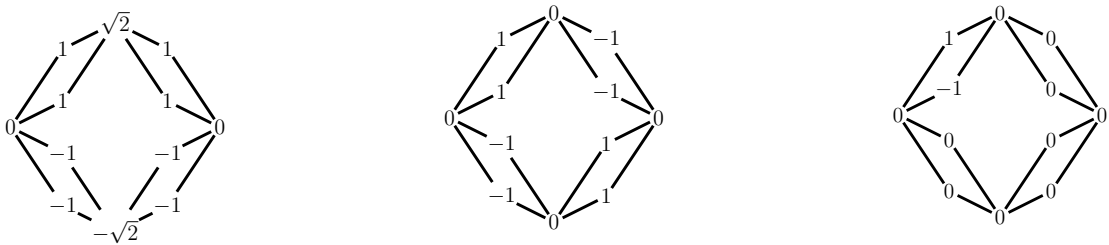

%%%%%%%%%%%%%% 

\section{Two examples}
\label{sec:twoExamples}

In this section,  we give two particular examples of the proposed construction. Note that in both cases the sender site is the leftmost point of the graphs and the receiver site is the rightmost point of the graph. We also demonstrate the applicability of Theorem \ref{key-thm2} on these two models of diamond-type graphs. A  transversal decomposition of each of these models induces a $1D$ chain $D_N$. We equip $D_N$ with      {a Jacobi matrix $\jacobi$ realizing one of the simplest cases} of spin chains with perfect state transfer discussed in \cite{christandl2004perfect}. To this end, we set
\begin{equation}
\label{KrawtchoukCoupling}
J_n=\frac{\sqrt{n(N+1-n)}}{2}, \quad B_n=0
, \quad n=0,1,\dots N, \quad B_{N+1}=0,
\end{equation}  
for the entries in (\ref{jacobiMatrix}). The underlying Jacobi matrix is mirror symmetric and it corresponds to the symmetric Krawtchouk polynomials \cite{Szego}. Following Proposition \ref{hamiltonianMatrix}, we lift this Jacobi matrix to Hamiltonians on these models of diamond-type graphs. Note that the magnetic field on the $1D$ chain nodes is assumed to vanish $B_0=\ldots = B_{N+1}=0$, resulting in a Hamiltonian whose diagonal elements are all equal to zero.  Moreover, Theorem \ref{PQSTtheorem}
implies that such Hamiltonians achieve a perfect quantum state transfer.      {We investigate these Hamiltonians and give a complete description of their spectra. }

%\noindent
\begin{minipage}{.49\textwidth}
\begin{equation}
    \label{eq: level2Hchain}
    \jacobi_2 =
     \left[\begin{matrix}0 & 1 & 0 & 0 & 0\\1 & 0 & \frac{\sqrt{6}}{2} & 0 & 0\\0 & \frac{\sqrt{6}}{2} & 0 & \frac{\sqrt{6}}{2} & 0\\0 & 0 & \frac{\sqrt{6}}{2} & 0 & 1\\0 & 0 & 0 & 1 & 0\end{matrix}\right]
\end{equation}
\end{minipage}
\begin{minipage}{.49\textwidth}
   \centering
     \resizebox{3.3cm}{!}{%
    \begin{tikzpicture}
\tikzset{vertex/.style={circle, inner sep=0,minimum size=1.0pt}} 
\tikzset{edge/.style = {draw=black, thick}}
% vertices
\node[vertex] (0) at  (-1.0,0.0) {\scalebox{.5}{$-1$}};
\node[vertex] (1) at  (1.0,0.0) {\scalebox{.5}{$1$}};
\node[vertex] (2) at  (0.0,-1.0) {\scalebox{.5}{$0$}};
\node[vertex] (3) at  (0.0,1.0) {\scalebox{.5}{$0$}};
\node[vertex] (4) at  (-0.5,-0.75) {\scalebox{.5}{$-1$}};
\node[vertex] (5) at  (-0.5,-0.25) {\scalebox{.5}{$-1$}};
\node[vertex] (6) at  (-0.5,0.25) {\scalebox{.5}{$-1$}};
\node[vertex] (7) at  (-0.5,0.75) {\scalebox{.5}{$-1$}};
\node[vertex] (8) at  (0.5,-0.75) {\scalebox{.5}{$1$}};
\node[vertex] (9) at  (0.5,-0.25) {\scalebox{.5}{$1$}};
\node[vertex] (10) at  (0.5,0.25) {\scalebox{.5}{$1$}};
\node[vertex] (11) at  (0.5,0.75) {\scalebox{.5}{$1$}};
% edges
\draw[edge] (0) to (4);
\draw[edge] (0) to (5);
\draw[edge] (0) to (6);
\draw[edge] (0) to (7);
\draw[edge] (1) to (8);
\draw[edge] (1) to (9);
\draw[edge] (1) to (10);
\draw[edge] (1) to (11);
\draw[edge] (2) to (4);
\draw[edge] (2) to (5);
\draw[edge] (2) to (8);
\draw[edge] (2) to (9);
\draw[edge] (3) to (6);
\draw[edge] (3) to (7);
\draw[edge] (3) to (10);
\draw[edge] (3) to (11);
\end{tikzpicture}
   }
   
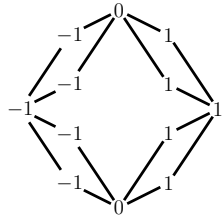
\captionof{figure}{An example of a radial eigenvector of \textbf{H$_2$}.  It corresponds to the eigenvalue 1. A number assigned to a vertex is the value of the eigenvector at this vertex.}
   \label{fig:Neu_level2_e1_1st}
\end{minipage}
%%%%%%%%%%%%%%%%%%%%%%%%%%%%%%%%

\subsection{Hambly-Kumagai diamond graphs}
\label{sec:HamblyKumagaiDiamond}
The first model is an example of a two-point  self-similar graph in the sense of \cite{MT}. 
It is a particular sequence of diamond-type graphs, that was investigated in \cite{HK}. We will refer to this model as \textit{Hambly-Kumagai diamond graphs}. The following definition gives a formal description of the Hambly-Kumagai diamond graphs.
\begin{definition}
We refer to a sequence of graphs $\{HK_{\ell}\}_{\ell \geq 0}$ as \textit{Hambly-Kumagai diamond graphs}, when it is constructed as follows.
\begin{itemize}
    \item $HK_0$ is initialized as the one edge graph connecting a node $x_L$ with another node $x_R$. 
    \item At level $\ell$ we construct $HK_{\ell}$ by replacing each edge from the previous level $HK_{\ell-1}$ by two new branches, whereas each new branch is then segmented into two edges that are arranged in series.
\end{itemize}
\label{HamblyKumagaiDiamond}
\end{definition}
The first three levels of the Hambly-Kumagai diamond graphs are displayed in \cite[Figure 2, page 5]{2019arXiv190908668D}.
     {Let $V(HK_{\ell})$ be the set of vertices of $HK_{\ell}$}. It is easily seen that the transversal decomposition $V(HK_{\ell})=\PI_A^{-1}(0)\cup\PI_A^{-1}(1) \ldots \cup \PI_A^{-1}(N)$ with respect to $A=\{x_L\}$ induces a $1D$ chain $D_{N}$ such that  $N=2^\ell$. The Jacobi matrix associated with $D_{N}$, $N=2^\ell$ is denoted by  $\jacobi_{\ell}$. 
For example  $\jacobi_{2}$ is given in equation (\ref{eq: level2Hchain}). We then lift $\jacobi_{\ell}$ to a Hamiltonian $\hamilton_{\ell}$ on $HK_{\ell}$. The Hamiltonian $\hamilton_{2}$ on the Hambly-Kumagai diamond graph of level 2 is given in equation (\ref{eq:level2Hamiltonian1}).
%%%%%%%%%%%%%%%%%%%%

\begin{equation}
    \label{eq:level2Hamiltonian1}
    \hamilton_2 =
     \left[\begin{array}{cccccccccccc}0 & 0 & 0 & 0 & \frac{1}{4} & \frac{1}{4} & \frac{1}{4} & \frac{1}{4} & 0 & 0 & 0 & 0\\0 & 0 & 0 & 0 & 0 & 0 & 0 & 0 & \frac{1}{4} & \frac{1}{4} & \frac{1}{4} & \frac{1}{4}\\0 & 0 & 0 & 0 & \frac{\sqrt{6}}{4} & \frac{\sqrt{6}}{4} & 0 & 0 & \frac{\sqrt{6}}{4} & \frac{\sqrt{6}}{4} & 0 & 0\\0 & 0 & 0 & 0 & 0 & 0 & \frac{\sqrt{6}}{4} & \frac{\sqrt{6}}{4} & 0 & 0 & \frac{\sqrt{6}}{4} & \frac{\sqrt{6}}{4}\\1 & 0 & \frac{\sqrt{6}}{2} & 0 & 0 & 0 & 0 & 0 & 0 & 0 & 0 & 0\\1 & 0 & \frac{\sqrt{6}}{2} & 0 & 0 & 0 & 0 & 0 & 0 & 0 & 0 & 0\\1 & 0 & 0 & \frac{\sqrt{6}}{2} & 0 & 0 & 0 & 0 & 0 & 0 & 0 & 0\\1 & 0 & 0 & \frac{\sqrt{6}}{2} & 0 & 0 & 0 & 0 & 0 & 0 & 0 & 0\\0 & 1 & \frac{\sqrt{6}}{2} & 0 & 0 & 0 & 0 & 0 & 0 & 0 & 0 & 0\\0 & 1 & \frac{\sqrt{6}}{2} & 0 & 0 & 0 & 0 & 0 & 0 & 0 & 0 & 0\\0 & 1 & 0 & \frac{\sqrt{6}}{2} & 0 & 0 & 0 & 0 & 0 & 0 & 0 & 0\\0 & 1 & 0 & \frac{\sqrt{6}}{2} & 0 & 0 & 0 & 0 & 0 & 0 & 0 & 0\end{array}\right]
\end{equation}

%%%%%%%%%%%%%%%%%%%
\subsubsection{Spectrum of the Hamiltonian $\hamilton_{2}$}
\begin{figure}[htb]
\centering
\resizebox{!}{!}{%%%%%%%%%%%%%%%%%%%%

\begin{tikzpicture}
\tikzset{vertex/.style={circle,inner sep=0,minimum size=5.0pt}} 
\tikzset{edge/.style = {draw=black, thick}}

\node[draw=none] at (6,0){\scalebox{1}{\scalebox{1.2}{$G_2=HK_2$}}};

\node[draw=none] at (-1.8,0){\scalebox{1}{$\jacobi_{2,1}$}};

\node[draw=none] at (1.7,0){\scalebox{1}{$\jacobi_{2,2}$}};
%%%%%%%%%%%%%%%%%%%%%%%%%%%%%%%%%%%%%%%%%%
\usetikzlibrary{decorations.pathreplacing}

\draw[thick] [decorate,decoration= {brace,amplitude=0.2cm} ] (-3,-0.7) - - (-0.5,-0.7);

\draw[thick] [decorate,decoration= {brace,amplitude=0.2cm} ] (0.5,-0.7) - - (3,-0.7);

%%%%%%%%%%%%%%%%%%%%%%%%%%%%%%%%%%%%%%%%%%

% vertices

\node[vertex] (0a) at  (-3.0,-1) {\scalebox{0.8}{0}};

\node[vertex] (0b) at  (-3.0,-1.5) {\scalebox{0.8}{0}};

\node[vertex] (0c) at  (-3.0,-2.5) {\scalebox{0.8}{0}};

\node[vertex] (0d) at  (-3.0,-3) {\scalebox{0.8}{0}};

%%%%%%%%%%%%%%%%%%%%%%%%%%%%%%%%

\node[vertex] (1a) at  (3,-1) {\scalebox{0.8}{$4$}};

\node[vertex] (1b) at  (3,-1.5) {\scalebox{0.8}{$4$}};

%%%%%%%%%%%%%%%%%%%%%%%%%%%%%%%%%5

\node[vertex] (2b) at  (-0.6,-1.5) {\scalebox{0.8}{$2w_2$}};

\node[vertex] (2c) at  (-0.6,-2.5) {\scalebox{0.8}{$2w_1$}};

\node[vertex] (2d) at  (-0.6,-3) {\scalebox{0.8}{$2w_1$}};

%%%%%%%%%%%%%%%%%%%%%%%%%%%%%%%%5

\node[vertex] (3n) at  (-0.6,-1) {\scalebox{0.8}{$2w_2$}};

\node[vertex] (3pa) at  (0.6,-1) {\scalebox{0.8}{$2w_2$}};

\node[vertex] (3pb) at  (0.6,-1.5) {\scalebox{0.8}{$2w_2$}};

\node[vertex] (3pc) at  (0.6,-2.5) {\scalebox{0.8}{$2w_1$}};

\node[vertex] (3pd) at  (0.6,-3) {\scalebox{0.8}{$2w_1$}};

%%%%%%%%%%%%%%%%%%%%%%%%%%%%%%%

\node[vertex] (6a) at  (-1.8,-1)
{\scalebox{0.8}{$1w_2w_2$}};

\node[vertex] (6b) at  (-1.8,-1.5)
{\scalebox{0.8}{$1w_2w_1$}};

\node[vertex] (1c) at  (3,-2.5) {\scalebox{0.8}{$4$}};

\node[vertex] (1d) at  (3,-3) {\scalebox{0.8}{$4$}};

%%%%%%%%%%%%%%%%%%%%%%%%%%%%%%%%%%

\node[vertex] (10a) at  (1.8,-1) {\scalebox{0.8}{$3w_2w_2$}};

\node[vertex] (10b) at  (1.8,-1.5) {\scalebox{0.8}{$3w_2w_1$}};

\node[vertex] (10c) at  (1.8,-2.5) {\scalebox{0.8}{$3w_1w_2$}};

\node[vertex] (10d) at  (1.8,-3) {\scalebox{0.8}{$3w_1w_1$}};
% edges

\draw[edge][blue] (1a) to (10a);
\draw[edge][gray] (1b) to (10b);

\draw[edge][blue] (1c) to (10c);
\draw[edge][gray] (1d) to (10d);

\draw[edge][blue] (3pc) to (10c);
\draw[edge][gray] (3pd) to (10d);

\draw[edge][blue] (3pa) to (10a);
\draw[edge][gray] (3pb) to (10b);

\draw[edge][blue] (0a) to (6a);
\draw[edge][gray] (0b) to (6b);

\draw[edge][blue] (3n) to (6a);
\draw[edge][gray] (2b) to (6b);

%%%%%%%%%%%%%%%%%%%%%%%%%%%%%%%%%%%%%%%%%%

% vertices

%%%%%%%%%%%%%%%%%%%%%%%%%%%%%%

\node[vertex] (6c) at  (-1.8,-2.5) {\scalebox{0.8}{$1w_1w_2$}};

\node[vertex] (6d) at  (-1.8,-3) {\scalebox{0.8}{$1w_1w_1$}};

%%%%%%%%%%%%%%%%%%%%%%%%%%%%%%

% edges

\draw[edge][blue] (0c) to (6c);
\draw[edge][gray] (0d) to (6d);
\draw[edge][blue] (2c) to (6c);
\draw[edge][gray] (2d) to (6d);

% vertices
\node[vertex] (0) at  (5,-2.0) {\scalebox{0.8}{$0$}};
\node[vertex] (1) at  (11,-2.0) {\scalebox{0.8}{$4$}};
\node[vertex] (2) at  (8,-3.0) {\scalebox{0.8}{$2w_1$}};
\node[vertex] (3) at  (8,-1.0) {\scalebox{0.8}{$2w_2$}};
\node[vertex] (4) at  (6.5,-2.75) {\scalebox{0.8}{$1w_1w_1$}};
\node[vertex] (5) at  (6.5,-2.25) {\scalebox{0.8}{$1w_1w_2$}};
\node[vertex] (6) at  (6.5,-1.75) {\scalebox{0.8}{$1w_2w_1$}};
\node[vertex] (7) at  (6.5,-1.25) {\scalebox{0.8}{$1w_2w_2$}};
\node[vertex] (8) at  (9.5,-2.75) {\scalebox{0.8}{$3w_1w_1$}};
\node[vertex] (9) at  (9.5,-2.25) {\scalebox{0.8}{$3w_1w_2$}};
\node[vertex] (10) at  (9.5,-1.75) {\scalebox{0.8}{$3w_2w_1$}};
\node[vertex] (11) at  (9.5,-1.25) {\scalebox{0.8}{$3w_2w_2$}};
% edges
\draw[edge][gray] (0) to (4);
\draw[edge][blue] (0) to (5);
\draw[edge][gray] (0) to (6);
\draw[edge][blue] (0) to (7);
\draw[edge][gray] (1) to (8);
\draw[edge][blue] (1) to (9);
\draw[edge][gray] (1) to (10);
\draw[edge][blue] (1) to (11);
\draw[edge][gray] (2) to (4);
\draw[edge][blue] (2) to (5);
\draw[edge][gray] (2) to (8);
\draw[edge][blue] (2) to (9);
\draw[edge][gray] (3) to (6);
\draw[edge][blue] (3) to (7);
\draw[edge][gray] (3) to (10);
\draw[edge][blue] (3) to (11);

\end{tikzpicture}}
	\caption{ (Right) We set $G_1$ to be the graph shown in Figure \ref{fig:hamblyKumagaiIntermediateStep}. $G_2$ is constructed as described in Figure \ref{fig:graphDoesNotSatAssump} 
(Right), namely by taking the two copies $G_1 \times \{w_1\}$, $G_1 \times \{w_2\}$ and choosing the subgraph $B_2$ such that $V(B_2)=\{0,2w_1, 2w_2,4 \}$ and $E(B_2)= \emptyset$. Note that $G_2$ is the level-$2$ Hambly-Kumagai diamond graph and denoted by $HK_2$.  (Left)  It is easy to see that $G_{1} \backslash B_2$ is a collection of four $1D$ chains. The two copies of each $1D$ chain in $G_{1} \backslash B_2$ are displayed in Figure.  Each $1D$ chain in $G_{1} \backslash B_2$ is associated with a Jacobi matrix. Due to the mirror symmetry assumption, there are only two different Jacobi matrices. We denote them by $\jacobi_{2,1}$ and  $\jacobi_{2,2}$.}
	\label{fig:eight1DchainsToG2}
\end{figure}
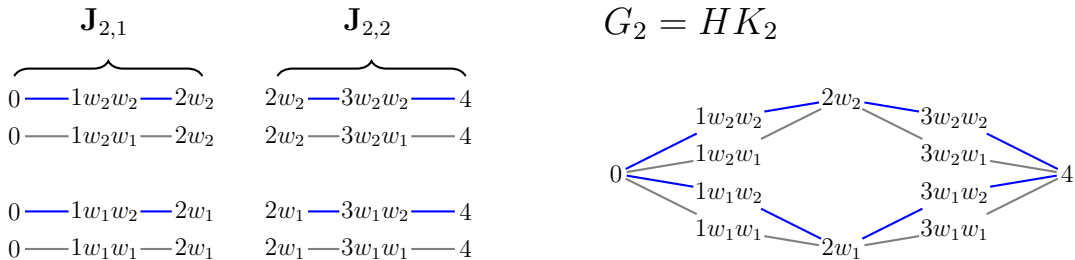
%%%%%%%%%%%%%%%%%%%%%%%%%%%%%%%%%%%%

In this section we demonstrate how to apply Theorem \ref{key-thm2} and determine the spectrum of $\hamilton_2$.
To this end, we construct the level-$2$ Hambly-Kumagai diamond graph $HK_2$ using a sequence of discretized projective limit spaces $\{G_0, G_1, G_2\}$ (see Definition \ref{projeLimConDef}) such that $HK_2 = G_2$. We proceed in the following manner:
\begin{enumerate}

    \item[\textbf{(Step 1)}] $G_0$ is initialized to be the induced auxiliary $1D$ chain $D_{4}$ equipped with $\jacobi_2$ given in (\ref{eq: level2Hchain}). Let  $\lambda \in \sigma(\jacobi_2)$ with $v_{\lambda}$ being the corresponding eigenvector. Then  $ P_2^{\ast}v_{\lambda}= \phi_{1}^{\ast}\phi_{2}^{\ast}v_{\lambda}$ gives a corresponding radial $\hamilton_2$-eigenvector on $HK_2$ (by Lemma
\ref{lemmaLifteigenvector} or \ref{genlemmaLiftforInduct}).  Figure \ref{fig:Neu_level2_e1_1st} displays
a radial eigenvector of $\hamilton_2$ corresponding to the eigenvalue $1$. It describes the oscillations of the transversal layers. This step shows $\sigma(\jacobi_2) \subset \sigma(\hamilton_2)$ and generates five radial eigenvectors. Note that $\sigma(\jacobi_2) = \{-2,-1,0,1,2\}$, see Table \ref{tab:level2ChainSpectrumNew} (Left).

    \item[\textbf{(Step 2)}] To construct the graph $G_1$, we proceed as described in Figure
    \ref{fig:hamblyKumagaiIntermediateStep} (Left). This is precisely the situation described in Lemma \ref{lemma: mainLemmaLifttoBranches} (Lifting \& Gluing Lemma), where $G_1$ plays the role of $G_D$ with two branches. Hence, we can lift an eigenvector from $G_0$ to $G_1$ as follows. Recall, when Dirichlet boundary conditions are imposed, we write $\jacobi_2^D$ for the Jacobi matrix. Let $\lambda \in \sigma(\jacobi_2^D)$ with $v^D_{\lambda}$ as the corresponding $\jacobi_2^D$-eigenvector. Then, the vector
    \begin{equation*}
    v_{\lambda} =
  \begin{cases}
    v^D_{\lambda}       & \quad \quad \text{on the $w_1$-branch  } \\
    -v^D_{\lambda}  & \quad \quad \text{on the $w_2$-branch }
  \end{cases}
\end{equation*}
   defines an eigenvector on $G_1$. Lifting $v_{\lambda}$ to $HK_2$ via $ \phi_{2}^{\ast}v_{\lambda}$ gives an eigenvector of $\hamilton_2$ (see Lemma \ref{genlemmaLiftforInduct}).
    Figures \ref{fig:sqrt3eigenvectorsNeuDirNew} (left) \& (middle) display
 eigenvectors of $\hamilton_2$ constructed as described in step 2.
   This step shows $\sigma(\jacobi^D_2) \subset \sigma(\hamilton_2)$ and generates three additional eigenvectors. Note that $\sigma(\jacobi^D_2) = \{-\sqrt{3},0,\sqrt{3}\}$, see Table \ref{tab:level2ChainSpectrumNew} (Middle).

    \item[\textbf{(Step 3)}] To construct the level-$2$ Hambly-Kumagai graph $G_2=HK_2$, we proceed as described in Definition \ref{projeLimConDef} and Figure 	\ref{fig:eight1DchainsToG2}.      {We set $W=\{w_1,w_2\}$ to be the vertical multiplier space.} We choose the subgraph $B_2$ to be edgeless with the vertices set $V(B_2)=\{0,2w_1, 2w_2,4 \}$. In this case, $G_{1} \backslash B_2$ is a collection of four $1D$ chains. The two copies of each $1D$ chain in $G_{1} \backslash B_2$ are displayed in Figure \ref{fig:eight1DchainsToG2} (Left). Gluing the copies at the common boundary vertices gives $G_2=HK_2$, see Figure \ref{fig:eight1DchainsToG2} (Right). Each $1D$ chain in $G_{1} \backslash B_2$ is associated with a Jacobi matrix. Due to the mirror symmetry assumption, it is sufficient to consider one of the four Jacobi matrices. We denote this Jacobi matrix by $\jacobi_{2,1}$, see 
    Figure \ref{fig:eight1DchainsToG2} (Left). It is easy to check that $\jacobi_{2,1}$ is given by
    \begin{equation*}
\jacobi_{2,1} = \left[\begin{matrix}0 & 1 & 0\\ 1 & 0 & \frac{\sqrt{6}}{2}\\0 & \frac{\sqrt{6}}{2} & 1\end{matrix}\right]. 
\end{equation*}
Gluing two copies of a $1D$ chain in $G_{1} \backslash B_2$ generates a situation similar to Lemma \ref{lemma: mainLemmaLifttoBranches} (Lifting \& Gluing Lemma). Hence, we can lift an eigenvector to $G_2$ as follows. Let $\lambda \in \sigma(\jacobi_{2,1}^D)$      { with $v^D_{\lambda}$ being the corresponding $\jacobi_2^D$-eigenvector}. We define
\begin{equation*}
    v_{\lambda} =
  \begin{cases}
    v^D_{\lambda}       & \quad \quad \text{on the $w_1$-branch of a $1D$ chain in $G_{1} \backslash B_2$  } \\
    -v^D_{\lambda}  & \quad \quad \text{on the $w_2$-branch  of the same $1D$ chain in $G_{1} \backslash B_2$} 
    \\
    0       & \quad \quad \text{elsewhere } 
  \end{cases}
\end{equation*}
 It is easy to see that $ v_{\lambda}$ is an eigenvector on $\hamilton_2$. This step shows the inclusion $\sigma(\jacobi_{2,1}^D) \subset \sigma(\hamilton_2)$ and generates four additional eigenvectors, one eigenvector for each $1D$ chain in $G_{1} \backslash B_2$. Note that $\sigma(\jacobi_{2,1}^D) = \{0\}$.
\end{enumerate}
The constructed twelve eigenvectors are orthogonal and therefore $ \sigma (\hamilton_2)= \sigma(\jacobi_2) \cup \sigma(\jacobi_2^D) \cup \sigma(\jacobi_{2,1}^D)$.

\begin{table}[h!]
\centerline{\hbox{ 
\begin{tabular}{lcc}
& Eigenvalue & Multiplicity \\
\hline
1 & -2 & 1 \\
2 & -1 & 1 \\
3 & 0 & 1 \\
4 & 1 & 1 \\
5 & 2 & 1 \\
\hline
\\
\end{tabular}
\quad \quad 
\begin{tabular}{lcc}
& Eigenvalue & Multiplicity \\
\hline
1 & $-\sqrt{3}$ & 2 \\
2 & 0 & 6 \\
3 & $\sqrt{3}$ & 2 \\
\hline
\end{tabular}
\quad \quad 
\begin{tabular}{lcc}
& Eigenvalue & Multiplicity \\
\hline
1 & -2 & 1 \\
2 & $-\sqrt{3}$ & 1 \\
3 & -1 & 1 \\
4 & 0 & 6 \\
5 & 1 & 1 \\
6 & $\sqrt{3}$& 1 \\
7 & 2 & 1 \\ 
\hline
\\
\end{tabular}}}
\caption{Hambly-Kumagai diamond graph of level 2: Eigenvalues table of 
    $\jacobi_2$ (Left), $\jacobi^D_2$ (Middle) and of 
    $\hamilton_2$ (Right).}
\label{tab:level2ChainSpectrumNew}
\end{table}

%%%%%%%%%%%%%%%%%%%%%%%%%%%%%%%%%

%
%
%\begin{table}[h!]
%
%
%\begin{tabular}{lclclclclclcl}
% j & Eigenvalue $\lambda_j $ & Multiplicity & j & Eigenvalue $\lambda_j $ & Multiplicity \\
%\hline
%1 & -4 & 1  							& 9 & 1 & 1 \\
%2 & $-\sqrt{\sqrt{46} + 9}$ & 1 		& 10 & $\sqrt{9 - \sqrt{46}}$ & 1 \\
%3 & -3 & 1 							& 11 & 2 & 1 \\
%4 & $-2\sqrt{2}$ & 5   				& 12 & 2 $\sqrt{2}$ & 5 \\
%5 & -2 & 1 							& 13 & 3 & 1 \\
%6 & $-\sqrt{9 - \sqrt{46}}$ & 1 		&14 & $\sqrt{\sqrt{46} + 9}$ & 1 \\
%7 & -1 & 1 							&15 & 4 & 1 \\
%8 & 0 & 22 \\
%\hline
%\\
%\end{tabular}
%
%\caption{Hambly-Kumagai diamond graph of level 3: the table lists the eigenvalues of   $\hamilton_3$.}
%\label{tab:PLlevel2ChainSpectrumNew}
%\end{table}
%%%%%%%%%%%%%%%%%%%%%%%%

\begin{figure}[htbp]

\begin{tabular}{C{.49\textwidth}C{.49\textwidth}}

\subfigure{
  \resizebox{6.cm}{!}{
  \begin{tikzpicture}
\tikzset{vertex/.style={inner sep=0,minimum size=0.005pt}} 
\tikzset{edge/.style = {draw=black, thick}}
% vertices
\node[vertex] (0) at  (-1.0,0.0) {};
\node[vertex] (1) at  (1.0,0.0) {};
\node[vertex] (2) at  (0.0,-1.0) {};
\node[vertex] (3) at  (0.0,1.0) {};
\node[vertex] (4) at  (-0.5,-0.75) {};
\node[vertex] (5) at  (-0.5,-0.25) {};
\node[vertex] (6) at  (-0.5,0.25) {};
\node[vertex] (7) at  (-0.5,0.75) {};
\node[vertex] (8) at  (0.5,-0.75) {};
\node[vertex] (9) at  (0.5,-0.25) {};
\node[vertex] (10) at  (0.5,0.25) {};
\node[vertex] (11) at  (0.5,0.75) {};
\node[vertex] (12) at  (-0.75,-0.4375) {};
\node[vertex] (13) at  (-0.75,-0.3125) {};
\node[vertex] (14) at  (-0.75,-0.1875) {};
\node[vertex] (15) at  (-0.75,-0.0625) {};
\node[vertex] (16) at  (-0.75,0.0625) {};
\node[vertex] (17) at  (-0.75,0.1875) {};
\node[vertex] (18) at  (-0.75,0.3125) {};
\node[vertex] (19) at  (-0.75,0.4375) {};
\node[vertex] (20) at  (0.75,-0.4375) {};
\node[vertex] (21) at  (0.75,-0.3125) {};
\node[vertex] (22) at  (0.75,-0.1875) {};
\node[vertex] (23) at  (0.75,-0.0625) {};
\node[vertex] (24) at  (0.75,0.0625) {};
\node[vertex] (25) at  (0.75,0.1875) {};
\node[vertex] (26) at  (0.75,0.3125) {};
\node[vertex] (27) at  (0.75,0.4375) {};
\node[vertex] (28) at  (-0.25,-0.9375) {};
\node[vertex] (29) at  (-0.25,-0.8125) {};
\node[vertex] (30) at  (-0.25,-0.6875) {};
\node[vertex] (31) at  (-0.25,-0.5625) {};
\node[vertex] (32) at  (0.25,-0.9375) {};
\node[vertex] (33) at  (0.25,-0.8125) {};
\node[vertex] (34) at  (0.25,-0.6875) {};
\node[vertex] (35) at  (0.25,-0.5625) {};
\node[vertex] (36) at  (-0.25,0.5625) {};
\node[vertex] (37) at  (-0.25,0.6875) {};
\node[vertex] (38) at  (-0.25,0.8125) {};
\node[vertex] (39) at  (-0.25,0.9375) {};
\node[vertex] (40) at  (0.25,0.5625) {};
\node[vertex] (41) at  (0.25,0.6875) {};
\node[vertex] (42) at  (0.25,0.8125) {};
\node[vertex] (43) at  (0.25,0.9375) {};
% edges

\draw[dashed] (-1.3,-0.3) -- (0.3,1.3);

\node[draw=none][blue] at (-0.7, 0.75){\scalebox{.5}{$f$}};

\node[draw=none][gray] at (-0.3, 0.3){\scalebox{.5}{-$f$}};

\node[draw=none] at (0.7, 0.75){\scalebox{.4}{$0$}};
\node[draw=none] at (0.7, -0.75){\scalebox{.4}{$0$}};
\node[draw=none] at (-0.7, -0.75){\scalebox{.4}{$0$}};

\draw[edge] (0) to (12);
\draw[edge] (0) to (13);
\draw[edge] (0) to (14);
\draw[edge] (0) to (15);
\draw[edge][gray] (0) to (16);
\draw[edge][gray] (0) to (17);
\draw[edge][blue] (0) to (18);
\draw[edge][blue] (0) to (19);
\draw[edge] (1) to (20);
\draw[edge] (1) to (21);
\draw[edge] (1) to (22);
\draw[edge] (1) to (23);
\draw[edge] (1) to (24);
\draw[edge] (1) to (25);
\draw[edge] (1) to (26);
\draw[edge] (1) to (27);
\draw[edge] (2) to (28);
\draw[edge] (2) to (29);
\draw[edge] (2) to (30);
\draw[edge] (2) to (31);
\draw[edge] (2) to (32);
\draw[edge] (2) to (33);
\draw[edge] (2) to (34);
\draw[edge] (2) to (35);
\draw[edge][gray] (3) to (36);
\draw[edge][gray] (3) to (37);
\draw[edge][blue] (3) to (38);
\draw[edge][blue] (3) to (39);
\draw[edge] (3) to (40);
\draw[edge] (3) to (41);
\draw[edge] (3) to (42);
\draw[edge] (3) to (43);
\draw[edge] (4) to (12);
\draw[edge] (4) to (13);
\draw[edge] (4) to (28);
\draw[edge] (4) to (29);
\draw[edge] (5) to (14);
\draw[edge] (5) to (15);
\draw[edge] (5) to (30);
\draw[edge] (5) to (31);
\draw[edge][gray]  (6) to (16);
\draw[edge][gray]  (6) to (17);
\draw[edge][gray]  (6) to (36);
\draw[edge][gray] (6) to (37);
\draw[edge][blue] (7) to (18);
\draw[edge][blue] (7) to (19);
\draw[edge][blue] (7) to (38);
\draw[edge][blue] (7) to (39);
\draw[edge] (8) to (20);
\draw[edge] (8) to (21);
\draw[edge] (8) to (32);
\draw[edge] (8) to (33);
\draw[edge] (9) to (22);
\draw[edge] (9) to (23);
\draw[edge] (9) to (34);
\draw[edge] (9) to (35);
\draw[edge] (10) to (24);
\draw[edge] (10) to (25);
\draw[edge] (10) to (40);
\draw[edge] (10) to (41);
\draw[edge] (11) to (26);
\draw[edge] (11) to (27);
\draw[edge] (11) to (42);
\draw[edge] (11) to (43);
\end{tikzpicture}
    }
} &

%%%%%%%%%%%%%%%%%%%%%% 3 

%%%%%%%%%%%%%%%%%%%%%%%%
\subfigure{
  \resizebox{6.5cm}{!}{%
    \begin{tikzpicture}
\tikzset{vertex/.style={inner sep=0,minimum size=0.005pt}} 
\tikzset{edge/.style = {draw=black, thick}}
% vertices
\node[vertex] (0) at  (-1.0,0.0) {};
\node[vertex] (1) at  (1.0,0.0) {};
\node[vertex] (2) at  (0.0,-1.0) {};
\node[vertex] (3) at  (0.0,1.0) {};
\node[vertex] (4) at  (-0.5,-0.75) {};
\node[vertex] (5) at  (-0.5,-0.25) {};
\node[vertex] (6) at  (-0.5,0.25) {};
\node[vertex] (7) at  (-0.5,0.75) {};
\node[vertex] (8) at  (0.5,-0.75) {};
\node[vertex] (9) at  (0.5,-0.25) {};
\node[vertex] (10) at  (0.5,0.25) {};
\node[vertex] (11) at  (0.5,0.75) {};
\node[vertex] (12) at  (-0.75,-0.4375) {};
\node[vertex] (13) at  (-0.75,-0.3125) {};
\node[vertex] (14) at  (-0.75,-0.1875) {};
\node[vertex] (15) at  (-0.75,-0.0625) {};
\node[vertex] (16) at  (-0.75,0.0625) {};
\node[vertex] (17) at  (-0.75,0.1875) {};
\node[vertex] (18) at  (-0.75,0.3125) {};
\node[vertex] (19) at  (-0.75,0.4375) {};
\node[vertex] (20) at  (0.75,-0.4375) {};
\node[vertex] (21) at  (0.75,-0.3125) {};
\node[vertex] (22) at  (0.75,-0.1875) {};
\node[vertex] (23) at  (0.75,-0.0625) {};
\node[vertex] (24) at  (0.75,0.0625) {};
\node[vertex] (25) at  (0.75,0.1875) {};
\node[vertex] (26) at  (0.75,0.3125) {};
\node[vertex] (27) at  (0.75,0.4375) {};
\node[vertex] (28) at  (-0.25,-0.9375) {};
\node[vertex] (29) at  (-0.25,-0.8125) {};
\node[vertex] (30) at  (-0.25,-0.6875) {};
\node[vertex] (31) at  (-0.25,-0.5625) {};
\node[vertex] (32) at  (0.25,-0.9375) {};
\node[vertex] (33) at  (0.25,-0.8125) {};
\node[vertex] (34) at  (0.25,-0.6875) {};
\node[vertex] (35) at  (0.25,-0.5625) {};
\node[vertex] (36) at  (-0.25,0.5625) {};
\node[vertex] (37) at  (-0.25,0.6875) {};
\node[vertex] (38) at  (-0.25,0.8125) {};
\node[vertex] (39) at  (-0.25,0.9375) {};
\node[vertex] (40) at  (0.25,0.5625) {};
\node[vertex] (41) at  (0.25,0.6875) {};
\node[vertex] (42) at  (0.25,0.8125) {};
\node[vertex] (43) at  (0.25,0.9375) {};
% edges

\draw[dashed] (-1.3,0) -- (1.3,0);

\draw[dashed] (0,-1.3) -- (0,1.3);

\node[draw=none][blue] at (-0.7, 0.75){\scalebox{.5}{$f$}};
\node[draw=none][blue] at (-0.3, 0.3){\scalebox{.5}{$f$}};

\node[draw=none][gray] at (0.7, 0.75){\scalebox{.4}{-$f$}};
\node[draw=none][gray] at (0.3, 0.3){\scalebox{.4}{-$f$}};

\node[draw=none][blue] at (0.7, -0.75){\scalebox{.4}{$f$}};
\node[draw=none][blue] at (0.3, -0.3){\scalebox{.4}{$f$}};

\node[draw=none][gray] at (-0.7, -0.75){\scalebox{.4}{-$f$}};
\node[draw=none][gray] at (-0.3, -0.3){\scalebox{.4}{-$f$}};

\draw[edge][gray] (0) to (12);
\draw[edge][gray] (0) to (13);
\draw[edge][gray] (0) to (14);
\draw[edge][gray] (0) to (15);
\draw[edge][blue] (0) to (16);
\draw[edge][blue] (0) to (17);
\draw[edge][blue] (0) to (18);
\draw[edge][blue] (0) to (19);
\draw[edge][blue] (1) to (20);
\draw[edge][blue] (1) to (21);
\draw[edge][blue] (1) to (22);
\draw[edge][blue] (1) to (23);
\draw[edge][gray] (1) to (24);
\draw[edge][gray] (1) to (25);
\draw[edge][gray] (1) to (26);
\draw[edge][gray] (1) to (27);
\draw[edge][gray] (2) to (28);
\draw[edge][gray] (2) to (29);
\draw[edge][gray] (2) to (30);
\draw[edge][gray] (2) to (31);
\draw[edge][blue] (2) to (32);
\draw[edge][blue] (2) to (33);
\draw[edge][blue] (2) to (34);
\draw[edge][blue] (2) to (35);
\draw[edge][blue] (3) to (36);
\draw[edge][blue] (3) to (37);
\draw[edge][blue] (3) to (38);
\draw[edge][blue] (3) to (39);
\draw[edge][gray] (3) to (40);
\draw[edge][gray] (3) to (41);
\draw[edge][gray] (3) to (42);
\draw[edge][gray] (3) to (43);
\draw[edge][gray] (4) to (12);
\draw[edge][gray] (4) to (13);
\draw[edge][gray] (4) to (28);
\draw[edge][gray] (4) to (29);
\draw[edge][gray] (5) to (14);
\draw[edge][gray] (5) to (15);
\draw[edge][gray] (5) to (30);
\draw[edge][gray] (5) to (31);
\draw[edge][blue] (6) to (16);
\draw[edge][blue] (6) to (17);
\draw[edge][blue] (6) to (36);
\draw[edge][blue] (6) to (37);
\draw[edge][blue] (7) to (18);
\draw[edge][blue] (7) to (19);
\draw[edge][blue] (7) to (38);
\draw[edge][blue] (7) to (39);
\draw[edge][blue] (8) to (20);
\draw[edge][blue] (8) to (21);
\draw[edge][blue] (8) to (32);
\draw[edge][blue] (8) to (33);
\draw[edge][blue] (9) to (22);
\draw[edge][blue] (9) to (23);
\draw[edge][blue] (9) to (34);
\draw[edge][blue] (9) to (35);
\draw[edge][gray] (10) to (24);
\draw[edge][gray] (10) to (25);
\draw[edge][gray] (10) to (40);
\draw[edge][gray] (10) to (41);
\draw[edge][gray] (11) to (26);
\draw[edge][gray] (11) to (27);
\draw[edge][gray] (11) to (42);
\draw[edge][gray] (11) to (43);
\end{tikzpicture}
   }
} 

\end{tabular}
\caption{ Eigenvalues of multiplicity five: (Left) With an argument similar to Lemma \ref{lemma: mainLemmaLifttoBranches} (Lifting-\&-Gluing), we can lift a Dirichlet eigenvector $f$ of a particular Jacobi submatrix to the subgraph colored blue. Moreover, we lift the same eigenvector but with the opposite sign to the subgraph colored gray.  The constructed vector is then an eigenvector of $\hamilton_{\ell}$ with the same eigenvalue. In this way, we can construct a total of four eigenvectors to the same eigenvalue. (Right) Lifting the same eigenvector as described in the right-hand side Figure will result in the fifth eigenvector of $\hamilton_{\ell}$ with the same eigenvalue.}
\label{fig:5multiplicityTheorem}
\end{figure}
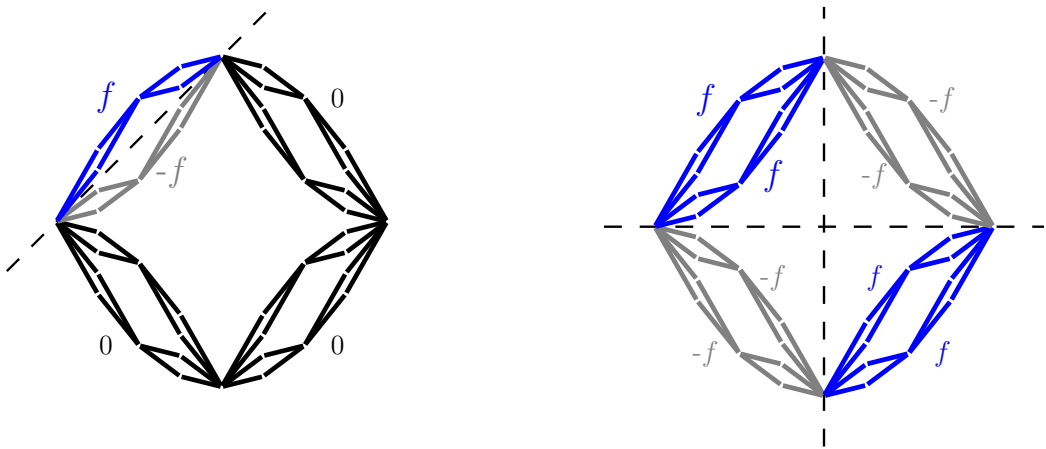

\begin{proposition}
Let $\ell \geq 3$. There exists $\lambda \in \sigma(\hamilton_{\ell})$, such that the multiplicity of $\lambda$ is $5$.
\end{proposition}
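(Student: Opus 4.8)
\emph{Proof strategy.} The plan is to read off the multiplicity of a single, carefully chosen eigenvalue from the scale-by-scale bookkeeping underlying Theorem \ref{key-thm2}. Building $\hamilton_{\ell}$ from the chain $G_0=D_N$, $N=2^{\ell}$, via the projective construction of \cref{sec:HamblyKumagaiDiamond}, every eigenvector is either a radial lift of a $\jacobi$-eigenvector or, for exactly one step $i\in\{1,\dots,\ell\}$, a Lifting-\&-Gluing eigenvector (Lemma \ref{lemma: mainLemmaLifttoBranches}) attached to a Dirichlet eigenvector of one of the Krawtchouk sub-chains doubled at that step. Since every step glues $k=2$ copies, each Dirichlet eigenvalue of such a sub-chain produces exactly $k-1=1$ new eigenvector per doubled segment. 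First I would record the resulting multiplicity formula: writing $m_{\ell}(\lambda)$ for the multiplicity of $\lambda$ in $\sigma(\hamilton_{\ell})$,
\[
m_{\ell}(\lambda) = \big[\lambda\in\sigma(\jacobi)\big] + \sum_{i=1}^{\ell} \#\big\{\text{step-}i\text{ segments } D : \lambda\in\sigma(\jacobi^{D}_{D})\big\},
\]
and the whole task reduces to exhibiting one $\lambda$ for which the right-hand side equals $5$.

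Next I would isolate the two coarsest scales, which is precisely the construction drawn in Figure \ref{fig:5multiplicityTheorem}. Step $1$ doubles the full diameter $[0,N]$ into the two super-branches of the coarsest diamond, so each Dirichlet eigenvalue of the length-$N$ block $\jacobi^{D}_{[0,N]}$ yields one eigenvector (the antisymmetric $\pm f$ pattern of Figure \ref{fig:5multiplicityTheorem} (Right)). Step $2$ cuts both super-branches at their midpoints and doubles the four resulting half-diameters; by the radial-coupling assumption \ref{graphAssumptions} these four segments carry identical Dirichlet spectra, so each such eigenvalue yields four eigenvectors, one localized on each half-diamond lens (Figure \ref{fig:5multiplicityTheorem} (Left)). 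The structural fact forcing these two families to meet is the persymmetry $J_k=J_{N+1-k}$ of \eqref{KrawtchoukCoupling}: the matrix $\jacobi^{D}_{[0,N]}$ is persymmetric, its eigenvectors are symmetric or antisymmetric about the centre $N/2$, and the antisymmetric ones vanish at $N/2$ and hence restrict to Dirichlet eigenvectors of the half-diameter. Thus $\sigma(\jacobi^{D}_{[0,N/2]})\subseteq\sigma(\jacobi^{D}_{[0,N]})$, so every half-diameter eigenvalue is automatically a full-diameter eigenvalue, and being an eigenvalue of a Jacobi matrix it is simple there; steps $1$ and $2$ therefore contribute exactly $1+4=5$. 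The five eigenvectors are independent: the four lens vectors have disjoint supports and are antisymmetric across the step-$2$ doubling, whereas the step-$1$ vector is symmetric across that doubling (a pullback), matching the orthogonality already observed for the twelve $\hamilton_2$-eigenvectors.

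The decisive choice is to take $\lambda$ to be the \emph{largest} Dirichlet eigenvalue of the half-diameter block $\jacobi^{D}_{[0,N/2]}$, which is nonzero because $\ell\ge 3$ forces the half-diameter to have length $2^{\ell-1}\ge 4$ (this is exactly where the hypothesis $\ell\ge3$ enters). After the dyadic cuts, every finer segment (steps $3,\dots,\ell$) is a proper contiguous principal sub-block of one of the two half-diameter blocks, both of which have spectral radius $\lambda$. Since these blocks are irreducible Jacobi matrices with zero diagonal (so symmetric spectra), Cauchy interlacing is strict at the top and deleting any index strictly lowers the largest eigenvalue; hence $\lambda$ strictly exceeds every eigenvalue of every finer segment, and steps $3,\dots,\ell$ contribute $0$. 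I would then conclude $m_{\ell}(\lambda)=0+1+4+0=5$ provided the radial term vanishes.

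The main obstacle is precisely that radial term $[\lambda\in\sigma(\jacobi)]$: one must rule out that the top half-diameter eigenvalue coincides with one of the integers $\{-2^{\ell-1},\dots,2^{\ell-1}\}=\sigma(\jacobi)$. This is the only non-structural step, and I would settle it by analysing the small Krawtchouk Dirichlet spectra directly. For $\ell=3$ it is explicit: the half-diameter is the length-$4$ block with $\sigma(\jacobi^{D}_{[0,4]})=\{0,\pm2\sqrt2\}$, the full diameter $\jacobi^{D}_{[0,8]}$ has characteristic polynomial $\mu(\mu^{2}-8)(\mu^{4}-18\mu^{2}+35)$ so that $\lambda=2\sqrt2\in\sigma(\jacobi^{D}_{[0,8]})$, the finest segments carry only the eigenvalue $0$, and $2\sqrt2$ is neither integral nor an eigenvalue of a finer block, giving $m_3(2\sqrt2)=5$; the case $\ell=4$ behaves identically. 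The delicate point in general is to guarantee this non-integrality (equivalently, the strict separation of the top Dirichlet eigenvalue from the integer Krawtchouk spectrum) for all $\ell$, which upgrades the easy bound ``$m_{\ell}(\lambda)\ge 5$'' to the exact statement.
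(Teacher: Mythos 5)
Your five eigenvectors are exactly the paper's: the four antisymmetric lens-pair vectors coming from the step-$2$ doubling and the one global vector coming from the step-$1$ doubling are precisely what Figure \ref{fig:5multiplicityTheorem} depicts, and the paper's entire proof consists of that figure plus the phrase ``argument similar to Lemma \ref{lemma: mainLemmaLifttoBranches}''. So the core construction is the same; where you differ is in making the bookkeeping explicit, and those additions are correct. Your multiplicity formula is the right quantitative reading of Theorem \ref{key-thm2} (mutually orthogonal invariant subspaces, one per doubled segment, each carrying a simple Jacobi spectrum); the persymmetry argument $\sigma(\jacobi^D_{[0,N/2]})\subseteq\sigma(\jacobi^D_{[0,N]})$ is the reason the fifth (global) eigenvector exists at the same eigenvalue as the four localized ones, and it is only implicit in the paper's figure; and your choice of the \emph{top} half-diameter Dirichlet eigenvalue, combined with strict Perron--Frobenius monotonicity of the spectral radius under passage to proper principal submatrices, correctly annihilates all contributions from steps $3,\dots,\ell$ --- a point the paper never addresses at all. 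Your $\ell=3$ computation ($\lambda=2\sqrt{2}$, multiplicity $0+1+4+0=5$) agrees with the appendix table.

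The genuine gap is the one you flag yourself: exactness requires $\lambda\notin\sigma(\jacobi)=\{-2^{\ell-1},\dots,2^{\ell-1}\}$, and you only prove this for $\ell=3$. The assertion that $\ell=4$ ``behaves identically'' is itself unproved, and it is not innocuous: there the top half-diameter Dirichlet eigenvalue is $\approx 6.9969$, uncomfortably close to the radial eigenvalue $7$, so a genuine computation (or argument) is needed to separate them. For $\ell\geq 5$ your argument, as written, only yields $m_{\ell}(\lambda)\in\{5,6\}$. You should know, however, that the paper's own proof does not close this either: the figure-based argument exhibits five independent eigenvectors, i.e.\ multiplicity at least $5$, and says nothing about the upper bound, which is corroborated only by the numerical eigenvalue tables. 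In that sense your attempt is strictly more complete than the published argument; to finish it one would need to show that the largest Dirichlet eigenvalue of the half-diameter Krawtchouk block is never an integer, or else settle for the weaker statement---which is all the paper's proof actually establishes---that the multiplicity is at least $5$.
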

\begin{proof}
     {It follows by using an argument similar to Lemma \ref{lemma: mainLemmaLifttoBranches} (Lifting-\&-Gluing). See also Figure \ref{fig:5multiplicityTheorem}
and its caption.}
\end{proof}
We will refer to eigenvectors of $\hamilton_{\ell}$ that are supported on a proper subset of $V(HK_{\ell})$ as localized eigenvectors.
\begin{proposition}
The total number of localized eigenvectors of $\hamilton_{\ell}$ is $\frac{2 \cdot 4^{\ell}+4}{3}-2^{\ell}-2^{\ell-1}$, $\ell \geq 3$. 
\end{proposition}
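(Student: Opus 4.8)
The plan is to obtain the count as (total number of eigenvectors) minus (number of eigenvectors supported on the whole graph), so I would begin by pinning down the total, which is simply $\dim L^2(HK_\ell)=|V(HK_\ell)|$. Using Definition \ref{HamblyKumagaiDiamond}, the passage from $HK_{\ell-1}$ to $HK_\ell$ replaces each of its $4^{\ell-1}$ edges by a diamond, retaining the old vertices and introducing exactly two new mid-branch vertices per edge. This gives the recursion $|V(HK_\ell)|=|V(HK_{\ell-1})|+2\cdot 4^{\ell-1}$ with $|V(HK_0)|=2$, and summing the geometric series yields $|V(HK_\ell)|=2+2\sum_{k=0}^{\ell-1}4^{k}=\tfrac{2\cdot 4^{\ell}+4}{3}$, which is the leading term of the claimed formula. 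Thus it remains to show that the number $g_\ell$ of \emph{non-localized} eigenvectors, i.e.\ those nonvanishing on all of $V(HK_\ell)$, equals $2^{\ell}+2^{\ell-1}=3\cdot 2^{\ell-1}$.

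The structural input for counting $g_\ell$ is the Lifting-\&-Gluing Lemma \ref{lemma: mainLemmaLifttoBranches}: every eigenvector it produces is antisymmetric across a pair of branches and so vanishes at the boundary (glue) vertices of those branches; in particular it is localized. Consequently no non-localized eigenvector is of Lifting-\&-Gluing type across the top-level branches, and I would isolate the non-localized part through the involution $\tau$ of $HK_\ell$ that exchanges the two top branches. Decomposing $L^2(HK_\ell)$ into the $\pm1$ eigenspaces of $\tau$, the $(-1)$-part consists entirely of eigenvectors that vanish at $x_L$ and $x_R$ and is therefore localized, while the $(+1)$-part is canonically identified with the spectral problem on the folded graph, namely two copies of $HK_{\ell-1}$ joined in series. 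A second fold across the midpoint (the involution $x_L\leftrightarrow x_R$) then reduces the symmetric part to a single copy of $HK_{\ell-1}$ carrying a Neumann- or Dirichlet-type condition at the identified endpoint. Tracking which of the resulting modes remain nonvanishing at both terminals produces a recursion for $g_\ell$.

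I expect the outcome of this bookkeeping to be the doubling recursion $g_\ell=2\,g_{\ell-1}$ with base value $g_1=3$ (the three radial modes of the single diamond $HK_1$), whose solution $g_\ell=3\cdot 2^{\ell-1}=2^{\ell}+2^{\ell-1}$ gives the stated localized count $\tfrac{2\cdot 4^{\ell}+4}{3}-2^{\ell}-2^{\ell-1}$. The main obstacle will be making the fold step rigorous: because $\hamilton_\ell$ has highly degenerate eigenvalues (as already seen for $HK_2$ and in the preceding multiplicity-five proposition), one must argue at the level of $\tau$-eigenspaces rather than of individual eigenvectors, and must carry the Neumann/Dirichlet boundary data correctly through the two folds so that precisely a doubling of the fully supported modes survives while every newly created mode vanishes somewhere. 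The orthogonality and completeness of the constructed families (already exploited in the explicit $HK_2$ computation) ensure that nothing is double counted or missed, and checking that the doubling has stabilized accounts for the hypothesis $\ell\ge 3$, the regime in which the localized families produced at every scale are all present.
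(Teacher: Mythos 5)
Your opening step is fine and coincides with the paper's: the eigenvector count equals $|V(HK_\ell)|=\frac{2\cdot 4^{\ell}+4}{3}$, so everything hinges on showing that exactly $2^{\ell}+2^{\ell-1}$ eigenvectors are non-localized. The gap is in the sentence ``the $(-1)$-part consists entirely of eigenvectors that vanish at $x_L$ and $x_R$ and is therefore localized.'' Vanishing at the glue vertices is not what ``localized'' means in this proposition: it means supported inside a proper subgraph (identically zero on entire sub-branches), as for the vectors produced by the Lifting-\&-Gluing Lemma \ref{lemma: mainLemmaLifttoBranches}. It cannot mean ``has a zero entry,'' and your working definition ``non-localized $=$ nonvanishing on all of $V(HK_\ell)$'' cannot be the one in force, because the eigenvectors the count treats as non-localized themselves have zeros: the radial eigenvector of $\hamilton_2$ with eigenvalue $1$ (Figure \ref{fig:Neu_level2_e1_1st}) vanishes on a whole transversal layer, and the ``fifth'' eigenvectors of Figure \ref{fig:5multiplicityTheorem} (Right) vanish at the four junction vertices. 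Under your strict reading the stated formula would actually be false.

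This misclassification is fatal to the recursion, not cosmetic. The $2^{\ell-1}-1$ ``fifth'' eigenvectors — which are needed to reach $2^{\ell}+2^{\ell-1}=(2^{\ell}+1)+(2^{\ell-1}-1)$ — are precisely \emph{antisymmetric} under your involution $\tau$ exchanging the two top branches: they carry a pattern $f$ on one branch and its negative on the $\tau$-image (that is the content of Figure \ref{fig:5multiplicityTheorem} (Right)), and they are extended, touching every sub-branch of the graph. Your scheme discards the whole $(-1)$-sector as localized, so the folded $(+1)$-sector analysis can at best recover the $2^{\ell}+1$ radial modes; carried out correctly it would give a localized count of $\frac{2\cdot 4^{\ell}+4}{3}-(2^{\ell}+1)$ (e.g.\ $35$ instead of $32$ for $\ell=3$), and the doubling $g_\ell=2g_{\ell-1}$ cannot emerge from the symmetric sector alone. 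The paper's proof takes a different and shorter route: the inductive algorithm behind Theorem \ref{key-thm2} produces a \emph{complete} orthogonal eigenbasis, and one simply reads off which basis vectors are extended — the $2^{\ell}+1$ radial lifts plus the $2^{\ell-1}-1$ fifth vectors — every other basis vector being of Lifting-\&-Gluing type and hence supported on a proper subgraph. To repair your argument you would need to (i) adopt the support-on-a-subgraph meaning of localized, and (ii) keep the $(-1)$-sector and count the extended (fifth) eigenvectors inside it, rather than discard that sector wholesale.
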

\begin{proof}
The number of vertices of $HK_{\ell}$ at level $\ell \in \nn$ is $|V(HK_{\ell})| = \frac{2 \cdot 4^{\ell}+4}{3}$.      {Next, the above-described algorithm tells us how to construct the eigenvectors}. In particular, it shows that the only non-localized eigenvectors are the $2^{\ell}+1$ radial eigenvectors and the $2^{\ell-1}-1$ ``fifth''  eigenvectors in Figure \ref{fig:5multiplicityTheorem} (Right).
\end{proof}

For higher levels we can proceed similarly and find the spectrum by considering a collection of Jacobi matrices.      {A convenient representation of the higher levels spectrum is} the
\textit{integrated density of states} of $\hamilton_{\ell}$, that is defined as
\begin{equation*}
N_{\ell}(x) := \frac{\# \{\lambda \leq x \ | \ \lambda \text{ is an eigenvalue of } \hamilton_{\ell} \}}{|V(HK_{\ell})|},
\end{equation*}
where $\#$ counts the number of eigenvalues of $\hamilton_{\ell}$ less or equal than $x$.  Figure \ref{fig:DOSHK1} shows the integrated density of states of $\hamilton_{\ell}$ for both level 6 (Left) and level 7 (Right).
%
%  HERE UNCOMMENT
%
%%%%%%%%%%%%%%%%%%%%%%%%%
\begin{figure}[htbp]
\begin{center}
\begin{tabular}{C{.49\textwidth}C{.49\textwidth}} 
\subfigure{
\resizebox{8.9cm}{!}{
\includegraphics{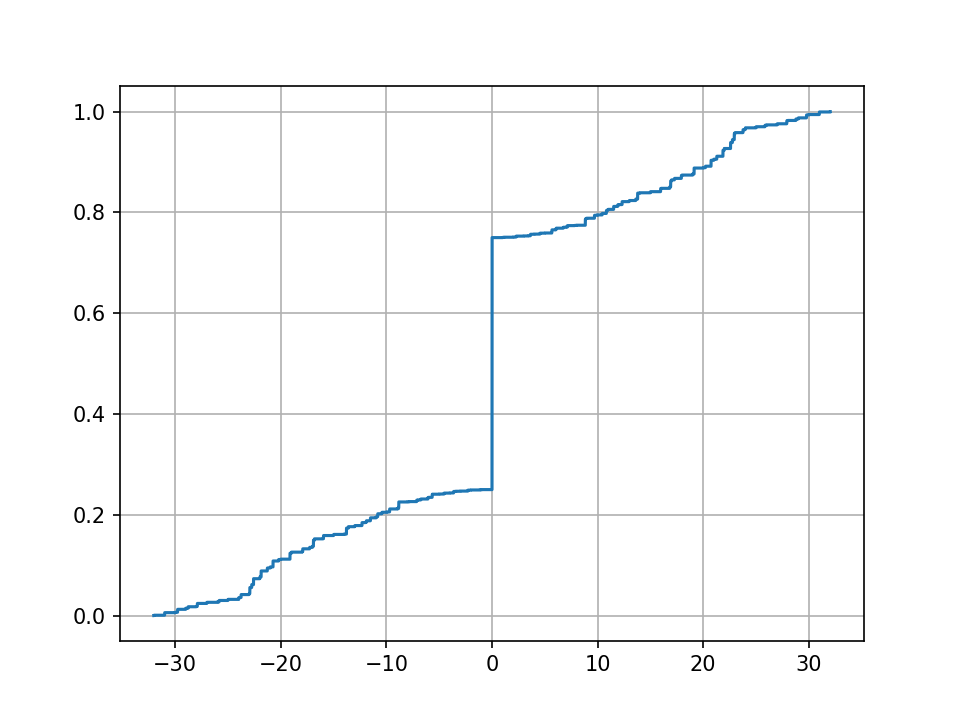}
  }

} &
%%%%%%%%%%%%%%%%%%%%%% 2
\subfigure{
\resizebox{8.9cm}{!}{
\includegraphics{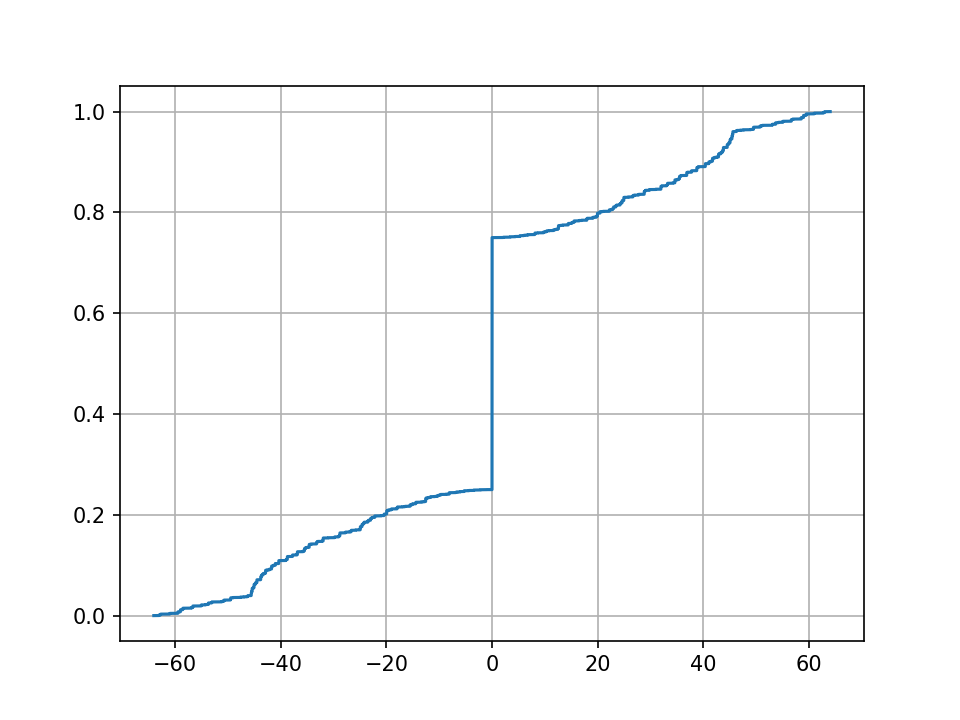}
  }
}
\end{tabular}
\caption{Integrated density of states of $\hamilton_{\ell}$:  Hambly-Kumagai diamond graph of level 6 (Left) and level 7 (Right) .}
\label{fig:DOSHK1}
\end{center}
\end{figure}
%%%%%%%%%%%%%%%%%%%%%%%%%%%%%%

\subsection{Lang-Plaut diamond graphs}

%%%%%%%%%%%%%%%%%%%%%%%%%%%%%%%%%%%%5
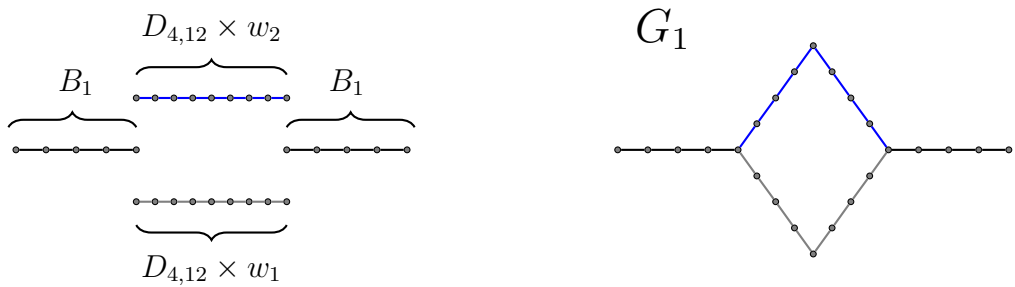
\begin{figure}[htb]
\centering
\resizebox{!}{!}{\begin{tikzpicture}

\tikzset{vertex/.style={circle, draw, fill=black!50,inner sep=0pt, minimum width=2.2pt}} 
\tikzset{edge/.style = {draw=black, thick}}

\node[draw=none] at (6,1.6){\scalebox{1}{\scalebox{1.5}{$G_1$}}};

%\node[draw=none] at (0,1.6){\scalebox{1}{$\jacobi_{2,1}^{D}$}};

%\node[draw=none] at (1.7,0){\scalebox{1}{$\jacobi_{2,2}^{D}$}};
%%%%%%%%%%%%%%%%%%%%%%%%%%%%%%%%%%%%%%%%%%

\node[draw=none] at (0,1.6){\scalebox{1}{$D_{4,12} \times {w_2}$}};

\node[draw=none] at (0,-1.6){\scalebox{1}{$D_{4,12} \times {w_1}$}};

\node[draw=none] at (-1.8,0.9){\scalebox{1}{$B_1$}};

\node[draw=none] at (1.8,0.9){\scalebox{1}{$B_1$}};

\draw[thick] [decorate,decoration= {brace,mirror,amplitude=0.2cm} ] (-1,-1) - - (1,-1);

\draw[thick] [decorate,decoration= {brace,amplitude=0.2cm} ] (-1,1) - - (1,1);

\draw[thick] [decorate,decoration= {brace,amplitude=0.2cm} ] (-2.7,0.3) - - (-1,0.3);

\draw[thick] [decorate,decoration= {brace,amplitude=0.2cm} ] (1,0.3) - - (2.7,0.3);

% vertices
\node[vertex] (0g) at  (5.4,0.0) {\scalebox{.4}{}};
\node[vertex] (1g) at  (10.6,0.0) {\scalebox{.4}{}};
\node[vertex] (2g) at  (7.0,0.0) {\scalebox{.4}{}};
\node[vertex] (3g) at  (9.0,0.0) {\scalebox{.4}{}};
\node[vertex] (4g) at  (8.0,-1.3856406460551018) {\scalebox{.4}{}};
\node[vertex] (5g) at  (8.0,1.3856406460551018) {\scalebox{.4}{}};
\node[vertex] (6g) at  (5.8,0.0) {\scalebox{.4}{}};
\node[vertex] (7g) at  (6.6,0.0) {\scalebox{.4}{}};

\node[vertex] (9g) at  (6.2,0) {\scalebox{.4}{}};
\node[vertex] (10g) at  (10.2,0.0) {\scalebox{.4}{}};
\node[vertex] (11g) at  (9.4,0.0) {\scalebox{.4}{}};

\node[vertex] (13g) at  (9.8,0) {\scalebox{.4}{}};
\node[vertex] (14g) at  (7.25,-0.3473) {\scalebox{.4}{}};
\node[vertex] (15g) at  (7.75,-1.039) {\scalebox{.4}{}};
\node[vertex] (16g) at  (7.5,-0.69) {\scalebox{.4}{}};

\node[vertex] (18g) at  (7.25,0.3473) {\scalebox{.4}{}};
\node[vertex] (19g) at  (7.75,1.039) {\scalebox{.4}{}};

\node[vertex] (21g) at  (7.5,0.69) {\scalebox{.4}{}};
\node[vertex] (22g) at  (8.75,-0.3473) {\scalebox{.4}{}};
\node[vertex] (23g) at  (8.25,-1.039) {\scalebox{.4}{}};
\node[vertex] (24g) at  (8.5,-0.69) {\scalebox{.4}{}};

\node[vertex] (26g) at  (8.75,0.3473) {\scalebox{.4}{}};
\node[vertex] (27g) at  (8.25,1.039) {\scalebox{.4}{}};

\node[vertex] (29g) at  (8.5,0.69) {\scalebox{.4}{}};
% edges
\draw[edge] (0g) to (6g);
\draw[edge] (1g) to (10g);
\draw[edge] (2g) to (7g);
\draw[edge][gray] (2g) to (14g);
\draw[edge][blue] (2g) to (18g);
\draw[edge] (3g) to (11g);
\draw[edge][gray] (3g) to (22g);
\draw[edge][blue] (3g) to (26g);
\draw[edge][gray] (4g) to (15g);
\draw[edge][gray] (4g) to (23g);
\draw[edge][blue] (5g) to (19g);
\draw[edge][blue] (5g) to (27g);

\draw[edge] (6g) to (9g);

\draw[edge] (7g) to (9g);

\draw[edge] (10g) to (13g);

\draw[edge] (11g) to (13g);
\draw[edge][gray] (14g) to (16g);

\draw[edge][gray] (15g) to (16g);

\draw[edge][blue] (18g) to (21g);

\draw[edge][blue] (19g) to (21g);
\draw[edge][gray] (22g) to (24g);

\draw[edge][gray] (23g) to (24g);

\draw[edge][blue] (26g) to (29g);

\draw[edge][blue] (27g) to (29g);

%%%%%%%%%%%%%%%%%%%%%%%%%%%%%%%%%%%%%%%%%%%%%%%

% vertices
\node[vertex] (0) at  (-2.6,0.0) {\scalebox{.4}{}};
\node[vertex] (1) at  (2.6,0.0) {\scalebox{.4}{}};
\node[vertex] (2) at  (-1.0,0.0) {\scalebox{.4}{}};

\node[vertex] (2a) at  (-1.0,0.69) {\scalebox{.4}{}};
\node[vertex] (2b) at  (-1.0,-0.69) {\scalebox{.4}{}};

\node[vertex] (3) at  (1.0,0.0) {\scalebox{.4}{}};

\node[vertex] (3a) at  (1.0,0.69) {\scalebox{.4}{}};
\node[vertex] (3b) at  (1.0,-0.69) {\scalebox{.4}{}};

\node[vertex] (4) at  (0.0,-0.69) {\scalebox{.4}{}};
\node[vertex] (5) at  (0.0,0.69) {\scalebox{.4}{}};
\node[vertex] (6) at  (-2.2,0.0) {\scalebox{.4}{}};
\node[vertex] (7) at  (-1.4,0.0) {\scalebox{.4}{}};

\node[vertex] (9) at  (-1.8,0) {\scalebox{.4}{}};
\node[vertex] (10) at  (2.2,0.0) {\scalebox{.4}{}};
\node[vertex] (11) at  (1.4,0.0) {\scalebox{.4}{}};

\node[vertex] (13) at  (1.8,0) {\scalebox{.4}{}};
\node[vertex] (14) at  (-0.75,-0.69) {\scalebox{.4}{}};
\node[vertex] (15) at  (-0.25,-0.69) {\scalebox{.4}{}};
\node[vertex] (16) at  (-0.5,-0.69) {\scalebox{.4}{}};

\node[vertex] (18) at  (-0.75,0.69) {\scalebox{.4}{}};
\node[vertex] (19) at  (-0.25,0.69) {\scalebox{.4}{}};

\node[vertex] (21) at  (-0.5,0.69) {\scalebox{.4}{}};
\node[vertex] (22) at  (0.75,-0.69) {\scalebox{.4}{}};
\node[vertex] (23) at  (0.25,-0.69) {\scalebox{.4}{}};
\node[vertex] (24) at  (0.5,-0.69) {\scalebox{.4}{}};

\node[vertex] (26) at  (0.75,0.69) {\scalebox{.4}{}};
\node[vertex] (27) at  (0.25,0.69) {\scalebox{.4}{}};

\node[vertex] (29) at  (0.5,0.69) {\scalebox{.4}{}};
% edges
\draw[edge] (0) to (6);
\draw[edge] (1) to (10);
\draw[edge] (2) to (7);
\draw[edge][gray] (2b) to (14);
\draw[edge][blue] (2a) to (18);
\draw[edge] (3) to (11);
\draw[edge][gray] (3b) to (22);
\draw[edge][blue] (3a) to (26);
\draw[edge][gray] (4) to (15);
\draw[edge][gray] (4) to (23);
\draw[edge][blue] (5) to (19);
\draw[edge][blue] (5) to (27);

\draw[edge] (6) to (9);

\draw[edge] (7) to (9);

\draw[edge] (10) to (13);

\draw[edge] (11) to (13);
\draw[edge][gray] (14) to (16);

\draw[edge][gray] (15) to (16);

\draw[edge][blue] (18) to (21);

\draw[edge][blue] (19) to (21);
\draw[edge][gray] (22) to (24);

\draw[edge][gray] (23) to (24);

\draw[edge][blue] (26) to (29);

\draw[edge][blue] (27) to (29);

\end{tikzpicture}

%%%%%%%%%%%%%%%}
	\caption{ (Left) To construct $G_1$ we take two copies of the $1D$ chain $G_0=D_{16}$ (Recall that the vertices set is $V(D_{16})=\{0, \dots , 16\}$) and choose the subgraph $B_1$ such that $G_0 \backslash B_1$ contains only the $1D$ chain with the set of vertices $V(D_{4,12})=\{4,5, \dots, 12\}$. We will refer to this 1D chain as $D_{4,12}$. For the vertical multiplier space we set $W=\{w_1,w_2\}$.  (Right)  $G_1$ is given by gluing the two copies $D_{4,12} \times \{ w_1\}$ and $D_{4,12} \times \{ w_2\}$ together with $B_1$ at their boundary points.}
	\label{fig:plautlangToG1}
\end{figure}
%%%%%%%%%%%%%%%%%%%%%%%%%%%%%%%%%%%%

%%%%%%%%%%%%%%%%%%%%%%%%%%%%%%%%%%%%%
\begin{figure}[htb]
\centering
\resizebox{!}{!}{\input{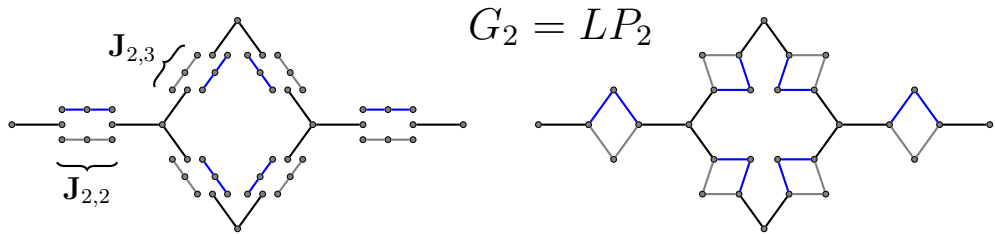}}
	\caption{ (Left) The Jacobi matrices $\jacobi_{2,3}$ and $\jacobi_{2,2}$  are relevant for the computation of  $\sigma (\hamilton_2)$. (Right) Lang-Plaut diamond graphs of level-$2$,  $G_2=HK_2$.}
	\label{fig:plautLangToG2}
\end{figure}

The second model is also an example of a two-point  self-similar graph in the sense of \cite{MT}. 
It is another prominent example of diamond-type graphs, that was investigated in  \cite{LP}. We will refer to this model as \textit{Lang-Plaut diamond graphs}. The following definition gives a formal description of the Lang-Plaut diamond graphs.
\begin{definition}
We refer to a sequence of graphs $\{LP_{\ell}\}_{\ell \geq 0}$ as \textit{Lang-Plaut diamond graphs}, when it is constructed as follows.
\begin{itemize}
    \item $LP_0$ is initialized as the one edge graph connecting a node $x_L$ with another node $x_R$. 
    \item At level $\ell$, we construct $LP_{\ell}$ by segmenting each edge from the previous level $LP_{\ell-1}$ into three new edges. The inner edge of the three new edges is then replaced by two new branches, whereas each new branch is then segmented into two edges.
\end{itemize}
\label{LangPlautDiamond}
\end{definition}
The first four levels of the Lang-Plaut diamond graphs are displayed in \cite[Figure 4, page 10]{2019arXiv190908668D}.
Let $V(LP_{\ell})$ be the vertices set of $LP_{\ell}$. In the same manner as the Hambly-Kumagai diamond graphs, it is easily seen that the transversal decomposition $V(LP_{\ell})=\PI_A^{-1}(0)\cup\PI_A^{-1}(1) \ldots \cup \PI_A^{-1}(N)$ with respect to $A=\{x_L\}$ induces a $1D$ chain $D_{N}$ such that  $N=4^\ell$. The Jacobi matrix associated with $D_{N}$, $N=4^\ell$ is denoted by  $\jacobi_{\ell}$. We lift $\jacobi_{\ell}$ to a Hamiltonian $\hamilton_{\ell}$ on $LP_{\ell}$.

%%%%%%%%%%%%%%%%%%%%%%%%%%%%%%%%%%%%%5
%\begin{figure}[htb]
%\centering
%\resizebox{!}{!}{\input{PlautLangLevel2From0to1Main.tikz}}
%	\caption{To construct $G_1$ we take two copies of the $1D$ chain $G_0=\{0, \dots , 16\}$ and choose the subgraph $B_1$ such that $G_0 \backslash B_1$ is the $1D$ chain $D_{4,12}=\{4,5, \dots, 12\}$. For the vertical multiplier space we set $W=\{w_1,w_2\}$.  
%	$G_1$ is given by gluing
%	the two copies $D_{4,12} \times \{ w_1\}$ and $D_{4,12} \times \{ w_2\}$ together with $B_1$ at their boundary points.}
%	\label{fig:plautlangToG1}
%\end{figure}
%%%%%%%%%%%%%%%%%%%%%%%%%%%%%%%%%%%%%

\subsubsection{Spectrum of the Hamiltonian $\hamilton_{2}$}

     {We are going to study this case as we did it in the case of the first model. Namely, we demonstrate how} to apply Theorem \ref{key-thm2} while determining the spectrum of $\hamilton_2$. Similarly, we construct the level-$2$  Lang-Plaut diamond graph $LP_2$ using a sequence of discretized projective limit spaces $\{G_0, G_1, G_2\}$ (see Definition \ref{projeLimConDef}) such that  $LP_2 = G_2$. We proceed as follows:
\begin{enumerate}
   \item[\textbf{(Step 1)}] $G_0$ is initialized to be the induced auxiliary $1D$ chain $D_{16}$ equipped with $\jacobi_2$, whose entries are given in (\ref{KrawtchoukCoupling}). Similar to the first model we can show $\sigma(\jacobi_2) \subset \sigma(\hamilton_2)$ and generate $17$ radial eigenvectors. Note that $\sigma(\jacobi_2) = \{-8,-7,\dots,7,8\}$, see Table \ref{tab:PLlevel2ChainSpectrumNew} (Left).

    \item[\textbf{(step 2)}] To construct the graph $G_1$, we proceed as described in Figure
    \ref{fig:plautlangToG1}. Again, with a similar argument to the Lifting \& Gluing Lemma, we lift an eigenvector from the $1D$ chain $D_{4,12}$ to $G_2$. We denote the Jacobi matrix associated with $D_{4,12}$ by 
    \begin{equation*}
\jacobi_{2,1} =  \left[\begin{matrix}0 & \sqrt{15} & 0 & 0 & 0 & 0 & 0 & 0 & 0\\\sqrt{15} & 0 & \frac{\sqrt{66}}{2} & 0 & 0 & 0 & 0 & 0 & 0\\0 & \frac{\sqrt{66}}{2} & 0 & \frac{\sqrt{70}}{2} & 0 & 0 & 0 & 0 & 0\\0 & 0 & \frac{\sqrt{70}}{2} & 0 & 3 \sqrt{2} & 0 & 0 & 0 & 0\\0 & 0 & 0 & 3 \sqrt{2} & 0 & 3 \sqrt{2} & 0 & 0 & 0\\0 & 0 & 0 & 0 & 3 \sqrt{2} & 0 & \frac{\sqrt{70}}{2} & 0 & 0\\0 & 0 & 0 & 0 & 0 & \frac{\sqrt{70}}{2} & 0 & \frac{\sqrt{66}}{2} & 0\\0 & 0 & 0 & 0 & 0 & 0 & \frac{\sqrt{66}}{2} & 0 & \sqrt{15}\\0 & 0 & 0 & 0 & 0 & 0 & 0 & \sqrt{15} & 0\end{matrix}\right]
\end{equation*}    
   Similar to the first model we can show that $\sigma(\jacobi^D_{2,1}) \subset \sigma(\hamilton_2)$ and generate $7$ additional eigenvectors. The eigenvalues $\jacobi^D_2$ are listed in Table \ref{tab:PLlevel2ChainSpectrumNew} (Middle).
    \item[\textbf{(step 3)}] To construct finally the level-$2$ Lang-Plaut diamond graphs $G_2=HK_2$, we proceed similarly to the first model. The relevant Jacobi matrices $\jacobi_{2,2}$ and $\jacobi_{2,3}$ are indicated in Figure 	\ref{fig:plautLangToG2} (Left).  Again, due to the mirror symmetry, it is sufficient to consider two out of six matrices.  We can  show $\sigma(\jacobi_{2,2}), \ \sigma(\jacobi_{2,3}) \subset \sigma(\hamilton_2)$ and generate 6 additional eigenvectors. Note that $\sigma(\jacobi_{2,2}^D)=\sigma(\jacobi_{2,3}^D) = \{0\}$.
\end{enumerate}
The generated 30 eigenvectors are orthogonal. Hence $\sigma (\hamilton_2)= \sigma(\jacobi_2) \cup \sigma(\jacobi_{2,1}^D) \cup \sigma(\jacobi_{2,2}^D) \cup \sigma(\jacobi_{2,3}^D)$. Figure \ref{fig:DOSHK} shows the integrated density of states of $\hamilton_{\ell}$ for both level 4 (Left) and level 5 (Right).
%

%%%%%%%%%%%%%%%%%%%%%%%%%%%%%%%%%%%%%%
%\begin{figure}[htb]
%\centering
%\resizebox{!}{!}{\input{PlautLangLevel2From1to2Main.tikz}}
%	\caption{ Left: .}
%	\label{fig:plautLangToG2}
%\end{figure}
%

%%%%%%%%%%%%%%%%%%%%%%%%%
\begin{figure}[htbp]
\begin{center}
\begin{tabular}{C{.49\textwidth}C{.49\textwidth}} 
\subfigure{
\resizebox{7.9cm}{!}{
\includegraphics{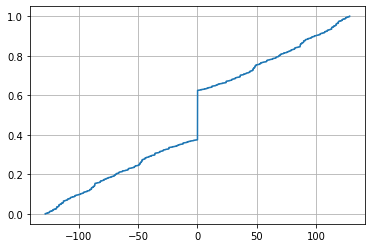}
  }

} &
%%%%%%%%%%%%%%%%%%%%%% 2
\subfigure{
\resizebox{7.9cm}{!}{
\includegraphics{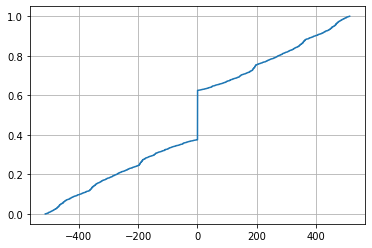}
  }
}
\end{tabular}
\caption{ Integrated density of states of $\hamilton_{\ell}$: Lang-Plaut diamond graph of level 4 (left) and level 5 (right).}
\label{fig:DOSHK}
\end{center}
\end{figure}
%%%%%%%%%%%%%%%%%%%%%%%%%%%%%%

\begin{table}[h!]
\centerline{\hbox{ 
\begin{tabular}{lcc}
& Eigenvalue & Multiplicity \\
\hline
1 & -8.0 & 1 \\
2 & -7.0 & 1 \\
3 & -6.0 & 1 \\
4 & -5.0 & 1 \\
5 & -4.0 & 1 \\
6 & -3.0 & 1 \\
7 & -2.0 & 1 \\
8 & -1.0 & 1 \\
9 & 0.0 & 1 \\
10 & 1.0 & 1 \\
11 & 2.0 & 1 \\
12 & 3.0 & 1 \\
13 & 4.0 & 1 \\
14 & 5.0 & 1 \\
15 & 6.0 & 1 \\
16 & 7.0 & 1 \\
17 & 8.0 & 1 \\
\hline
\\
\end{tabular}
\quad \quad 
\begin{tabular}{lcc}
& Eigenvalue & Multiplicity \\
\hline
1 & -7.7536903 & 1 \\
2 & -5.8309519 & 1 \\
3 & -3.1432923 & 1 \\
4 & 0.0 & 1 \\
5 & 3.1432923 & 1 \\
6 & 5.8309519 & 1 \\
7 & 7.7536903 & 1 \\
\hline
\end{tabular}
\quad \quad 
\begin{tabular}{lcc}
& Eigenvalue & Multiplicity \\
\hline
1 & -8.0 & 1 \\
2 & -7.753690 & 1 \\
3 & -7.0 & 1 \\
4 & -6.0 & 1 \\
5 & -5.830951 & 1 \\
6 & -5.0 & 1 \\
7 & -4.0 & 1 \\
8 & -3.143292 & 1 \\
9 & -3.0 & 1 \\
10 & -2.0 & 1 \\
11 & -1.0 & 1 \\
12 & 0.0 & 8 \\
13 & 1.0 & 1 \\
14 & 2.0 & 1 \\
15 & 3.0 & 1 \\
16 & 3.143292 & 1 \\
17 & 4.0 & 1 \\
18 & 5.0 & 1 \\
19 & 5.830951 & 1 \\
20 & 6.0 & 1 \\
21 & 7.0 & 1 \\
22 & 7.753690 & 1 \\
23 & 8.0 & 1 \\
\hline
\\
\end{tabular}}}
\caption{Lang-Plaut diamond graphs  of level 2: Eigenvalues table of $\jacobi_2$ (Left), $\jacobi^D_{2,1}$ (Middle) and of 
    $\hamilton_2$ (Right).}
\label{tab:PLlevel2ChainSpectrumNew}
\end{table}
%%%%%%%%%%%%%%%%%%%%%%%%

\section{Further General Geometric Constructions: Two-point self similar graphs}

%\subsection{Two-point self similar graphs}
\label{sec:2PointSG}
In \cite{MT} a broad class of infinite self-similar graphs called \textit{two-point self-similar fractal graphs} was introduced and  
the spectra of the combinatorial- and probabilistic-Laplacians on such graphs were described.
The two-point self-similar fractal graphs are related to the nested fractals with two essential fixed points
\cite{lindstrom1990brownian}. A generalization to self-similar graphs based on a finite symmetric $M$-point model (instead of two points) is constructed in \cite{MT2}.

Following \cite{MT}, we set $M=(V_M,E_M)$ and $G_0=(V_0,E_0)$ to be finite connected graphs, where $M$ is an ordered graph. We fix some $e_0 \in E_M$, which is not a loop, and vertices $\alpha, \beta \in V_M$ and $\alpha_0, \beta_0 \in V_0$, $\alpha \neq \beta$, $\alpha_0 \neq \beta_0$.
\begin{definition}[\cite{MT}, page 393]
A graph $G$ is called \textit{two point self-similar graph} with model graph $M$ and initial graph $G_0$ if the following holds:
\begin{enumerate}
\item There are finite subgraphs $\{ G_n\}_{n \geq 0} $, $G_n = (V_n, E_n)$ such that $G_n \subset G_{n+1}$, $n \geq 0$, and $G = \cup_{n \geq 0} G_n$.
\item For any $n \geq 0$ and $e \in E_M$ there is a graph homomorphism $\Psi_n^{e}:G_n \to G_{n+1}$ such that $G_{n+1}= \cup_{e \in E_M}\Psi_n^{e}(G_n)$ and $\Psi_n^{e_0}$ is the inclusion of $G_n$ to $G_{n+1}$.
\item For all $n \geq 0$ there are two vertices $\alpha_n, \beta_n \in V_n$ such that $\Psi_n^{e}$ restricted to $G_n \backslash \{\alpha_n, \beta_n \}$ is a one-to-one mapping for every $e \in E_M$. Moreover $\Psi_n^{e_1}(V_n \backslash \{\alpha_n, \beta_n \}) \cap \Psi_n^{e_2}(V_n \backslash \{\alpha_n, \beta_n \}) = \emptyset$ if $e_1 \neq e_2$.
\item For $n \geq 1$, there is an injection $\kappa_n:V_M \to V_n$ such that $\alpha_n = \kappa_n(\alpha)$, $\beta_n = \kappa_n(\beta)$ and for every edge $e =(a,b) \in E_M$, $\Psi_{n-1}^{e}(\alpha_{n-1})= \kappa_n(a)$ and $\Psi_{n-1}^{e}(\beta_{n-1})= \kappa_n(b)$.
\end{enumerate}
We say that the vertices $\alpha_n, \beta_n$ are the boundary vertices of $G_n$, i.e. $\partial G_n = \{ \alpha_n, \beta_n\}$ and $int(G_n)=V_n \backslash \{ \alpha_n, \beta_n\}$ are the interior vertices of $G_n$.
\end{definition}

\begin{proposition}
Suppose that the graphs $M$ and $G_0$ satisfy  the assumptions \ref{graphAssumptions}, where the transversal decomposition of $M$ and $G_0$ are with respect $\alpha$ (or $\beta$) and $\alpha_0$ (or $\beta_0$), respectively. Moreover, we assume $\mathbf{deg}_{0}(x)=0$ for all $x \in V_M$. Then the assumptions \ref{graphAssumptions} hold for each $G_i$, $i \geq 0$. And the transversal decomposition of $G_i$ is with respect $\alpha_i$ (or $\beta_i$).
\end{proposition}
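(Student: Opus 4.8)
The plan is to argue by induction on $i$, carrying the slightly stronger inductive hypothesis that $G_i$ admits a transversal decomposition with respect to $\alpha_i$ in which $\PI_{\alpha_i}^{-1}(0)=\{\alpha_i\}$, the vertex $\beta_i$ lies in the last layer $\PI_{\alpha_i}^{-1}(L_i)$ with $L_i:=d(\alpha_i,\beta_i)$, and the three maps $\mathbf{deg}_{+},\mathbf{deg}_{-},\mathbf{deg}_{0}$ are constant on each layer (i.e. \cref{graphAssumptions} holds). The base case $i=0$ is precisely the hypothesis on $G_0$, and by the symmetry $\alpha\leftrightarrow\beta$ (resp.\ $\alpha_i\leftrightarrow\beta_i$) it suffices to treat the decomposition based at $\alpha_i$. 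First I would record the geometric picture: by items (2)--(4) of the definition, $G_{i+1}=\bigcup_{e\in E_M}\Psi_i^{e}(G_i)$ is the model graph $M$ with each edge $e=(a,b)\in E_M$ replaced by a copy of $G_i$ whose boundary vertices $\alpha_i,\beta_i$ are glued to $\kappa_{i+1}(a),\kappa_{i+1}(b)$, while distinct copies meet only along these images of $\alpha_i,\beta_i$ (item (3)).

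The key step is to produce the transversal decomposition of $G_{i+1}$ and to identify its layers. Since $M$ satisfies \cref{graphAssumptions} and $\mathbf{deg}_{0}(x)=0$ for every $x\in V_M$, the graph $M$ is graded: each edge joins a vertex of layer $k=\PI_\alpha(a)$ to one of layer $k+1$, and no edge stays inside a layer. Using this I would prove the distance formula
\[
\PI_{\alpha_{i+1}}(x)=k\,L_i+\PI_{\alpha_i}(x),\qquad k=\PI_\alpha(a),
\]
for a vertex $x$ in the copy attached to $e=(a,b)$ with $\PI_\alpha(a)<\PI_\alpha(b)$ (so $\PI_\alpha(b)=k+1$), where $\PI_{\alpha_i}(x)$ is the internal $G_i$-index measured from the near ($a$-side) boundary. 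Indeed, any path from $\alpha_{i+1}$ to $x$ projects to a walk in $M$ from $\alpha$ to $a$, which crosses at least $k$ edges, hence at least $k$ full copies each of length at least $L_i=d(\alpha_i,\beta_i)$, plus the internal cost $\PI_{\alpha_i}(x)$; the gradedness of $M$ forces the optimal walk to enter every copy from its near boundary, so no shortcut exists and the bound is attained. The successive $M$-levels then occupy the disjoint index blocks $[kL_i,(k+1)L_i]$, the layer index of $x$ is well-defined, and it agrees at the glued vertices $\kappa_{i+1}(a)$ (index exactly $kL_i$) where several copies meet. Thus $V(G_{i+1})=\bigcup_m \PI_{\alpha_{i+1}}^{-1}(m)$ is a genuine transversal decomposition, with $\beta_{i+1}=\kappa_{i+1}(\beta)$ in the last layer at index $L_{i+1}=L_M L_i$, where $L_M:=\PI_\alpha(\beta)$.

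It then remains to check constancy of $\mathbf{deg}_{+},\mathbf{deg}_{-},\mathbf{deg}_{0}$ on each layer of $G_{i+1}$, which splits into two cases. For an interior layer $m=kL_i+r$ with $0<r<L_i$, every vertex is an interior vertex of some copy of $G_i$ with internal index $r$; its whole neighborhood lies inside that single copy, so all three degrees equal the $G_i$-degrees of a layer-$r$ vertex, constant by the inductive hypothesis. For a boundary layer $m=kL_i$, the vertices are the images $\kappa_{i+1}(a)$ with $\PI_\alpha(a)=k$; such a vertex appears as the $\alpha_i$-boundary in each of the $\mathbf{deg}_{+}^{M}(a)$ upward copies and as the $\beta_i$-boundary in each of the $\mathbf{deg}_{-}^{M}(a)$ downward copies. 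Summing contributions yields, for instance,
\[
\mathbf{deg}_{0}^{G_{i+1}}\big(\kappa_{i+1}(a)\big)=\mathbf{deg}_{+}^{M}(a)\,\mathbf{deg}_{0}^{G_i}(\alpha_i)+\mathbf{deg}_{-}^{M}(a)\,\mathbf{deg}_{0}^{G_i}(\beta_i),
\]
together with the analogous identities $\mathbf{deg}_{+}^{G_{i+1}}(\kappa_{i+1}(a))=\mathbf{deg}_{+}^{M}(a)\,\mathbf{deg}_{+}^{G_i}(\alpha_i)$ and $\mathbf{deg}_{-}^{G_{i+1}}(\kappa_{i+1}(a))=\mathbf{deg}_{-}^{M}(a)\,\mathbf{deg}_{-}^{G_i}(\beta_i)$ (using $\mathbf{deg}_{-}^{G_i}(\alpha_i)=\mathbf{deg}_{+}^{G_i}(\beta_i)=0$, which hold because $\alpha_i,\beta_i$ are the extreme layers). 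Here the hypothesis $\mathbf{deg}_{0}\equiv 0$ on $M$ is used once more to exclude edges of $G_{i+1}$ joining two distinct images $\kappa_{i+1}(a),\kappa_{i+1}(a')$ in the same layer, which would otherwise corrupt the count. As $\mathbf{deg}_{\pm}^{M}(a)$ depend only on $k$ and the boundary degrees of $\alpha_i,\beta_i$ are fixed numbers, all three degrees are constant on the layer, closing the induction.

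The hard part will be the distance/no-shortcut lemma of the second paragraph: one must verify rigorously that the metric of $G_{i+1}$ decomposes as the scaled model distance plus the internal $G_i$ distance, with no geodesic profiting by entering a copy partway and returning or by exploiting parallel copies over a single $M$-edge. This is exactly where the gradedness of $M$ (the assumption $\mathbf{deg}_{0}\equiv 0$ on $V_M$) is indispensable, and it is also what makes the layer index unambiguous at the glued boundary vertices; once this lemma is established, the degree bookkeeping above is routine.
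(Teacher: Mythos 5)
Your proof is correct and follows essentially the same route as the paper: the paper's own proof is a two-sentence sketch observing that $G_{i+1}$ is obtained by replacing every edge of $M$ by a copy of $G_i$, so that the transversal layers of $G_i$ are inserted between consecutive layers of $M$, and your induction, distance/no-shortcut formula, and degree bookkeeping are precisely the rigorous filling-in of that sketch. The only caveat is that the extra conditions you carry in the induction (that $\beta_i$ lies in the last layer, and that each edge $(a,b)\in E_M$ satisfies $\PI_{\alpha}(a)<\PI_{\alpha}(b)$ so every copy is glued with its $\alpha_i$-end on the near side) are not literally among the proposition's hypotheses, but they are equally implicit in the paper's argument, which silently assumes the layers of the copies insert consistently between consecutive layers of $M$; so you are not assuming more than the paper does.
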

\begin{proof}
$G_{\ell+1}$ is obtained by replacing every edge in $M$ by a copy of $G_{\ell}$. Under the assumptions, the transversal decomposition of $M$ with respect $\alpha$ (or $\beta$) is modified by adding the transversal layers of $G_{\ell}$ resulting in a transversal decomposition of $G_{\ell+1}$ with respect $\alpha_{\ell+1}$ (or $\beta_{\ell+1}$).
\end{proof}

\section{Conclusions}

{We study a scheme, proposed in \cite{2019arXiv190908668D}, of lifting $1D$ Hamiltonians realizing a PQST from site $0$ to site $N$ to Hamiltonians on diamond fractal graphs that realize a PQST from a set $A$ of sites to a set $B$ of sites. Note here that under this lifting procedure the set $A$ corresponds to site $0$ and the set $B$ is induced by site $N$. Our construction is a vast theoretical generalization of  some ideas discussed in \cite{P-RKay11}, see also \cite{KLY17}. In addition, we give a constructive algorithm on how to find spectra of such Hamiltonians. We describe the spectrum and eigenfunctions, and give two examples of such Hamiltonians on specific graphs. In particular, we demonstrate  that their spectra contain  multiple eigenvalues with localized eigenfunctions. 
In the future this study will suggest what physical characteristics of the system remains unchanged under the lifting procedure and if there is something that changes. It would also be interesting to know if the lifting procedure can be applied to the case of state transfer for multi-qubits discussed in \cite{Apollaro_2015, YB20, Chetcuti_2020}. }

\section*{Acknowledgments}

This research was supported in part by the University of Connecticut Research Excellence Program, by DOE grant DE-SC$0010339$ and by NSF DMS   grants 1613025 and 2008844.  The authors thank the anonymous referees for insightful remarks and suggestions that helped to improve the presentation of the results and pointed towards new directions.

\bibliographystyle{alpha}
\bibliography{Quantum-State-Refs}

\newpage

\appendix

%\section{Hambly-Kumagai diamond graphs: Eigenvalues table for level $3$ \& $4$}
\section*{Diamond graphs eigenvalue tables}
\enlargethispage*{0.5in}
%\begin{table}[h!]

\begin{center}

\begin{tabular}{lclclclclclcl}
\hline
\\
\multicolumn{12}{c}{Eigenvalues of $\hamilton_3$ (Hambly-Kumagai diamond graph of level $3$)} \\
\hline
\\
 j & Eigenvalue $\lambda_j $ & Multiplicity & j & Eigenvalue $\lambda_j $ & Multiplicity \\
\hline
1 & -4 & 1  							& 9 & 1 & 1 \\
2 & $-\sqrt{\sqrt{46} + 9}$ & 1 		& 10 & $\sqrt{9 - \sqrt{46}}$ & 1 \\
3 & -3 & 1 							& 11 & 2 & 1 \\
4 & $-2\sqrt{2}$ & 5   				& 12 & 2 $\sqrt{2}$ & 5 \\
5 & -2 & 1 							& 13 & 3 & 1 \\
6 & $-\sqrt{9 - \sqrt{46}}$ & 1 		&14 & $\sqrt{\sqrt{46} + 9}$ & 1 \\
7 & -1 & 1 							&15 & 4 & 1 \\
8 & 0 & 22 \\
\hline
\\
\\
\\

\end{tabular}

%\caption{Hambly-Kumagai diamond graph of level 3: the table lists the eigenvalues of   $\hamilton_3$.}
%\label{tab:PLlevel2ChainSpectrumNew}
%\end{table}

\begin{tabular*}{\textwidth}{@{\extracolsep{\fill}}lclclclclclclclclclclclcl}
\hline
\\
\multicolumn{12}{c}{Eigenvalues of $\hamilton_4$ (Hambly-Kumagai diamond graph of level $4$)} \\
\hline
\\
  j & $\lambda_j$ & Multipl. & j & $\lambda_j$ & Multipl. & j & $\lambda_j$ & Multipl. & j & $\lambda_j$ & Multipl. \\
  \hline
\\
1 & -8.0 & 1 			       &10 & -4.242640 & 8 			&19 & 1.0 & 1 				&28 & 5.0 & 1 \\
2 & -7.999773 & 1 		&11 & -4.0 & 1 				      &20 & 1.321523 & 1 			&29 & 5.830951 & 8 \\
3 & -7.0 & 1 				&12 & -3.806076 & 1 		       	&21 & 2.0 & 1 				&30 & 5.980884 & 1 \\
4 & -6.996920 & 5 		&13 & -3.0 & 1 			       	&22 & 2.600793 & 5 			&31 & 6.0 & 1 \\
5 & -6.0 & 1 				&14 & -2.600793 & 5 			      &23 & 3.0 & 1 				&32 & 6.996920 & 5 \\
6 & -5.980884 & 1 		&15 & -2.0 & 1 					&24 & 3.806076 & 1 			&33 & 7.0 & 1 \\
7 & -5.830951 & 8 		&16 & -1.321523 & 1 				&25 & 4.0 & 1 				&34 & 7.999773 & 1 \\
8 & -5.0 & 1 				&17 & -1.0 & 1 					&26 & 4.242640 & 8 			&35 & 8.0 & 1 \\
9 & -4.927369 & 5 		&18 & 0.0 & 86 					&27 & 4.927369 & 5 \\
   \hline
   \\
   \\
   \\
\end{tabular*}
 
%\end{center}

%\end{table}

\newpage 

%\section{Lang-Plaut diamond graphs: Eigenvalues table for level $3$}

\enlargethispage*{0.5in}
 
%\begin{center}

\begin{tabular*}{\textwidth}{@{\extracolsep{\fill}}lclclclclclclclclclclclcl}
\hline
\\
\multicolumn{12}{c}{Eigenvalues of $\hamilton_3$ (Lang-Plaut diamond graph of level $3$)} \\
\hline
\\
  j & $\lambda_j$ & Multipl. & j & $\lambda_j$ & Multipl. & j & $\lambda_j$ & Multipl. & j & $\lambda_j$ & Multipl. \\
\hline
\\
1 & -32.0 & 1                           & 28 & -17.272791 & 1       		& 54 & 0.0 & 44 				& 80 & 17.272791 & 1 \\
2 & -31.999214 & 1                & 29 & -17.0 & 1 				&55 & 1.0 & 1 				&81 & 18.0 & 1 \\
4 & -30.989328 & 1                & 30 & -16.0 & 1 				&56 & 2.0 & 1 				&82 & 19.0 & 1 \\
5 & -30.0 & 1                           & 31 & -15.141636 & 2 			&57 & 3.0 & 1 				&83 & 19.721929 & 1 \\
6 & -29.936434 & 1                & 32 & -15.0 & 1 				&58 & 3.047899 & 1 		&84 & 20.0 & 1 \\
7 & -29.086539 & 4                & 33 & -14.656694 & 1 			&59 & 4.0 & 1 				&85 & 20.48943 & 2 \\
8 & -29.0 & 1                           & 34 & -14.0 & 1 				&60 & 5.0 & 1 				&86 & 21.0 & 1 \\
9 & -28.77233 & 1                  & 35 & -13.0 & 1 				&61 & 6.0 & 1 				&87 & 21.98050 & 1 \\
10 & -28.0 & 1                         & 36 & -12.020182 & 4 			& 62 & 6.066367 & 1 		&88 & 22.0 & 1 \\
11 & -27.422949 & 1              & 37 & -12.0 & 1 				&63 & 7.0 & 1 				&89 & 22.214599 & 4 \\
12 & -27.0 & 1 				 & 38 & -11.89895 & 1 			&64 & 8.0 & 1 				&90 & 23.0 & 1 \\
13 & -26.0 & 1 				 & 39 & -11.0 & 1 				&65 & 8.180098 & 2 		&91 & 24.0 & 1 \\
14 & -25.844467 & 1 	       & 40 & -10.0 & 1 				&66 & 9.0 & 1 				&92 & 24.027192 & 1 \\
15 & -25.0 & 1 				 & 41 & -9.026252 & 1 			&67 & 9.026252 & 1 		&93 & 25.0 & 1 \\
16 & -24.027192 & 1 		& 42 & -9.0 & 1 					&68 & 10.0 & 1 				&94 & 25.844467 & 1 \\
17 & -24.0 & 1 				& 43 & -8.180098 & 2 			&69 & 11.0 & 1 				&95 & 26.0 & 1 \\
18 & -23.0 & 1 				& 44 & -8.0 & 1 					&70 & 11.89895 & 1 		&96 & 27.0 & 1 \\
19 & -22.214599 & 4 		& 45 & -7.0 & 1 					&71 & 12.0 & 1 				&97 & 27.422949 & 1 \\
20 & -22.0 & 1 				&46 & -6.066367 & 1 			&72 & 12.020182 & 4 		&98 & 28.0 & 1 \\
21 & -21.98050 & 1 			&47 & -6.0 & 1 					&73 & 13.0 & 1 				&99 & 28.77233 & 1 \\
22 & -21.0 & 1 				&48 & -5.0 & 1 					&74 & 14.0 & 1 				&100 & 29.0 & 1 \\
23 & -20.48943 & 2 			&49 & -4.0 & 1 					&75 & 14.656694 & 1 		&101 & 29.086539 & 4 \\
24 & -20.0 & 1 				&50 & -3.047899 & 1 			&76 & 15.0 & 1 				&102 & 29.936434 & 1 \\
25 & -19.721929 & 1 		&51 & -3.0 & 1 					&77 & 15.141636 & 2 		&103 & 30.0 & 1 \\
26 & -19.0 & 1 				&52 & -2.0 & 1 					&78 & 16.0 & 1 				&104 & 30.989328 & 1 \\
27 & -18.0 & 1  				&53 & -1.0 & 1 					&79 & 17.0 & 1 				&105 & 31.0 & 1 \\
    &   &&&&&&&&       106 & 31.999214 & 1 \\
    &   &&&&&&&&       107 & 32.0 & 1 \\
   \hline
   \\
   \\
   \\
\end{tabular*}

\end{center}

\end{document}